\newif\ifnazi
\newcommand{\Card}[1]{\ifnazi\operatorname{Card}#1\else{\##1}\fi}
\newcommand{\rk}{\operatorname{rk}}
\newcommand{\Rat}{\operatorname{Rat}}
\newcommand{\Rec}{\operatorname{Rec}}
\newcommand{\cl}{\operatorname{cl}}
\newcommand{\one}{{\mathbb 1}}
\newcommand\GL{{\operatorname{GL}}}
\newcommand{\mapright}[1]{\smash{\mathop{\longrightarrow}\limits^{#1}}}
\newcommand{\mapleft}[1]{\smash{\mathop{\longleftarrow}\limits^{#1}}}
\newcommand{\vlongrightarrow}{\relbar\joinrel\longrightarrow}
\newcommand{\vvlongrightarrow}{\relbar\joinrel\vlongrightarrow}
\newcommand{\vvvlongrightarrow}{\relbar\joinrel\vvlongrightarrow}
\newcommand{\vvvvlongrightarrow}{\relbar\joinrel\vvvlongrightarrow}
\newcommand{\longmapright}[1]{\smash{\mathop{\vlongrightarrow}\limits^{#1}}}
\newcommand{\vlongmapright}[1]{\smash{\mathop{\vvlongrightarrow}\limits^{#1}}}
\newcommand{\vvvlongmapright}[1]{\smash{\mathop{\vvvvlongrightarrow}\limits^{#1}}}
\newcommand\chapterref[1]{23.\ref{#1}}
\newenvironment{fsa}{\begin{tikzpicture}[->,>=stealth',
    shorten >=1pt,auto,node distance=3cm,
    initial text=,
semithick]}{\end{tikzpicture}}
\begin{document}

\title{Rational subsets of groups}

\author{L.~Bartholdi$^{1}$ \and P.~V.~Silva$^{2,}$\thanks{The second
    author acknowledges support by Project ASA (PTDC/MAT/65481/2006)
    and C.M.U.P., financed by F.C.T. (Portugal) through the programmes
    POCTI and POSI, with national and E.U. structural funds.}}

\date{2010-12-07}

\markboth{L.~Bartholdi, P.~V.~Silva}{Rational subsets of groups}

\address{ $^1$Mathematisches Institut\\
  Georg-August Universit\"at zu G\"ottingen\\
  Bunsenstra\ss e 3--5\\
  D-37073 G\"ottingen, Germany\\
  email:\,\url{laurent.bartholdi@gmail.com}
  \\[4pt]
  $^2$Centro de Matem\'{a}tica, Faculdade de Ci\^{e}ncias\\
  Universidade do Porto\\
  R. Campo Alegre 687\\
  4169-007 Porto, Portugal\\
  email:\,\url{pvsilva@fc.up.pt} }

\maketitle\label{chapterBS1}

\vspace{-8mm}

\begin{classification}
20F10, 20E05, 68Q45, 68Q70
\end{classification}

\begin{keywords}
Free groups, inverse automata, Stallings automata, rational subsets.
\end{keywords}

\vspace{-1cm}
\localtableofcontents

Over the years, finite automata have been used effectively in the
theory of infinite groups to represent rational subsets. This includes
the important particular case of finitely generated subgroups (and the
beautiful theory of Stallings automata for the free group case), but
goes far beyond that: certain inductive procedures need a more general
setting than mere subgroups, and rational subsets constitute the
natural generalization.  The connections between automata theory and
group theory are rich and deep, and many are portrayed in Sims'
book~\cite{MR1267733}.

This chapter is divided into three parts: in Section~\ref{BS1:fg} we introduce
basic concepts, terminology and notation for finitely generated
groups, devoting special attention to free groups. These will also be
used in Chapter~\ref{chapterBS2}.

Section~\ref{BS1:inverse} describes the use of finite inverse automata to
study finitely generated subgroups of free groups. The automaton
recognizes elements of a subgroup, represented as words in the ambient
free group.

Section~\ref{BS1:rational} considers, more generally, rational subsets of
groups, when good closure and decidability properties of these subsets
are satisfied.

The authors are grateful to Stuart Margolis, Benjamin Steinberg and
Pascal Weil for their remarks on a preliminary version of this text.


\section{Finitely generated groups}\label{BS1:fg}

Let $G$ be a group.  Given $A \subseteq G$, let $\langle A \rangle =
(A \cup A^{-1})^*$ denote the subgroup of $G$ \emph{generated} by
$A$. We say that $H \leq G$ is \emph{finitely
  generated}\index{subgroup!finitely~generated} and write $H
\leq_{f.g.} G$ if $H = \langle A \rangle$ for some finite subset $A$
of $H$.

Given $H \leq G$, we denote by $[G:H]$ the
\emph{index}\index{subgroup!index~of~a} of $H$ in $G$, that is, the
number of right cosets $Hg$ for all $g \in G$; or, equivalently, the
number of left cosets. If $[G:H]$ is finite, we write $H \leq_{f.i.}
G$. It is well known that every finite index subgroup of a finitely
generated group is finitely generated.

We denote by $\one$ the identity of $G$.  An element $g \in G$ has
\emph{finite order}\index{finite~order~element} if $\langle g \rangle$
is finite. Elements $g,h \in G$ are \emph{conjugate}\index{conjugate~elements} if $h = x^{-1}gx$ for some $x \in G$. We use the notation
$g^h = h^{-1}gh$ and $[g,h] = g^{-1}g^h$ to denote, respectively,
conjugates and commutators\index{commutator~in~groups}.

Given an alphabet $A$, we denote by $A^{-1}$ a set of \emph{formal
  inverses} of $A$,
and write $\widetilde{A} = A\cup A^{-1}$. We say that $\widetilde{A}$
is an \emph{involutive alphabet}\index{alphabet!involutive}.
We extend $^{-1}:A \to A^{-1}: a \mapsto a^{-1}$ to an involution on
$\widetilde{A}^*$ through
\begin{displaymath}
(a^{-1})^{-1} = a,\quad (uv)^{-1} = v^{-1}u^{-1}\quad (a \in A,\;
u,v \in \widetilde{A}^*)\, .
\end{displaymath}

\index{problem,~decision|(} If $G = \langle A \rangle$, we have a
canonical epimorphism $\rho:\widetilde{A}^* \twoheadrightarrow G$,
mapping $a^{\pm1}\in\widetilde A$ to $a^{\pm1}\in G$. We present next
some classical decidability problems:

\begin{definition}
\label{BS1:decproblems}
  Let $G=\langle A\rangle$ be a finitely generated group.
  \begin{description}
  \item[word problem:]\index{word~problem}\index{problem,~decision!word}
    is there an algorithm that, upon receiving as input a word $u\in
    \widetilde{A}^*$, determines whether or not $\rho(u) =\one$?
  \item[conjugacy problem:]\index{conjugacy~problem}\index{problem,~decision!conjugacy}

    is there an algorithm that, upon receiving as input words $u,v \in
    \widetilde{A}^*$, determines whether or not $\rho(u)$ and
    $\rho(v)$ are conjugate in $G$?
  \item[membership problem for ${\cal{K}} \subseteq 2^G$:]\index{membership~problem}\index{problem,~decision!membership}
    is there for every $X\in\cal K$ an algorithm that, upon receiving
    as input a word $u \in \widetilde{A}^*$, determines whether or not
    $\rho(u) \in X$?
  \item[generalized word problem:]\index{generalized~word~problem}\index{problem,~decision!word!generalized}
    is the membership problem for the class of finitely generated
    subgroups of $G$ solvable?
  \item[order problem:]\index{order~problem}\index{problem,~decision!order}
    is there an algorithm that, upon receiving as input a word $u \in
    \widetilde{A}^*$, determines whether $\rho(u)$ has finite or
    infinite order?
  \item[isomorphism problem for a class ${\cal{G}}$ of groups:]\index{isomorphism~problem}\index{problem,~decision!isomorphism}  
    is there an algorithm that, upon receiving as input a description
    of groups $G,H \in {\cal{G}}$, decides whether or not $G\cong H$?

    Typically, $\cal G$ may be a subclass of finitely presented
    groups\index{group!finitely~presented} (given by their
    presentation), or automata
    groups\index{automata~group}\index{group!automata} (see
    Chapter~\ref{chapterBS2}) given by automata.
  \end{description}
\end{definition}

We can also require complexity bounds on the algorithms; more
precisely, we may ask with which complexity bound an answer to the
problem may be obtained, and also with which complexity bound a
witness (a normal form for the word problem, an element conjugating
$\rho(u)$ to $\rho(v)$ in case they are conjugate, an expression of
$u$ in the generators of $X$ in the generalized word problem) may be
constructed.  \index{problem,~decision|)}

\subsection{Free groups}
We recall that an equivalence relation $\sim$ on a semigroup $S$ is
a \emph{congruence}\index{congruence} if $a\sim b$ implies
$ac\sim bc$ and $ca\sim cb$ for all $a,b,c \in S$.

\begin{definition}
\label{BS1:freegroup}
Given an alphabet $A$, let $\sim$ denote the congruence on
$\widetilde{A}^*$ generated by the relation
\begin{equation}
\label{BS1:relfg}
\{(aa^{-1},1)\mid a \in \widetilde{A}\}\, .
\end{equation}
The quotient $F_A = \widetilde{A}^*/{\sim}$ is the \emph{free group
  on}\index{free~group}\index{group!free}
$A$. We denote by $\theta: \widetilde{A}^* \to F_A$ the canonical
morphism $u \mapsto [u]_{\sim}$.
\end{definition}

Free groups admit the following universal property: for every map
$f:A\to G$, there is a unique group morphism $F_A\to G$ that extends
$f$.

Alternatively, we can view~\eqref{BS1:relfg} as a {\em
  confluent}\index{rewriting~system!confluent}
length-reducing rewriting system on $\widetilde{A}^*$, where each word
$w \in \widetilde{A}^*$ can be transformed into a unique
\emph{reduced}\index{word!reduced} word $\overline{w}$
with no factor of the form $aa^{-1}$, see~\cite{Book&Otto:1993}.  As a
consequence, the equivalence
\begin{displaymath}
u\sim v \hspace{.5cm} \Leftrightarrow \hspace{.5cm}\overline{u} = \overline{v}
\hspace{2cm} (u,v \in \widetilde{A}^*)
\end{displaymath}
solves the word problem for $F_A$.

We shall use the notation
$R_A = \overline{\widetilde{A}^*}$. It is well known that $F_A$ is
isomorphic to $R_A$ under the binary operation
\begin{displaymath}
u\star v = \overline{uv} \hspace{1cm} (u,v \in R_A)\, .
\end{displaymath}
We recall that the \emph{length} $|g|$ of $g\in F_A$ is the
length of the reduced form of $g$, also denoted by $\overline{g}$.

The letters of $A$ provide a natural \emph{basis}\index{free~group!basis} for $F_A$: they generate $F_A$ and satisfy no
nontrivial relations, that is, all reduced words on these generators
represent distinct elements of $F_A$. A group is free if and only if
it has a basis.

Throughout this chapter, we assume $A$ to be a finite alphabet. It is
well known that free groups $F_A$ and $F_B$ are isomorphic if and only
if $\Card A = \Card B$. This leads to the concept of
\emph{rank}\index{free~group!rank} of a free group $F$: the
\emph{cardinality} of a basis of $F$, denoted by $\rk F$. It is common to
use the notation $F_n$ to denote a free group of rank $n$.

We recall that a reduced word $u$ is \emph{cyclically
  reduced}\index{word!cyclically~reduced} if $uu$
is also reduced. Any reduced word $u\in R_A$ admits a unique
decomposition of the form $u = vwv^{-1}$ with $w$ cyclically
reduced. A solution for the conjugacy problem follows easily from
this: first reduce the words cyclically; then two cyclically reduced
words in $R_A$ are conjugate if and only if they are cyclic
permutations of each other. On the other hand, the order problem
admits a trivial solution: only the identity has finite
order. Finally, the generalized word problem shall be discussed in the
following section.

\section{Inverse automata and Stallings' construction}\label{BS1:inverse}

The study of finitely generated subgroups of free groups entered a new
era in the early eighties when \Stallings\ made explicit and effective
a construction~\cite{MR695906} that can be traced back to the early
part of the twentieth century in \Schreier's coset graphs
(see~\cite{MR1267733} and~\S\ref{BS2:sec:1}) and to \Serre's
work~\cite{MR0476875}. Stallings' seminal paper was built over
\emph{immersions of finite graphs}, but the alternative approach using
finite inverse automata became much more popular over the years; for
more on their link, see~\cite{MR1214007}. An extensive survey has been
written by \Kapovich\ and \Miasnikov~\cite{MR1882114}.

Stallings' construction for $H \leq_{f.g.} F_A$
consists in taking a finite set of generators for $H$ in reduced form,
building the so-called \emph{flower automaton} and then proceeding to
make this automaton deterministic through the operation known as
\emph{Stallings foldings}. This is clearly a terminating procedure,
but the key fact is that the construction is independent from both the
given finite generating set and the chosen folding sequence. A short
simple automata-theoretic proof of this claim will be given. The finite inverse
automaton ${\cal{S}}(H)$ thus obtained is usually called the
\emph{Stallings automaton} of $H$. Over the years, Stallings automata
became the standard representation for finitely generated subgroups of
free groups
and are involved in many of the algorithmic results presently
obtained.

Several of these algorithms are implemented in computer software, see
e.g.\ \textsc{CRAG}~\cite{CRAG}, or the packages \textsc{Automata} and
\textsc{FGA} in~\textsc{GAP}~\cite{TheGAPGroup:2004}.

\subsection{Inverse automata}

An automaton $\cal{A}$ over an involutive alphabet $\widetilde{A}$ is
\emph{involutive}\index{automaton!involutive} if, whenever $(p,a,q)$
is an edge of $\cal{A}$, so
is $(q,a^{-1},p)$.  Therefore it suffices to depict just the
\emph{positively labelled} edges (having label in $A$) in their
graphical representation.

\begin{definition}
\label{BS1:inverseautomaton}
An involutive automaton is \emph{inverse}\index{automaton!inverse} if
it is deterministic, trim and has a
single final state.
\end{definition}

If the latter happens to be the initial state, it is called the
\emph{basepoint}. It follows easily from the computation of the
\Nerode\ equivalence (see~\S10.2) that every inverse automaton is a
minimal automaton.

Finite inverse automata capture the idea of an action (of a finite
\emph{inverse monoid}, their \emph{transition
  monoid}\index{transition~monoid}\index{monoid!transition}) on a
finite set (the vertex set) through partial bijections. We recall that
a monoid $M$ is inverse\index{monoid!inverse} if, for every $x \in M$,
there exists a unique $y \in M$ such that $xyx = x$ and $y=yxy$; then
$M$ acts by partial bijections on itself.

\noindent The next result is easily proven, but is quite useful.

\begin{proposition}
\label{BS1:invaut}
Let $\cal{A}$ be an inverse automaton and let $p
\vvvlongmapright{uvv^{-1}w} q$ be a path in $\cal{A}$. Then there
exists also a path $p \mapright{uw} q$ in $\cal{A}$.
\end{proposition}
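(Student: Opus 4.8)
The plan is to decompose the path along the factorization $uvv^{-1}w$ and then to excise the central $vv^{-1}$ loop. Write the given path as
\[
p \mapright{u} r \mapright{v} s \mapright{v^{-1}} t \mapright{w} q,
\]
where $r$, $s$, $t$ denote the states reached after reading the prefixes $u$, $uv$, $uvv^{-1}$, respectively. If I can show that $t=r$, then the prefix $p\mapright{u}r$ and the suffix $r\mapright{w}q$ concatenate to a path $p\mapright{uw}q$, as desired. Note that trimness and the uniqueness of the final state play no role here; only involutivity and determinism are used.

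To prove $t=r$, write $v=a_1a_2\cdots a_n$ with each $a_i\in\widetilde{A}$, and let
\[
r=r_0\mapright{a_1}r_1\mapright{a_2}\cdots\mapright{a_n}r_n=s
\]
be the $v$-labelled segment of the path. Because $\cal{A}$ is involutive, each edge $(r_{i-1},a_i,r_i)$ is accompanied by its reverse $(r_i,a_i^{-1},r_{i-1})$. Reading these reverse edges in the opposite order yields a path
\[
s=r_n\mapright{a_n^{-1}}r_{n-1}\mapright{a_{n-1}^{-1}}\cdots\mapright{a_1^{-1}}r_0=r,
\]
which is labelled by $v^{-1}$ and terminates at $r$.

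Finally I would invoke determinism. Since $\cal{A}$ is deterministic, there is at most one path labelled $v^{-1}$ starting at the state $s$; the original path provides one such path, ending at $t$, while the construction above provides another, ending at $r$. Hence $t=r$, and the proof is complete. The only point requiring care, and thus the main obstacle, is that involutivity by itself merely guarantees the \emph{existence} of a reverse path back to $r$; it is determinism that forces this reverse path to coincide with the $v^{-1}$-segment actually taken, thereby identifying the two endpoints.
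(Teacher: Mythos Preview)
Your proof is correct; the paper itself omits the argument entirely, remarking only that the result ``is easily proven, but is quite useful.'' Your decomposition and the appeal to involutivity (for the existence of the reverse path) together with determinism (to force $t=r$) is exactly the intended elementary argument.
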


Another important property relates languages to morphisms. For us, a
\emph{morphism}\index{morphism of deterministic automata} between
deterministic automata $\cal{A}$ and $\cal{A}'$ is a mapping $\varphi$
between their respective vertex sets which preserves initial vertices,
final vertices and edges, in the sense that
$(\varphi(p),a,\varphi(q))$ is an edge of $\cal{A}'$ whenever
$(p,a,q)$ is an edge of $\cal{A}$.

\begin{proposition}
\label{BS1:invmor}
Given inverse automata $\cal{A}$ and $\cal{A}'$, then $L({\cal{A}})
\subseteq L({\cal{A}}')$ if and only if there exists a morphism
$\varphi: \cal{A} \to \cal{A}'$. Moreover, such a morphism is
unique.\index{morphism of deterministic automata} 
\end{proposition}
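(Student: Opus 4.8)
The plan is to prove the two implications separately, drawing on determinism for the forward direction and on the involutive structure for the converse. Throughout, write $i,t$ for the initial and final vertices of $\mathcal{A}$ and $i',t'$ for those of $\mathcal{A}'$. Suppose first that a morphism $\varphi\colon\mathcal{A}\to\mathcal{A}'$ exists. Given $w\in L(\mathcal{A})$, fix an accepting path from $i$ to $t$ labelled $w$; applying $\varphi$ to each vertex along it and invoking edge preservation yields a path labelled $w$ from $\varphi(i)$ to $\varphi(t)$ in $\mathcal{A}'$. Since $\varphi$ preserves initial and final vertices, this path runs from $i'$ to $t'$, so $w\in L(\mathcal{A}')$, and hence $L(\mathcal{A})\subseteq L(\mathcal{A}')$.

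For the converse, assume $L(\mathcal{A})\subseteq L(\mathcal{A}')$ and construct $\varphi$ by reading access words. Since $\mathcal{A}$ is trim, every vertex $p$ is reached from $i$ by some word $u$ and reaches $t$ by some word $v$; then $uv\in L(\mathcal{A})\subseteq L(\mathcal{A}')$, so in $\mathcal{A}'$ the word $u$ does label a path starting at $i'$, and by determinism it ends at a single vertex, which I take to be $\varphi(p)$.

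The main obstacle is to show that this is well defined, that is, independent of the chosen access word. Suppose $u_1,u_2$ both label paths from $i$ to $p$ in $\mathcal{A}$ and read, from $i'$, to vertices $q_1,q_2$ of $\mathcal{A}'$. Choosing $v$ with $p\mapright{v}t$ in $\mathcal{A}$, both $u_1v$ and $u_2v$ lie in $L(\mathcal{A}')$, so $\mathcal{A}'$ contains paths $q_1\mapright{v}t'$ and $q_2\mapright{v}t'$. Here the involutive structure is decisive: these reverse to paths $t'\mapright{v^{-1}}q_1$ and $t'\mapright{v^{-1}}q_2$, and since $\mathcal{A}'$ is deterministic the $v^{-1}$-labelled path leaving $t'$ is unique, forcing $q_1=q_2$.

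With $\varphi$ well defined, verifying that it is a morphism is then straightforward: taking the empty access word gives $\varphi(i)=i'$; any access word for $t$ is accepted, hence lands at the unique final vertex, so $\varphi(t)=t'$; and for an edge $(p,a,p')$ of $\mathcal{A}$ one uses $u_pa$ as an access word for $p'$, noting that reading $a$ from $\varphi(p)$ is defined (because $p'$ is co-accessible, so some $u_pav'$ is accepted) and ends at $\varphi(p')$, which exhibits the edge $(\varphi(p),a,\varphi(p'))$ in $\mathcal{A}'$. Uniqueness is immediate: any morphism $\psi$ sends the access path of $p$ to a $u_p$-labelled path from $i'$, which determinism pins down to end at $\varphi(p)$, so $\psi=\varphi$.
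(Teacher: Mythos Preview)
Your proof is correct and follows essentially the same approach as the paper's: define $\varphi$ via access words using trimness, and establish well-definedness via co-determinism of inverse automata. The paper compresses your well-definedness argument into the phrase ``since $\mathcal{A}'$ is inverse'', whereas you spell out explicitly that the involutive structure lets one reverse the $v$-labelled paths and then invoke determinism from $t'$; this is exactly the content behind the paper's appeal to inverseness.
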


\begin{proof}
$(\Rightarrow)$: Given a vertex $q$ of
$\cal{A}$, take a successful path
\begin{displaymath}
\to q_0 \mapright{u} q \mapright{v} t \to
\end{displaymath}
in $\cal{A}$, for some $u,v\in \widetilde A^*$. Since $L({\cal{A}}) \subseteq
L({\cal{A}}')$, there exists a successful path
\begin{displaymath}
\to q'_0 \mapright{u} q' \mapright{v} t' \to
\end{displaymath}
in $\cal{A}'$. We take $\varphi(q) = q'$.

To show that $\varphi$ is well defined, suppose that
\begin{displaymath}
\to q_0 \mapright{u'} q \mapright{v'} t \to
\end{displaymath}
is an alternative successful path in $\cal{A}$. Since $u'v \in
L({\cal{A}}) \subseteq
L({\cal{A}}')$, there exists a successful path
\begin{displaymath}
\to q'_0 \mapright{u'} q'' \mapright{v} t' \to
\end{displaymath}
in $\cal{A}'$ and it follows that $q' = q''$ since ${\cal{A}}'$ is
inverse. Thus $\varphi$ is well defined.

It is now routine to check that
$\varphi$ is a morphism from $\cal{A}$ to $\cal{A}'$ and
that it is unique.

$(\Leftarrow)$: Immediate from the definition of morphism.
\end{proof}

\subsection{Stallings' construction}\index{Stallings~construction|(}

Let $X$ be a finite subset of $R_A$. We build an involutive automaton
${\cal{F}}(X)$ by fixing a basepoint $q_0$ and
gluing to it a \emph{petal} labelled by every word in $X$ as follows:
if $x = a_1\dots a_k \in X$, with
$a_i \in \widetilde{A}$, the petal consists of a closed path of the form
\begin{displaymath}
q_0 \mapright{a_1} \bullet \mapright{a_2} \cdots \mapright{a_k} q_0
\end{displaymath}
and the respective inverse edges. All such intermediate vertices
$\bullet$ are assumed to be distinct in the automaton. For obvious
reasons, ${\cal{F}}(X)$ is called the \emph{flower
  automaton}\index{automaton!flower} of $X$.

The automaton ${\cal{F}}(X)$ is almost an inverse automaton -- except
that it need 
not be deterministic. We can fix it by performing a sequence of
so-called \emph{Stallings foldings}. Assume that ${\cal{A}}$ is a trim
involutive automaton with a basepoint, possessing two distinct edges
of the form
\begin{equation}
\label{BS1:fold}
p \mapright{a} q, \quad p \mapright{a} r
\end{equation}
for $a \in \widetilde{A}$.
The \emph{folding} is performed by identifying these two edges, as well
as the two respective inverse edges.
In particular, the vertices $q$ and $r$ are also identified (if they
were distinct).

The number of edges is certain to decrease through foldings. Therefore,
if we perform enough of them,
we are sure to turn ${\cal{F}}(X)$ into a finite inverse automaton.
\begin{definition}
\label{BS1:stallingsautomaton}
The \emph{Stallings automaton}\index{automaton!Stallings}\index{Stallings~automaton} of $X$ is the finite inverse automaton
${\cal{S}}(X)$ obtained through folding ${\cal F}(X)$.
\end{definition}

We shall see that
${\cal{S}}(X)$ depends only on the finitely generated subgroup
$\langle X \rangle$ of $F_A$ generated by $X$, being in particular
independent from the choice
of foldings taken to reach it.

Since inverse automata are minimal, it suffices to characterize
$L({\cal{S}}(X))$ in terms of $H$ to prove
uniqueness (up to isomorphism):

\begin{proposition}
\label{BS1:uniqstal}
Fix $H \leq_{f.g.} F_A$ and let $X \subseteq R_A$ be a finite
generating set for $H$.
Then
\begin{multline*}
L({\cal{S}}(X)) = \bigcap\{L \subseteq \widetilde{A}^* \mid L
\mbox{ is recognized by a finite inverse
automaton}\\
\mbox{with a basepoint and }
\overline{H} \subseteq L\} \, .
\end{multline*}
\end{proposition}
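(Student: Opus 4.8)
The plan is to prove the two inclusions separately, by showing that $\mathcal{S}(X)$ is itself one of the automata whose languages are being intersected (which gives one inclusion) and that it maps into every such automaton (which gives the other, through Proposition~\ref{BS1:invmor}). Throughout, the guiding principle is that $\mathcal{S}(X)$ should be the \emph{smallest} finite inverse automaton with a basepoint whose language contains $\overline{H}$.

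First I would check that $\overline{H}\subseteq L(\mathcal{S}(X))$, so that $L(\mathcal{S}(X))$ belongs to the intersected family and therefore contains the intersection. Since the basepoint of $\mathcal{F}(X)$ is simultaneously initial and final, each $x\in X$ labels a closed path there, whence $X\subseteq L(\mathcal{F}(X))$; as every folding is a morphism of automata, it sends accepting paths to accepting paths, so the accepted language can only grow and $X\subseteq L(\mathcal{S}(X))$. Because $\mathcal{S}(X)$ is inverse with a basepoint, its language is closed under products and, via the involution, under inversion; hence any word $x_1^{\epsilon_1}\cdots x_m^{\epsilon_m}$ with $x_i\in X$ lies in $L(\mathcal{S}(X))$. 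Applying Proposition~\ref{BS1:invaut} to delete the $aa^{-1}$ factors shows that its reduced form is accepted as well. Every element of $H$ arises in this way, so $\overline{H}\subseteq L(\mathcal{S}(X))$.

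For the reverse inclusion, let $\mathcal{A}'$ be an arbitrary finite inverse automaton with a basepoint satisfying $\overline{H}\subseteq L(\mathcal{A}')$; by Proposition~\ref{BS1:invmor} it suffices to produce a morphism $\mathcal{S}(X)\to\mathcal{A}'$. I would first build one on the flower automaton. Since each $x\in X$ is reduced and lies in $H$, we have $x\in\overline{H}\subseteq L(\mathcal{A}')$, so there is a closed path at the basepoint of $\mathcal{A}'$ spelling $x$; sending the basepoint to the basepoint and each intermediate petal vertex to the corresponding state along this path defines a morphism $\psi:\mathcal{F}(X)\to\mathcal{A}'$. It is well defined because $\mathcal{A}'$ is deterministic, so the path reading $x$ from the basepoint is unique, and because distinct petals share only the basepoint.

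The key step, and the main obstacle, is to show that $\psi$ descends along the entire folding sequence. A single folding identifies the endpoints $q,r$ of two edges $p\mapright{a}q$ and $p\mapright{a}r$; applying $\psi$ produces edges $\psi(p)\mapright{a}\psi(q)$ and $\psi(p)\mapright{a}\psi(r)$ in $\mathcal{A}'$, whence $\psi(q)=\psi(r)$ precisely because $\mathcal{A}'$ is deterministic. Thus $\psi$ respects each identification and factors through the quotient, with the target still $\mathcal{A}'$; by induction on the length of the folding sequence we obtain a morphism $\mathcal{S}(X)\to\mathcal{A}'$. Proposition~\ref{BS1:invmor} then yields $L(\mathcal{S}(X))\subseteq L(\mathcal{A}')$ for every member of the family, and combined with the previous paragraph this gives the asserted equality. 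The only delicate points requiring care are that the language genuinely grows under each folding and that it is determinism of the target that licenses the factorization at every fold.
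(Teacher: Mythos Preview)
Your proof is correct and follows the same overall architecture as the paper's: verify $\overline{H}\subseteq L(\mathcal{S}(X))$ for one inclusion, and for the other show that the flower automaton sits below any admissible $\mathcal{A}'$ and that this relationship survives the folding sequence.

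The one genuine difference is in how the folding step of the $(\subseteq)$ direction is handled. You track an explicit morphism $\psi$ through the sequence, using determinism of the target $\mathcal{A}'$ to see that each fold is already collapsed by $\psi$, so $\psi$ factors through the quotient; at the end you read off the language inclusion from the existence of a morphism $\mathcal{S}(X)\to\mathcal{A}'$. The paper instead stays at the level of languages: assuming $L(\mathcal{B})\subseteq L(\mathcal{A}')$ and a single fold $\mathcal{B}\to\mathcal{B}'$, it lifts any successful path in $\mathcal{B}'$ back to one in $\mathcal{B}$ by inserting factors $a^{-1}a$, and then invokes Proposition~\ref{BS1:invaut} on the inverse automaton $\mathcal{A}'$ to conclude the original word is still accepted. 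Your morphism-factoring argument is a bit more categorical and avoids the explicit path manipulation; the paper's lifting argument has the side benefit of making the connection with reduction visible, and indeed is reused verbatim to establish $\overline{L(\mathcal{B}')}=\overline{L(\mathcal{B})}$ in the proof of Proposition~\ref{BS1:gwp}.
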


\begin{proof}



$(\supseteq)$: Clearly, ${\cal{S}}(X)$ is a finite inverse
automaton with a basepoint. Since $X\cup X^{-1} \subseteq L({\cal{F}}(X))
\subseteq L({\cal{S}}(X))$, it follows easily
from Proposition~\ref{BS1:invaut} that
\begin{equation}
\label{BS1:uniqstal1}
\overline{H} \subseteq L({\cal{S}}(X))\, .
\end{equation}

$(\subseteq)$: Let $L \subseteq \widetilde{A}^*$ be recognized by a
finite inverse
automaton $\cal{A}$ with a basepoint, with $\overline{H} \subseteq
L$. Since $X \subseteq \overline{H}$, we have an automaton morphism
from ${\cal{F}}(X)$ to $\cal{A}$, hence $L({\cal{F}}(X)) \subseteq
L$. To prove that $L({\cal{S}}(X)) \subseteq L$,
it suffices to show that inclusion in $L$ is preserved through foldings.

Indeed, assume that $L({\cal{B}}) \subseteq L$ and $\cal{B}'$ is
obtained from $\cal{B}$ by folding the two edges
in~\eqref{BS1:fold}. It is immediate that every successful path $q_0
\mapright{u} t$ in $\cal{B}'$ can be lifted to a
successful path $q_0 \mapright{v} t$ in $\cal{B}$ by successively
inserting the word $a^{-1}a$ into $u$. Now
$v \in L = L(\cal{A})$ implies $u \in L$ in view of Proposition~\ref{BS1:invaut}.
\end{proof}

Now, given $H\leq F_A$ finitely generated, we take a finite set $X$
of generators. Without loss of generality, we may assume that $X$ consists
of reduced words, and
we may define ${\cal{S}}(H) = {\cal{S}}(X)$ to be the \emph{Stallings
  automaton} of $H$.

\begin{example}
\label{BS1:exstal}
Stallings' construction for $X = \{ a^{-1}ba, ba^2 \}$, where the next
edges to be identified are depicted by dotted lines, is
\end{example}
$$\begin{fsa}
\node[state] (q_0) at (0,1) {$q_0$};
\node[state] (E) at (2,0) {};
\node[state] (NE) at (2,2) {};
\node[state] (W) at (-2,0) {};
\node[state] (NW) at (-2,2) {};
\node at (-4,1) {${\cal{F}}(X) =$};
\path (q_0) edge node[below] {$b$} (E);
\path (W) edge node[below] {$a$} (q_0) edge node {$b$} (NW);
\path (NW) edge[dotted] node {$a$} (q_0);
\path (NE) edge[dotted] node[above] {$a$} (q_0);
\path (E) edge node[right] {$a$} (NE);
\end{fsa}$$
$$\begin{fsa}
\node[state] (q_0) at (-3,0) {$q_0$};
\node[state] (E) at (-1,0) {};
\node[state] (N) at (-3,2) {};
\node[state] (NE) at (-1,2) {};
\node[state] (p_0) at (0.8,0) {$q_0$};
\node[state] (EE) at (3.2,0) {};
\node[state] (NNEE) at (2,2) {};
\node at (5,1) {$= {\cal{S}}(X)$};
\path (q_0) edge node[below] {$b$} (E);
\path (N) edge node {$b$} (NE) edge[dotted] node[left] {$a$} (q_0);
\path (NE) edge[dotted] node[above] {$a$} (q_0);
\path (E) edge node[right] {$a$} (NE);
\path (p_0) edge node[below] {$b$} (EE);
\path (NNEE) edge node[above] {$a$} (p_0) edge[loop right] node {$b$} ();
\path (EE) edge node[right] {$a$} (NNEE);
\end{fsa}$$

A simple, yet important example is given by applying the construction
to $F_n$ itself, when we obtain the so-called \emph{bouquet} of
$n$ circles:
$$\begin{fsa}
\node[state] (1) at (-3,0) {$q_0$};
\node[state] (2) at (0,0) {$q_0$};
\node[state] (3) at (3,0) {$q_0$};
\node at (-3,-1) {${\cal{S}}(F_1)$};
\node at (0,-1) {${\cal{S}}(F_2)$};
\node at (3,-1) {${\cal{S}}(F_3)$};
\path (1) edge[loop left] node {$a$} ();
\path (2) edge[loop left] node {$a$} () edge[loop right] node {$b$} ();
\path (3) edge[loop left] node {$a$} () edge[loop right] node {$b$} ()
edge[loop above] node {$c$} ();
\end{fsa}$$

In terms of complexity, the best known algorithm for the construction
of ${\cal{S}}(X)$ is due to \Touikan~\cite{Touikan:2006}. Its
time complexity is $O(n\log^*n)$, where $n$ is the sum of the lengths of
the elements of $X$.

\subsection{Basic applications}
The most fundamental application of Stallings' construction is an
elegant and efficient solution to the generalized word problem:
\begin{theorem}
\label{BS1:thm:wordpb}
  The generalized word problem\index{free~group!generalized~word~problem} in $F_A$ is solvable.
\end{theorem}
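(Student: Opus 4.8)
The plan is to use the Stallings automaton $\mathcal{S}(H)$ as a decision procedure. Given $H \leq_{f.g.} F_A$ with finite generating set $X \subseteq R_A$, I would first observe that $\mathcal{S}(H) = \mathcal{S}(X)$ can be effectively constructed: the flower automaton $\mathcal{F}(X)$ is built directly from $X$, and foldings strictly decrease the edge count, so the folding process terminates in finitely many steps and produces the finite inverse automaton $\mathcal{S}(H)$. This construction is algorithmic, so $\mathcal{S}(H)$ is available as concrete data.

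\smallskip

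The next step is to relate membership in $H$ to acceptance by $\mathcal{S}(H)$. I would invoke Proposition~\ref{BS1:uniqstal}, which characterizes $L(\mathcal{S}(X))$ as the intersection of all languages recognized by finite inverse automata containing $\overline{H}$. The key claim to establish is that for a reduced word $u \in R_A$, we have $\rho(u) \in H$ if and only if $u \in L(\mathcal{S}(H))$. One inclusion is essentially~\eqref{BS1:uniqstal1}, giving $\overline{H} \subseteq L(\mathcal{S}(H))$. For the converse, I would argue that $L(\mathcal{S}(H)) \cap R_A = \overline{H}$: since $\mathcal{S}(H)$ is inverse, any accepted word reduces to a word labelling a closed path at the basepoint, and such reduced closed paths represent exactly the elements of $H$ by the minimality characterization. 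This is the conceptual heart of the argument and the step I expect to require the most care.

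\smallskip

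With the characterization in hand, the decision procedure is immediate. Given an input word $w \in \widetilde{A}^*$, I would first compute its reduced form $\overline{w} \in R_A$ using the confluent rewriting system, which solves the word problem for $F_A$. Then, since $\mathcal{S}(H)$ is deterministic with a single basepoint, I would trace the unique path labelled by $\overline{w}$ starting at $q_0$; this is a finite, effective computation. The word $\rho(w) = \rho(\overline{w})$ lies in $H$ precisely when this path returns to the basepoint, i.e.\ when $\overline{w} \in L(\mathcal{S}(H))$. Reading the path in a deterministic automaton with finitely many states halts after $|\overline{w}|$ steps, so the whole procedure terminates and decides membership in $H$. Since $H$ was an arbitrary finitely generated subgroup, this solves the generalized word problem in $F_A$.
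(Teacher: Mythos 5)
Your overall strategy---effectively construct $\mathcal{S}(H)$ from the flower automaton by foldings, reduce the input word, and trace it through the deterministic automaton---is exactly the paper's, which routes the theorem through Proposition~\ref{BS1:gwp} ($u \in H$ if and only if $\overline{u} \in L(\mathcal{S}(H))$). The construction step and the final decision procedure are fine, as is the easy inclusion $\overline{H} \subseteq L(\mathcal{S}(H))$.

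The gap is in the converse inclusion $L(\mathcal{S}(H)) \cap R_A \subseteq \overline{H}$, which you correctly single out as the heart of the matter but then justify only by appeal to ``the minimality characterization''. Proposition~\ref{BS1:uniqstal} says that $L(\mathcal{S}(H))$ is the \emph{smallest} language recognized by a finite inverse automaton with basepoint that contains $\overline{H}$; it does not say that the reduced words it accepts are exactly $\overline{H}$. To extract that from minimality you would have to exhibit some finite inverse automaton whose language contains $\overline{H}$ and whose accepted reduced words all lie in $\overline{H}$ --- and the only such automaton in sight is $\mathcal{S}(H)$ itself, which is circular. The paper closes this step differently: folding does not change the set of reduced forms of the accepted language (every successful path in the folded automaton lifts to one in the unfolded automaton by inserting factors $a^{-1}a$, and Proposition~\ref{BS1:invaut} gives the reverse containment), so
$\overline{L(\mathcal{S}(H))} = \overline{L(\mathcal{F}(X))} = \overline{(X \cup X^{-1})^*} = \overline{H}$,
whence $L(\mathcal{S}(H)) \cap R_A \subseteq \overline{H}$. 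You need this lifting argument, or an equivalent such as a morphism from $\mathcal{S}(H)$ to the (possibly infinite) Schreier coset automaton of $H$; minimality alone does not deliver it.
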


We will see many groups in Chapter~\ref{chapterBS2} that have solvable
word problem; however, few of them have solvable generalized word
problem. The proof of Theorem~\ref{BS1:thm:wordpb} relies on
\begin{proposition}
\label{BS1:gwp}
Consider $H \leq_{f.g.} F_A$ and $u \in F_A$. Then $u \in H$ if and only if
$\overline{u} \in L({\cal{S}}(H))$.
\end{proposition}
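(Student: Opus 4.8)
The plan is to establish the identity $\theta(L({\cal S}(H))) = H$; the proposition then follows at once in both directions. The forward implication is already contained in the proof of Proposition~\ref{BS1:uniqstal}: if $u \in H$ then $\overline{u} \in \overline{H} \subseteq L({\cal S}(H))$ by~\eqref{BS1:uniqstal1}. For the converse, suppose $\overline{u} \in L({\cal S}(H))$; applying $\theta$ and using $\theta(\overline{u}) = u$, it suffices to know that $\theta(L({\cal S}(H))) \subseteq H$, whence $u \in H$. Note that, ${\cal S}(H)$ being involutive with basepoint $q_0$, the language $L({\cal S}(H))$ is a submonoid of $\widetilde A^*$ closed under the involution, so $\theta(L({\cal S}(H)))$ is automatically a subgroup of $F_A$; the real content is therefore the single inclusion $\theta(L({\cal S}(H))) \subseteq H$.

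First I would prove the base case $\theta(L({\cal F}(X))) \subseteq H$ for the flower automaton, where $X$ is the chosen finite generating set of $H$. Given a loop $w$ at $q_0$ in ${\cal F}(X)$, I cut it at each successive return to the basepoint. Since distinct petals meet only at $q_0$, each resulting factor is a loop at $q_0$ lying entirely inside one petal. A petal labelled by $x \in X$ is, topologically, a subdivided circle, so a loop traversing it sends $\theta$ to a power $x^{n}$; hence each factor maps into $\langle x \rangle \subseteq H$, and $\theta(w)$, being a product of such factors, lies in $H$.

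Next I would show that folding cannot enlarge this image beyond $H$. Suppose ${\cal B}'$ is obtained from a trim involutive automaton ${\cal B}$ by folding the two edges in~\eqref{BS1:fold}, and that $\theta(L({\cal B})) \subseteq H$. The lifting argument recorded in the proof of Proposition~\ref{BS1:uniqstal} shows that every loop $u$ at $q_0$ in ${\cal B}'$ lifts to a loop $v$ at $q_0$ in ${\cal B}$ obtained by inserting factors $a^{-1}a$; since such insertions do not change the image under $\theta$, we get $\theta(u) = \theta(v) \in \theta(L({\cal B})) \subseteq H$. Thus $\theta(L({\cal B}')) \subseteq H$. Applying this along the folding sequence that produces ${\cal S}(X)$ from ${\cal F}(X)$ yields $\theta(L({\cal S}(H))) \subseteq H$, completing the proof.

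The main obstacle is the base case, that is, controlling the loops of the flower automaton: one must verify that every closed path at $q_0$ decomposes into petal loops and that each such loop contributes only a power of a generator to the $\theta$-image. Once this is in place the inductive step costs nothing new, since the requisite lifting of paths across a folding has already been isolated in Proposition~\ref{BS1:uniqstal}. An alternative route would model $H$ by its \Schreier\ coset automaton on the cosets $Hg$, whose language is exactly $\{w \in \widetilde A^* \mid \theta(w) \in H\}$; but this automaton is finite only when $[F_A:H]$ is finite, so it cannot be fed directly into Proposition~\ref{BS1:uniqstal}.
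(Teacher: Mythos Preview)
Your proof is correct and follows essentially the same route as the paper: the forward implication is~\eqref{BS1:uniqstal1}, and the converse combines the lifting argument from Proposition~\ref{BS1:uniqstal} (folding does not change the $\theta$-image of the accepted language) with the base case $\theta(L({\cal F}(X))) \subseteq H$. The paper compresses your petal-by-petal analysis of the flower automaton into the single asserted equality $\overline{L({\cal F}(X))} = \overline{(X\cup X^{-1})^*}$, but the content is identical.
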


\begin{proof}
$(\Rightarrow)$: Follows from~\eqref{BS1:uniqstal1}.

$(\Leftarrow)$: It follows easily from the last paragraph of the proof
of Proposition~\ref{BS1:uniqstal} that, if $\cal{B}'$ is obtained from
$\cal{B}$ by performing Stallings foldings, then
$\overline{L(\cal{B}')} = \overline{L(\cal{B})}$. Hence, if $H =
\langle X\rangle$, we get
\begin{displaymath}
\overline{L({\cal{S}}(H))} = \overline{L({\cal{F}}(X))} = \overline{(X
  \cup X^{-1})^*} = \overline{H}
\end{displaymath}
and the implication follows.
\end{proof}

It follows from our previous remark that the complexity of the
generalized word problem is $O(n\log^*n+m)$, where $n$ is the sum of
the lengths of the elements of $X$ and $m$ is the length of the input
word. In particular, once the subgroup $X$ has been fixed, complexity
is linear in $m$.

\begin{example}
\label{BS1:exstal1}
We may use the Stallings automaton constructed in Example~\ref{BS1:exstal} to check that $baba^{-1}b^{-1} \in H = \langle a^{-1}ba,
ba^2 \rangle$ but $ab \notin H$.
\end{example}

Stallings automata also provide an effective construction for bases
of finitely generated subgroups. Consider $H \leq_{f.g.} F_A$, and let
$m$ be the number of vertices of ${\cal{S}}(H)$. A \emph{spanning
  tree}\index{spanning~tree} $T$ for ${\cal{S}}(H)$ consists of $m-1$ edges and their
inverses which, together, connect all the vertices of ${\cal{S}}(H)$.
Given a vertex $p$ of ${\cal{S}}(H)$, we denote by $g_p$ the
$T$-\emph{geodesic} connecting the basepoint $q_0$ to $p$, that is,
$q_0 \mapright{g_p} p$ is the shortest path contained in $T$
connecting $q_0$ to $p$.

\begin{proposition}
\label{BS1:basis}
Let $H \leq_{f.g.} F_A$ and let $T$ be a spanning tree for
${\cal{S}}(H)$. Let $E_+$ be the set of
positively labelled edges of ${\cal{S}}(H)$. Then $H$ is free with basis
\begin{displaymath}
Y = \{ g_pag_q^{-1} \mid (p,a,q) \in E_+\setminus T \}\, .
\end{displaymath}
\end{proposition}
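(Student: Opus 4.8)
The plan is to identify $H$ with the fundamental group of the Stallings automaton ${\cal{S}}(H)$, viewed as a graph based at $q_0$, and to exhibit $Y$ as the standard free basis arising from a spanning tree. Concretely, I would first argue that every element of $H$ is represented by a reduced word in $L({\cal{S}}(H))$, that is, by a closed path at $q_0$; this is exactly the content of Proposition~\ref{BS1:gwp}, which tells us $u\in H$ iff $\overline{u}\in L({\cal{S}}(H))$. So the group $H$ is canonically identified with the set of reduced labels of closed paths at $q_0$, under the operation $\star$ of concatenation followed by reduction.

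Next I would show that $Y$ generates $H$. Given any closed path $q_0\vvvlongmapright{w}q_0$ in ${\cal{S}}(H)$, I factor it edge by edge: each traversed positive edge $(p,a,q)$ (or its inverse) is replaced by the corresponding element $g_pag_q^{-1}$ of $Y$ (or its inverse), inserting the canceling $T$-geodesic detours $g_q^{-1}g_q$ and $g_p g_p^{-1}$ between consecutive edges. Since a $T$-geodesic contributes a trivial element once it is traversed forward and then backward, the product of these $Y$-generators reduces to $w$ in $F_A$. Edges lying in the tree $T$ contribute $g_p a g_q^{-1}$ which equals $\one$ in $F_A$ by definition of the geodesics, so only the edges in $E_+\setminus T$ survive; this shows $H=\langle Y\rangle$.

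To prove that $Y$ is in fact a \emph{basis}, I must show no nontrivial reduced word in the alphabet $Y\cup Y^{-1}$ maps to $\one$ in $F_A$. The key observation is that each generator $g_pag_q^{-1}$ begins by spelling out the $T$-geodesic to $p$, crosses the distinguished non-tree edge labelled $a$, and then retraces the $T$-geodesic from $q$ backwards; because the edge $(p,a,q)$ lies outside $T$, this central letter cannot be cancelled against the geodesic portions of the same generator, nor—crucially—against the geodesic portions of an adjacent generator in a reduced $Y$-word, since in ${\cal{S}}(H)$ (being deterministic and inverse, by Proposition~\ref{BS1:invaut}) the cancellation $g_q^{-1}g_{q'}$ between neighbours collapses along the tree to a geodesic and leaves the two distinguished edges intact and distinct. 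Thus the distinguished central letters of the successive generators survive reduction, so the reduced form of a nonempty reduced $Y$-word is nonempty in $F_A$.

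I expect the main obstacle to be this last step: making rigorous the claim that the central non-tree letters cannot be annihilated under reduction. The cleanest route is the Nielsen-reduced / geodesic argument just sketched, tracking that consecutive generators in a reduced $Y$-word overlap only along tree edges, so that each surviving word $g_pag_q^{-1}$ contributes an uncancellable excursion outside the tree. Equivalently and more conceptually, one may invoke the covering-space / graph-of-groups picture: ${\cal{S}}(H)$ is a finite connected graph whose fundamental group is free of rank $\Card(E_+\setminus T)$ on precisely the classes of the non-tree edges, and collapsing the spanning tree $T$ sends each $g_pag_q^{-1}$ to a distinct free generator. Either formulation reduces the freeness of $Y$ to the standard computation of the fundamental group of a graph, which is where the real work lies.
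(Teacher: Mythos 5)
Your proposal follows essentially the same route as the paper's proof: generation via the telescoping product $h=\overline{(g_{q_0}a_1g_{q_1}^{-1})\cdots(g_{q_{k-1}}a_kg_{q_0}^{-1})}$ with tree edges contributing trivially, and freeness by observing that in a reduced $Y$-word the cancellation between consecutive generators is confined to the tree geodesics, so each non-tree letter $a_i$ survives in the reduced form. The determinism/inverseness argument you give for why the distinguished letters cannot be annihilated is exactly the point the paper invokes, so the proposal is correct and matches the paper's proof.
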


\begin{proof}
It follows from Proposition~\ref{BS1:gwp} that $L({\cal{S}}(H)) \subseteq H$,
hence $Y \subseteq H$. To show that
$H = \langle Y\rangle$, take $h = a_1\cdots a_k \in H$ in reduced form
$(a_i \in \widetilde{A})$. By
Proposition~\ref{BS1:gwp}, there exists a successful path
\begin{displaymath}
q_0 \mapright{a_1} q_1 \mapright{a_2} \cdots \mapright{a_k} q_k = q_0
\end{displaymath}
in ${\cal{S}}(H)$. For $i = 1,\dots,k$, we have either
$g_{q_{i-1}}a_ig_{q_i}^{-1} \in Y \cup Y^{-1}$ or
$\overline{g_{q_{i-1}}a_ig_{q_i}^{-1}} = 1$, the latter occurring if $(q_{i-1},a_i,q_i) \in
T$. In any case, we get
\begin{displaymath}
h = a_1\cdots a_k =
\overline{(g_{q_{0}}a_1g_{q_1}^{-1})(g_{q_{1}}a_2g_{q_2}^{-1})
  \cdots(g_{q_{k-1}}a_kg_{q_0}^{-1})}
\in \langle Y \rangle
\end{displaymath}
and so $H = \langle Y\rangle$.

It remains to show that the elements of $Y$ satisfy no nontrivial
relations. Let $y_1,\dots, y_k$ $\in Y\cup Y^{-1}$
with $y_i \neq y_{i-1}^{-1}$ for $i = 2,\dots,k$. Write $y_i =
g_{p_i}a_ig_{r_i}^{-1}$,
where $a_i \in \widetilde{A}$
labels the edge not in $T$. It follows easily from $y_i \neq
y_{i-1}^{-1}$ and the definition of spanning tree that
\begin{displaymath}
\overline{y_1\cdots y_k} = g_{p_1}a_1\overline{g_{r_1}^{-1}g_{p_2}}a_2 \cdots
a_{k-1}\overline{g_{r_{k-1}}^{-1}g_{p_k}}a_kg_{r_k}\, ,
\end{displaymath}
a nonempty reduced word if $k \geq 1$. Therefore $Y$ is a basis of $H$
as claimed.
\end{proof}

\noindent In the process, we also obtain a proof of the
\Nielsen-\Schreier\ Theorem, in the case of finitely generated
subgroups. A simple topological proof may be found
in~\cite{54.0603.01}:

\begin{theorem}[Nielsen-Schreier]\label{BS1:nielsen}
  Every subgroup of a free group is itself free.
\end{theorem}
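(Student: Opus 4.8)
The plan is to extend the construction underlying Proposition~\ref{BS1:basis} from finitely generated subgroups to arbitrary ones, by replacing the finite Stallings automaton with the (possibly infinite) Schreier coset graph. First I would fix an arbitrary $H \leq F_A$ and build an involutive automaton $\Gamma$ whose vertices are the right cosets $Hg$ $(g \in F_A)$, with basepoint $q_0 = H$, and whose positively labelled edges are the triples $(Hg, a, Hga)$ for $a \in A$. This $\Gamma$ is deterministic, since the coset $Hga$ is determined by $Hg$ and $a$; it is involutive by construction; and it is trim, being connected (as $F_A$ acts transitively on its cosets), so every vertex $Hg$ is reachable from $q_0$ by reading a word representing $g$ and returns to $q_0$ by reading the inverse word. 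Hence $\Gamma$ is an inverse automaton in the sense of Definition~\ref{BS1:inverseautomaton}, differing from a Stallings automaton only in that it may have infinitely many vertices. A word $w \in \widetilde{A}^*$ labels a closed path at the basepoint precisely when $\rho(w) \in H$, so, reasoning as in Proposition~\ref{BS1:gwp}, one obtains $L(\Gamma) \cap R_A = \overline{H}$.

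Next I would choose a spanning tree $T$ of $\Gamma$, that is, a subgraph that is a tree and contains every vertex. In the finitely generated case this is immediate, but for an infinite connected graph its existence requires Zorn's Lemma: one takes a maximal subforest and checks that maximality forces it to meet every vertex, since otherwise connectedness would supply an edge enlarging it. Because $T$ is a tree, for each vertex $p$ the geodesic $g_p$ from $q_0$ to $p$ is a uniquely determined \emph{finite} reduced word, even though $T$ itself may be infinite. With these geodesics in hand I would define, exactly as before,
\begin{displaymath}
Y = \{ g_p a g_q^{-1} \mid (p,a,q) \in E_+\setminus T \}\, ,
\end{displaymath}
where $E_+$ is the set of positively labelled edges of $\Gamma$.

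Finally I would verify that $Y$ is a basis of $H$, and here the essential point is that both arguments in the proof of Proposition~\ref{BS1:basis} are purely finitary and transcribe verbatim. To see that $H = \langle Y \rangle$, take any $h = a_1 \cdots a_k \in H$ in reduced form; it labels a finite closed path $q_0 \mapright{a_1} q_1 \mapright{a_2} \cdots \mapright{a_k} q_0$ in $\Gamma$, and telescoping with the geodesics $g_{q_i}$ expresses $h$ as a product of elements of $Y \cup Y^{-1}$. To see that $Y$ is free, any candidate relation $y_1 \cdots y_k$ with $y_i \neq y_{i-1}^{-1}$ involves only finitely many generators, and the cancellation computation of Proposition~\ref{BS1:basis} shows that its reduced form is a nonempty word. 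Thus every $H \leq F_A$ admits a basis and is therefore free. The only genuinely new ingredient beyond the finitely generated case — and the main obstacle — is the existence of the spanning tree $T$, which is precisely where the infinitary, choice-theoretic content of the theorem resides.
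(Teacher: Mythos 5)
Your proof is correct and follows essentially the same route as the paper: Proposition~\ref{BS1:basis} establishes the finitely generated case, and the paper's remark following Example~\ref{BS1:exstal2} indicates exactly your extension to arbitrary subgroups via the (possibly infinite) inverse automaton on cosets. You merely supply the details the paper leaves implicit, namely the Zorn's Lemma argument for the spanning tree and the observation that the geodesics and the two verifications from Proposition~\ref{BS1:basis} remain finitary.
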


\begin{example}
\label{BS1:exstal2}
We use the Stallings automaton constructed in Example~\ref{BS1:exstal} to construct a basis of $H = \langle a^{-1}ba,
ba^2 \rangle$.
\end{example}

If we take the spanning tree $T$ defined by the dotted lines in
$$\begin{fsa}
\node[state] (p_0) at (0.8,0) {$q_0$};
\node[state] (EE) at (3.2,0) {};
\node[state] (NNEE) at (2,2) {};
\path (p_0) edge[dotted] node[below] {$b$} (EE);
\path (NNEE) edge node[above] {$a$} (p_0) edge[loop right] node {$b$} ();
\path (EE) edge[dotted] node[right] {$a$} (NNEE);
\end{fsa}$$
then $\Card{E_+\setminus T} = 2$ and the corresponding basis is $\{ ba^2,
baba^{-1}b^{-1} \}$. Another choice of spanning tree actually proves that the
original generating set is also a basis.

\bigskip

We remark that Proposition~\ref{BS1:basis} can be extended to the case
of infinitely generated subgroups, proving the general case of
Theorem~\ref{BS1:nielsen}. However, in this case there is no effective
construction such as Stallings', and the (infinite) inverse automaton
${\cal{S}}(H)$ remains a theoretical object, using appropriate cosets
as vertices.

Another classical application of Stallings' construction regards the
identification of finite index subgroups.

\begin{proposition}
\label{BS1:findex}
Consider $H \leq_{f.g.} F_A$.
\begin{conditionsiii}
\item[\textup{(i)}] $H$ is a finite index subgroup of $F_A$ if and only if
${\cal{S}}(H)$ is a complete automaton.
\item[\textup{(ii)}] If $H$ is a finite index subgroup of $F_A$, then its index
  is the number of vertices of ${\cal{S}}(H)$.
\end{conditionsiii}
\end{proposition}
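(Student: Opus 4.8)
The plan is to treat the vertex set of $\mathcal{S}(H)$ as a set on which $F_A$ acts by ``reading reduced words''. Since an inverse automaton is deterministic and involutive, completeness is precisely the condition that every letter of $\widetilde A$ can be read at every vertex, and it is this reformulation that links the two conditions.

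For the implication ``$\mathcal{S}(H)$ complete $\Rightarrow$ finite index'', together with part (ii): assuming $\mathcal S(H)$ is complete, I would define an action of $F_A$ on the vertex set $V$ of $\mathcal S(H)$ by letting $x\cdot g$ be the endpoint of the unique path starting at $x$ and labelled by the reduced form $\overline g$. This is well defined because in a \emph{complete} inverse automaton inserting or deleting a factor $aa^{-1}$ never changes the endpoint (deletion by Proposition~\ref{BS1:invaut}; insertion because completeness provides an $a$-edge, whose involutive edge returns to the same vertex), so the value is insensitive to reduction, and it is a genuine action by permutations. The stabiliser of the basepoint $q_0$ is $\{g : \overline g \in L(\mathcal S(H))\}$, which equals $H$ by Proposition~\ref{BS1:gwp}, while the orbit of $q_0$ is all of $V$ since $\mathcal S(H)$ is trim and hence every vertex is reachable from $q_0$. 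The orbit--stabiliser relation then gives $\Card V = [F_A:H]$; as $V$ is finite this yields both finiteness of the index and the count asserted in (ii).

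For the converse ``finite index $\Rightarrow$ complete'', which I expect to be the main obstacle, fix a vertex $p$ and a letter $a\in\widetilde A$ and choose a reduced word $w$ with a path $q_0 \mapright{w} p$; I must produce an $a$-edge out of $p$. If $w$ ends in $a^{-1}$, the last edge is an $a^{-1}$-edge into $p$, so its involutive edge is an $a$-edge out of $p$ and we are done. Otherwise I would use that right multiplication by $a$ permutes the finite set of cosets, so the orbit of $H\overline w$ under $a$ is a cycle: there is $M\geq 1$ with $\overline w\, a^{M}\,\overline w^{-1}\in H$, and since any multiple of the period works I may take $M$ larger than the number $t$ of trailing $a$'s of $w$. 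Writing $w = w'a^{t}$ with $w'$ ending in neither $a$ nor $a^{-1}$, the word $w a^{M}w^{-1}$ reduces to $w'a^{M}w'^{-1}$, which thus lies in $\overline H\subseteq L(\mathcal S(H))$ by Proposition~\ref{BS1:gwp}. Reading this loop at $q_0$ factors as $q_0\mapright{w'}p''$, then an $a^{M}$-run forward, then $w'^{-1}$ back to $q_0$; by determinism $p$ is the vertex reached after $t$ of those $a$-steps, and since $M>t$ the run continues one step past $p$, exhibiting the required $a$-edge.

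The delicate point, and the step needing the most care, is exactly this reduction bookkeeping: one must verify that stripping the trailing $a$'s is the \emph{only} cancellation in $wa^{M}w^{-1}$ — which holds because $w$ is reduced, so $a^{-1}$ can never directly precede the block $a^{t}$ — so that $w'a^{M}w'^{-1}$ is genuinely reduced and its unique path truly runs through $p$ with room to spare. Once completeness is established, the first paragraph applies and settles (i) and (ii); equivalently, the action then identifies $\mathcal S(H)$ with the (now finite) Schreier coset graph of $H$.
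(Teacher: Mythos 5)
Your proof is correct and follows essentially the same route as the paper's: both directions rest on Proposition~\ref{BS1:gwp} together with determinism and trimness, and your orbit--stabiliser packaging of the complete case is just the paper's explicit bijection $q\mapsto Hg_q$ between vertices and cosets in disguise. The only organisational difference is that you establish completeness directly (pigeonhole on the finite coset set produces $M$ with $\overline{w}\,a^M\overline{w}^{-1}\in H$, whose reduced form $w'a^Mw'^{-1}$ forces the missing $a$-edge), whereas the paper argues contrapositively that a missing transition $q\cdot a$ yields infinitely many distinct cosets $Hga^n$ --- the same cancellation bookkeeping either way.
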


\begin{proof}
(i) $(\Rightarrow)$: Suppose that ${\cal{S}}(H)$ is not complete. Then
there exist some vertex $q$ and some $a \in \widetilde{A}$ such that
$q\cdot a$ is undefined. Let $g$ be a geodesic connecting the basepoint
$q_0$ to $q$ in ${\cal{S}}(H)$. We claim that
\begin{equation}
\label{BS1:findex1}
Hga^m \neq Hga^n\quad \mbox{if} \quad m-n
> |g|\, .
\end{equation}
Indeed, $Hga^m = Hga^n$ implies $ga^{m-n}g^{-1} \in H$ and so
$\overline{ga^{m-n}g^{-1}} \in L({\cal{S}}(H))$ by Proposition~\ref{BS1:gwp}. Since $ga$ is reduced due to ${\cal{S}}(H)$ being
inverse, it follows from $m-n > |g|$ that
$ga\overline{a^{m-n-1}g^{-1}} = \overline{ga^{m-n}g^{-1}} \in
L({\cal{S}}(H))$: indeed, $g^{-1}$ is not long enough to erase all the
$a$'s. Since ${\cal{S}}(H)$ is deterministic, $q\cdot a$ must be
defined, a contradiction. Therefore~\eqref{BS1:findex1} holds and so
$H$ has infinite index.

$(\Leftarrow)$: Let $Q$ be the vertex set of ${\cal{S}}(H)$ and fix a
geodesic $q_0 \mapright{g_q} q$ for each $q \in Q$. Take $u \in
F_A$. Since ${\cal{S}}(H)$ is complete, we have a
path $q_0 \mapright{u}
q$ for some $q \in Q$. Hence $ug_q^{-1} \in H$ and so $u =
ug_q^{-1}g_q \in Hg_q$. Therefore $F_A = \bigcup_{q\in Q} Hg_q$ and so
$H \leq_{f.i.} F_A$.

(ii) In view of $F_A = \bigcup_{q\in Q} Hg_q$, it suffices to show that
the cosets $Hg_q$ are all distinct. Indeed, assume that $Hg_p = Hg_q$
for some vertices $p,q \in Q$. Then $g_pg_q^{-1} \in H$ and so
$\overline{g_pg_q^{-1}}
\in L({\cal{S}}(H))$ by Proposition~\ref{BS1:gwp}. On the other hand, since
${\cal{S}}(H)$ is complete, we have a path
\begin{displaymath}
q_0 \vlongmapright{g_pg_q^{-1}} r
\end{displaymath}
for some $r \in Q$. In view of
Proposition~\ref{BS1:invaut}, and by determinism, we get $r = q_0$. Hence
we have paths
\begin{displaymath}
p \longmapright{g_q^{-1}} q_0, \quad q \longmapright{g_q^{-1}}
q_0\, .
\end{displaymath}
Since ${\cal{S}}(H)$ is inverse, we get $p = q$ as required.
\end{proof}

\begin{example}
\label{BS1:exstal3}
Since the Stallings automaton constructed in Example~\ref{BS1:exstal}
is not complete, it follows that $\langle a^{-1}ba,
ba^2 \rangle$ is not a finite index subgroup of $F_2$.
\end{example}

\begin{corollary}
\label{BS1:rankfis}
If $H \leq F_A$ has index $n$, then $\rk H = 1+n(\Card{A} -1)$.
\end{corollary}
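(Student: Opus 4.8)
The plan is to compute the rank of a finite index subgroup $H \leq_f.i. F_A$ directly from its Stallings automaton, using the two structural results just established. By Proposition~\ref{BS1:findex}(i), since $H$ has finite index, ${\cal{S}}(H)$ is a complete automaton; and by Proposition~\ref{BS1:findex}(ii), its number of vertices equals the index $n$. Meanwhile, Proposition~\ref{BS1:basis} tells us that $H$ is free with a basis in bijection with $E_+ \setminus T$, where $E_+$ is the set of positively labelled edges and $T$ is any spanning tree. Thus $\rk H = \Card{(E_+ \setminus T)} = \Card{E_+} - \Card{(E_+ \cap T)}$, and the whole proof reduces to counting these two quantities.

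First I would count $\Card{E_+}$. Because ${\cal{S}}(H)$ is complete and deterministic (being inverse), every vertex has exactly one outgoing edge labelled $a$ for each $a \in A$. Reading these as positively labelled edges, each of the $n$ vertices is the source of exactly $\Card{A}$ positive edges, so $\Card{E_+} = n\,\Card{A}$. Next I would count the positive edges lying in the spanning tree $T$. A spanning tree on $n$ vertices has exactly $n-1$ geometric edges; since $T$ is closed under the involution, each such geometric edge corresponds to one positively labelled edge together with its inverse, so $\Card{(E_+ \cap T)} = n-1$. Subtracting gives
\begin{displaymath}
\rk H = n\,\Card{A} - (n-1) = 1 + n(\Card{A} - 1)\, ,
\end{displaymath}
which is exactly the claimed formula.

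The only point requiring care is the bookkeeping in the edge count: one must make sure to count each positive edge once (not its inverse as well), confirm that completeness forces exactly $\Card{A}$ positive edges out of each vertex with no repetition, and verify that a spanning tree of the involutive automaton contributes precisely $n-1$ positive edges. None of these steps is a genuine obstacle, so this statement is best phrased as a short corollary rather than a theorem with an involved argument; the real content was already delivered by Propositions~\ref{BS1:basis} and~\ref{BS1:findex}. I expect the main (mild) subtlety to be the convention that edges come in involutive pairs, which is why we count positively labelled edges throughout.
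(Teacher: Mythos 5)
Your proposal is correct and follows exactly the paper's own argument: Proposition~\ref{BS1:findex} gives completeness and $n$ vertices, hence $n\,\Card{A}$ positive edges, a spanning tree contributes $n-1$ of them, and Proposition~\ref{BS1:basis} turns the difference into the rank. The extra care you take over the involutive-pair bookkeeping is sound but adds nothing beyond what the paper's one-line computation already assumes.
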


\begin{proof}
  By Proposition~\ref{BS1:findex}, the automaton ${\cal{S}}(H)$ has
  $n$ vertices and $n\Card{A}$ positive edges. A spanning tree has
  $n-1$ positive edges, so $\rk H = n\Card{A} -(n-1) = 1+n(\Card{A}
  -1)$ by Proposition~\ref{BS1:basis}.
\end{proof}

\index{Stallings'~construction|)}

Beautiful connections between finite index subgroups and certain
classes of {\em bifix codes}\index{bifix code}\index{code!bifix} ---
set of words none of which is a prefix or a suffix of another --- have
recently been unveiled by \Berstel, \DeFelice, \Perrin, \Reutenauer\
and \Rindone~\cite{Berstel&etal:2010}.

\subsection{Conjugacy}\label{BS1:conjugacy}

We start now a brief discussion of conjugacy.  Recall that the
\emph{outdegree} of a vertex $q$ is the number of edges starting at
$q$ and the \emph{geodesic
  distance}\index{geodesic~distance}\index{distance!geodesic} in a
connected graph is the length of the shortest undirected path
connecting two vertices.

Since the original generating
set is always taken in reduced
form, it follows easily that there is at most one vertex in a
Stallings automaton having outdegree $< 2$: the basepoint $q_0$.
Assuming
that $H$ is nontrivial, ${\cal{S}}(H)$ must always be of the form
$$\begin{fsa}
\node[state] (q_0) at (-2.3,0) {$q_0$};
\node[state] (q_1) at (0,0) {$q_1$};
\node[state] (NE) at (2,1.2) {};
\node[state] (SE) at (2,-1.2) {};
\node at (3,1.2) {$\cdots$};
\node at (3,0) {$\cdots$};
\node at (3,-1.2) {$\cdots$};
\path (q_0) edge node {$u$} (q_1);
\path (q_1) edge node {} (NE) edge node {} (SE);
\end{fsa}$$
where $q_1$ is the closest vertex to $q_0$ (in terms of geodesic
distance) having outdegree $> 2$ (since there is at least one
vertex having such outdegree). Note that $q_1 = q_0$ if $q_0$ has
outdegree $> 2$ itself. We call $q_0 \mapright{u}$ the \emph{tail}
(which is empty if $q_1 = q_0$) and the remaining subgraph the
\emph{core} of ${\cal{S}}(H)$.

Note that ${\cal S}(H)$, and its core, may be understood as
follows. Consider the graph with vertex set $F_A/H=\{gH\mid g\in
F_A\}$, with an edge from $gH$ to $agH$ for each generator $a\in
A$. Then this graph, called the \emph{Schreier
  graph}\index{Schreier~graph}\index{graph!Schreier}
(see~\S\ref{BS2:sec:1}) of $H\backslash F_A$, consists of finitely many trees
attached to the core of ${\cal S}(H)$.

\begin{theorem}
\label{BS1:conjsubgroups}
  There is an algorithm that decides whether or not two finitely
  generated subgroups of $F_A$ are conjugate.
\end{theorem}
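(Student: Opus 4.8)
The plan is to reduce the conjugacy of two finitely generated subgroups to an isomorphism test between the \emph{cores} of their Stallings automata, and to settle that test by a finite search. The guiding principle is that conjugating a subgroup $H$ has no effect on the core of ${\cal S}(H)$ and merely relocates the basepoint within it (lengthening or pruning the tail). Precisely, I would prove that $H_1,H_2\leq_{f.g.}F_A$ are conjugate in $F_A$ if and only if the cores $C_1,C_2$ of ${\cal S}(H_1),{\cal S}(H_2)$ are isomorphic as labelled graphs, the basepoints being ignored. Granting this equivalence, the algorithm writes itself: construct ${\cal S}(H_1)$ and ${\cal S}(H_2)$ by Stallings' construction, delete the tails to obtain $C_1$ and $C_2$, and decide whether some label-preserving vertex bijection carries $C_1$ onto $C_2$. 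As $C_1$ and $C_2$ are finite, only finitely many bijections need be inspected, so the test terminates.

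For the direct implication, assume $H_2=z^{-1}H_1z$ and pass to the \emph{Schreier graph} $\Gamma(H)$, with vertices the cosets $gH$ and an edge $gH\mapright{a}agH$ for each $a\in A$. As noted just before the statement, for $H\leq_{f.g.}F_A$ this graph is the core of ${\cal S}(H)$ with finitely many trees attached, so the core of ${\cal S}(H)$ coincides with the core of $\Gamma(H)$. The assignment $gH_1\mapsto gzH_2$ is a well-defined bijection of vertex sets that preserves labels, since it sends the edge $gH_1\mapright{a}agH_1$ to $gzH_2\mapright{a}agzH_2$; thus, once basepoints are forgotten, it is an isomorphism of labelled graphs $\Gamma(H_1)\to\Gamma(H_2)$. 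Any such isomorphism carries core to core, whence $C_1\cong C_2$.

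For the converse I would exploit two facts about a core $C$. First, every vertex of $C$ has outdegree $\geq 2$, so for any vertex $v$ the pointed automaton $(C,v)$ is trim, deterministic and involutive with a single final state, hence inverse; being minimal, it is the Stallings automaton of a subgroup $H_v$ with $L(C,v)=\overline{H_v}$. Second, if $y$ labels a reduced path $v\mapright{y}w$ in $C$, then reading closed paths through $y$ shows $H_w$ and $H_v$ to be conjugate by $\theta(y)$. Finally, writing the tail of ${\cal S}(H)$ as $q_0\mapright{u}q_1$, the basepoint $q_0$ has a unique outgoing edge, so every reduced element of $H$ begins and ends along the tail; conjugating by $\theta(u)$ moves the basepoint to $q_1\in C$ and proves $H\sim H_{q_1}$. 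A labelled isomorphism $\psi:C_1\to C_2$ then yields
\[
H_1\sim H_{q_1^{(1)}}=H_{\psi(q_1^{(1)})}\sim H_{q_1^{(2)}}\sim H_2,
\]
where the middle equality holds because $\psi$ is an isomorphism of the pointed automata $(C_1,q_1^{(1)})$ and $(C_2,\psi(q_1^{(1)}))$, so they accept the same language and define the same subgroup, and the adjacent conjugacies come from sliding the basepoint inside a single core.

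The crux, and the step I expect to demand the most care, is this structural claim that conjugation corresponds exactly to repositioning the basepoint inside the invariant core: one must verify that pruning the tail, sliding the basepoint along reduced paths, and re-reading the accepted language are all compatible with free reduction, and that the core of ${\cal S}(H)$ is genuinely a basepoint-free invariant of the conjugacy class of $H$ (which the Schreier-graph isomorphism above makes precise, via Proposition~\ref{BS1:gwp}). The finite graph-isomorphism test, by contrast, is routine.
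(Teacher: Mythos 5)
Your proposal is correct and follows exactly the paper's route: the paper's proof consists of the single assertion that $H_1,H_2$ are conjugate if and only if the cores of ${\cal S}(H_1)$ and ${\cal S}(H_2)$ coincide up to basepoint, plus the implicit finite isomorphism check. You have simply supplied the justification of that criterion (Schreier-graph isomorphism for the forward direction, basepoint sliding within the core for the converse), which the paper omits.
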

\begin{proof}
  Finitely generated subgroups $G,H$ are conjugate if and only if the
  cores of $\mathcal S(G)$ and $\mathcal S(H)$ are equal (up to their
  basepoints).
\end{proof}

The Stallings automata of the conjugates of $H$ can be obtained in
the following ways: (1) declaring a vertex in the core $C$ to be the
basepoint; (2) gluing a tail to some vertex in the core $C$ and taking
its other endpoint to be the basepoint.

Note that the tail must be glued in some way that keeps the automaton
inverse, so in particular this second type of operation can only be
performed if the automaton is not complete, or equivalently, if $H$
has infinite index. An immediate consequence is the following
classical
\begin{proposition}
\label{normalsubgroups}
A finite rank normal subgroup\index{subgroup!normal} of a free group
is either trivial or has finite index.
\end{proposition}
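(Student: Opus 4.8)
The plan is to prove the dichotomy by contradiction, combining the completeness criterion of Proposition~\ref{BS1:findex} with the description, given just above the statement, of the Stallings automata of the conjugates of $H$. Let $H \leq_{f.g.} F_A$ be normal, and suppose it is neither trivial nor of finite index; I will derive a contradiction. Since $[F_A:H]=\infty$, Proposition~\ref{BS1:findex} shows that $\mathcal{S}(H)$ is \emph{not} complete. This is precisely the situation in which the second of the two constructions above applies: a free direction is available at some vertex, so we may glue to the core a fresh tail whose far endpoint becomes the new basepoint, and the resulting finite inverse automaton is the Stallings automaton $\mathcal{S}(H^g)$ of a conjugate $H^g$ of $H$.

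The decisive feature is that the glued tail can be taken as long as we like. Writing $N$ for the finite number of vertices of $\mathcal{S}(H)$, I would attach a tail contributing more than $N$ new vertices; this keeps the automaton deterministic, trim and equipped with a single basepoint, so it is again a legitimate Stallings automaton of a conjugate, but now it has strictly more than $N$ vertices. Thus infinite index produces conjugates of $H$ whose Stallings automata are arbitrarily large.

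Normality now forces a collapse. Since $H$ is normal, every conjugate $H^g$ equals $H$, and because the Stallings automaton of a finitely generated subgroup is unique, $\mathcal{S}(H^g)=\mathcal{S}(H)$; in particular the two have the same number $N$ of vertices, contradicting the previous paragraph. Hence a nontrivial normal subgroup of infinite index cannot exist, which is exactly the claim. Conceptually, this reflects that for normal $H$ the right translations $gH\mapsto gbH$ are label-preserving automorphisms acting transitively on the Schreier graph, so the ``finitely many trees attached to the core'' picture recalled above forces either empty trees (finite index) or empty core (trivial); the tail-gluing argument is the automata-theoretic packaging of this. The only point needing care is the assertion that the gluing operation genuinely yields the Stallings automaton of a conjugate and may be iterated to unbounded length — this is exactly where non-completeness, equivalently infinite index, is used, as emphasised in the remark preceding the statement, and it is also why the trivial alternative survives: when $H=\one$ the automaton $\mathcal{S}(H)$ is a single basepoint with no core to glue onto, the argument does not engage, and triviality is precisely one of the two permitted outcomes.
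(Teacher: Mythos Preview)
Your proof is correct and is exactly the argument the paper has in mind: the proposition is stated as ``an immediate consequence'' of the preceding discussion on tails, and you have spelled that consequence out---infinite index means incompleteness, incompleteness permits tails of unbounded length on the core, hence conjugates with arbitrarily large Stallings automata, contradicting normality via uniqueness of $\mathcal{S}(H)$. The handling of the trivial case (no core to glue onto) is also the right exit.
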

Moreover, a finite index subgroup $H$ is normal if and only if its
Stallings automaton is \emph{vertex-transitive}, that is, if all
choices of basepoint yield the same automaton.

\begin{example}
\label{BS1:exstal4}
Stallings automata of some conjugates of $H = \langle a^{-1}ba,
ba^2 \rangle$:
\end{example}
$$\begin{fsa}
\node[state] (p_0) at (-4.2,0) {$q_0$};
\node[state] (EE) at (-1.8,0) {};
\node[state] (NNEE) at (-3,2) {};
\node at (-5,1) {${\cal{S}}(H) =$};
\path (p_0) edge node[below] {$b$} (EE);
\path (NNEE) edge node[above] {$a$} (p_0) edge[loop right] node {$b$} ();
\path (EE) edge node[right] {$a$} (NNEE);
\node[state] (W) at (1.8,0) {};
\node[state] (q_0) at (4.2,0) {$q_0$};
\node[state] (N) at (3,2) {};
\node at (1,1) {${\cal{S}}(b^{-1}Hb) =$};
\path (W) edge node[below] {$b$} (q_0);
\path (N) edge node[above] {$a$} (W) edge[loop right] node {$b$} ();
\path (q_0) edge node[right] {$a$} (N);
\end{fsa}$$
$$\begin{fsa}
\node[state] (WW) at (-2.4,0) {};
\node[state] (W) at (0,0) {};
\node[state] (q_0) at (2.4,0) {$q_0$};
\node[state] (N) at (-1.2,2) {};
\node at (-4,1) {${\cal{S}}(b^{-2}Hb^2) =$};
\path (WW) edge node[below] {$b$} (W);
\path (N) edge node[above] {$a$} (WW) edge[loop right] node {$b$} ();
\path (W) edge node[right] {$a$} (N) edge node[below] {$b$} (q_0);
\end{fsa}$$
We can also use the previous discussion on the structure of (finite) Stallings
automata to provide them with an abstract characterization.

\begin{proposition}
\label{BS1:whoarethey}
A finite inverse automaton with a
basepoint is a Stallings
automaton if and only if it has at most one vertex of outdegree 1: the
basepoint.
\end{proposition}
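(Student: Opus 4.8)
Throughout write ${\cal A}$ for the given finite inverse automaton with basepoint $q_0$. Since ${\cal A}$ is trim, every vertex other than possibly $q_0$ has outdegree at least $1$, so the hypothesis ``at most one vertex of outdegree $1$, namely the basepoint'' says exactly that \emph{every vertex $\neq q_0$ has outdegree $\geq 2$}; call such an automaton \emph{hair-free}. The plan is to treat the two implications separately.

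For the forward implication I would show a Stallings automaton is hair-free by a minimality argument. Suppose ${\cal S}(H)$ had a non-basepoint vertex $v$ with a single outgoing edge $(v,a,u)$ (hence, by involution, a single incoming edge $(u,a^{-1},v)$). Every path visiting $v$ then runs $u\mapright{a^{-1}}v\mapright{a}u$, so $v$ lies on no reduced successful path, and deleting $v$ with its two incident edges yields a finite inverse automaton ${\cal A}'$ with $\overline{L({\cal A}')}=\overline H$, in particular $\overline H\subseteq L({\cal A}')$. By Proposition~\ref{BS1:uniqstal} we get $L({\cal S}(H))\subseteq L({\cal A}')$, while ${\cal A}'\subseteq{\cal S}(H)$ gives the reverse inclusion; thus ${\cal A}'$ recognizes $L({\cal S}(H))$ with one vertex fewer, contradicting the minimality of the inverse automaton ${\cal S}(H)$.

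For the converse, assume ${\cal A}$ is hair-free and put $H=\theta(L({\cal A}))$. The spanning-tree construction of Proposition~\ref{BS1:basis}, applied verbatim to ${\cal A}$, shows that $H\leq_{f.g.}F_A$ and that $\rk H$ equals the number of positive edges of ${\cal A}$ minus the number of its vertices, plus $1$. Next I would check $\overline H\subseteq L({\cal A})$: each $h\in H$ is $\theta(w)$ for some $w\in L({\cal A})$, and repeatedly deleting factors $aa^{-1}$ via Proposition~\ref{BS1:invaut} turns the loop labelled $w$ into one labelled $\overline w=\overline h$. Hence Proposition~\ref{BS1:uniqstal} gives $L({\cal S}(H))\subseteq L({\cal A})$, and Proposition~\ref{BS1:invmor} furnishes a unique morphism $\varphi:{\cal S}(H)\to{\cal A}$; the goal is to prove $\varphi$ an isomorphism. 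Injectivity on vertices is the argument of Proposition~\ref{BS1:findex}(ii): if $\varphi(x)=\varphi(y)$, choose geodesics $q_0\mapright{g_x}x$ and $q_0\mapright{g_y}y$ in ${\cal S}(H)$, so that $q_0\mapright{g_xg_y^{-1}}q_0$ in ${\cal A}$ and $\overline{g_xg_y^{-1}}\in\overline H\subseteq L({\cal S}(H))$ by~\eqref{BS1:uniqstal1}. Writing $g_x=sp$, $g_y=sq$ with $s$ the longest common prefix (so $pq^{-1}$ is reduced) and reading $s\,pq^{-1}s^{-1}\in L({\cal S}(H))$ from $q_0$, determinism forces the two branches at $z=q_0\cdot s$ to agree, i.e. $z\cdot p=z\cdot q$, whence $x=y$. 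As a morphism of deterministic automata $\varphi$ is injective on each out-star, so vertex-injectivity upgrades to edge-injectivity.

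Surjectivity is where hair-freeness is genuinely used, and it rests on the following claim, which I expect to be the main obstacle: \emph{in a hair-free inverse automaton, every vertex lies on a reduced closed path based at $q_0$}. Granting this, any vertex $v$ of ${\cal A}$ is visited by the path of some reduced $r\in\overline H$; since $r\in L({\cal S}(H))$, reading $r$ in ${\cal S}(H)$ gives a path whose $\varphi$-image is, by determinism, precisely this $r$-path of ${\cal A}$, so $v$ lies in the image of $\varphi$. Being injective and surjective on the finite vertex sets, $\varphi$ is a vertex-bijection; the rank formula applied to both ${\cal S}(H)$ and ${\cal A}$ then forces them to have equally many positive edges, so the edge-injection $\varphi$ is an edge-bijection as well, and $\varphi$ is an isomorphism. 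Thus ${\cal A}={\cal S}(H)$ is a Stallings automaton.

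To prove the claim I would use the hypothesis to peel off a \emph{tail} — a reduced path $q_0\mapright{}q_1$ obtained by iteratively stripping degree-$1$ vertices, which can only ever be $q_0$ — leaving a \emph{core} $C$ in which every vertex has degree $\geq2$. In a finite graph of minimum degree $\geq2$ every vertex lies on a cycle, i.e. on a reduced closed path inside $C$; one then threads $q_0\to q_1\to v$, runs around such a cycle through $v$, and returns, using that $q_1$ and $v$ have degree $\geq2$ to choose edges at the junctions that avoid backtracking. The only delicate point is exactly this junction bookkeeping keeping the concatenation reduced, and it is precisely here that the degree-$\geq2$ condition — equivalently, hair-freeness — is indispensable; without it (as for a pendant hair) the offending vertex sits on no reduced loop and $\varphi$ fails to be onto.
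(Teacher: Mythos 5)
Your overall strategy is sound, and it fills in what the paper leaves almost entirely implicit: the paper's own proof is the single sentence ``take a spanning tree and use it to construct a basis for the subgroup as in the proof of Proposition~\ref{BS1:basis}'' (the forward direction having been disposed of earlier, in \S\ref{BS1:conjugacy}, by the remark that reduced petals and foldings preserve outdegree $\geq 2$ off the basepoint). Your minimality argument for the forward direction is a clean alternative to that remark, and your treatment of the converse --- set $H=\theta(L(\mathcal{A}))$, obtain the morphism $\varphi:\mathcal{S}(H)\to\mathcal{A}$ from Proposition~\ref{BS1:invmor}, and prove it bijective via vertex-injectivity plus the rank count --- is a legitimate, more explicit route to the same conclusion.

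Two points, however. First, a concrete error in the injectivity step: cancellation in $g_xg_y^{-1}$ is governed by the longest common \emph{suffix} of $g_x$ and $g_y$, not the longest common prefix; with your decomposition $g_x=sp$, $g_y=sq$ the word $pq^{-1}$ need not be reduced (take $g_x=ab$, $g_y=cb$: then $s$ is empty and $pq^{-1}=abb^{-1}c^{-1}$). The repair is routine: write $g_x=pt$, $g_y=qt$ with $t$ the longest common suffix, so that $\overline{g_xg_y^{-1}}=pq^{-1}$ is reduced and labels a loop at $q_0$ in $\mathcal{S}(H)$; co-determinism then gives $q_0\cdot p=q_0\cdot q=:m$, and determinism gives $x=m\cdot t=y$. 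Second, the claim that every vertex lies on a reduced closed path based at $q_0$ carries the real weight of surjectivity, and it is only sketched; note also that ``in a finite graph of minimum degree $\geq 2$ every vertex lies on a cycle'' is false as stated (a cut vertex on a bridge joining two cycles lies on no cycle), although every vertex does lie on a reduced closed \emph{walk}, which is all you need. A cleaner way to close this gap, in the spirit of the paper's hint, is to observe that hair-freeness forces every leaf of a spanning tree $T$ of $\mathcal{A}$ other than possibly $q_0$ to be incident to a non-tree edge; hence every vertex (indeed every edge) of $\mathcal{A}$ lies on the path of one of the reduced words $g_pag_q^{-1}$ of Proposition~\ref{BS1:basis}, and these words lie in $\overline{H}\subseteq L(\mathcal{S}(H))$, giving surjectivity of $\varphi$ on vertices and edges at once and letting you dispense with the edge count.
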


\begin{proof}
Indeed, for any such automaton we can take a spanning tree
and use it to
construct a basis for the subgroup as in the proof of
Proposition~\ref{BS1:basis}.
\end{proof}

\subsection{Further algebraic properties}
\label{BS1:fap}

The study of intersections\index{subgroup!intersection} of finitely
generated subgroups of $F_A$ provides further applications of
Stallings automata. \Howson's classical theorem admits a simple proof
using the \emph{direct product} of two Stallings automata; it is also
an immediate consequence of Theorem \ref{BS1:anisei} and Corollary
\ref{BS1:bencon}(ii).

\begin{theorem}[Howson]
\label{BS1:howson}
If $H,K \leq_{f.g.} F_A$, then also $H \cap K \leq_{f.g.} F_A$.
\end{theorem}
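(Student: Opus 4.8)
The plan is to realise $H\cap K$ as the subgroup recognised by the \emph{product} of the two Stallings automata, and then to read a finite generating set off it exactly as in Proposition~\ref{BS1:basis}. First I would form the product automaton ${\cal P}$: its vertices are the pairs $(p,q)$ with $p$ a vertex of ${\cal S}(H)$ and $q$ a vertex of ${\cal S}(K)$, and $((p,q),a,(p',q'))$ is an edge precisely when $(p,a,p')$ is an edge of ${\cal S}(H)$ and $(q,a,q')$ is an edge of ${\cal S}(K)$. Taking the pair of basepoints $(q_0,q_0')$ as basepoint and restricting to the trim part, ${\cal P}$ is a finite involutive automaton; it is deterministic because both factors are, and its unique final state is $(q_0,q_0')$ (a state being final iff both coordinates are), so it is a finite inverse automaton with a basepoint. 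By construction a loop at the basepoint of ${\cal P}$ labelled $u$ exists iff there are loops labelled $u$ at $q_0$ in ${\cal S}(H)$ and at $q_0'$ in ${\cal S}(K)$; hence
\[
L({\cal P}) = L({\cal S}(H)) \cap L({\cal S}(K)).
\]

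Next I would identify $\theta(L({\cal P}))$ with $H\cap K$. By Proposition~\ref{BS1:gwp} we have $L({\cal S}(H))\subseteq H$ and $L({\cal S}(K))\subseteq K$ (reading the inclusions through $\theta$), so $\theta(L({\cal P}))\subseteq H\cap K$. Conversely, if $g\in H\cap K$ then its reduced form $\overline g$ lies in $\overline H\subseteq L({\cal S}(H))$ and in $\overline K\subseteq L({\cal S}(K))$ by Proposition~\ref{BS1:gwp}, whence $\overline g\in L({\cal P})$ and $g=\theta(\overline g)\in\theta(L({\cal P}))$. Thus $H\cap K=\theta(L({\cal P}))$.

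Finally, since ${\cal P}$ is a finite inverse automaton with a basepoint, I would choose a spanning tree $T$ and form the finite set $Y=\{\,g_p\,a\,g_q^{-1}\mid (p,a,q)\in E_+\setminus T\,\}$ exactly as in Proposition~\ref{BS1:basis}; the argument given there, which uses nothing beyond finiteness and the inverse-automaton structure, yields $\langle Y\rangle=\theta(L({\cal P}))=H\cap K$, and finiteness of $Y$ gives $H\cap K\leq_{f.g.}F_A$. The step requiring the most care is the middle one: one must verify not only that the product recognises the intersection of the two languages, but that passing to reduced forms commutes with intersecting the subgroups, i.e.\ that $\overline{H\cap K}$ is exactly the set of reduced words accepted simultaneously by ${\cal S}(H)$ and ${\cal S}(K)$. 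This is precisely where Proposition~\ref{BS1:gwp} is indispensable, translating membership in each subgroup into acceptance of the reduced form; the determinism, trimness and involutive property of ${\cal P}$, and the basis extraction, are then routine once this language identity is in place.
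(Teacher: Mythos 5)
Your proof is correct and follows exactly the first route the paper indicates (it does not write out a proof, but states that Howson's theorem ``admits a simple proof using the direct product of two Stallings automata''); your use of Proposition~\ref{BS1:gwp} to identify $\theta(L({\cal P}))$ with $H\cap K$ and of the spanning-tree argument of Proposition~\ref{BS1:basis} to extract a finite generating set fills in that sketch correctly. The paper's alternative route, via Theorem~\ref{BS1:anisei} and Corollary~\ref{BS1:bencon}(ii), is not what you use, but your argument is the intended elementary one.
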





Stallings automata are also naturally related to the famous Hanna
Neumann conjecture: given $H,K \leq_{f.g.} F_A$, then $\rk(H\cap K) -1
\leq (\rk H-1)(\rk K-1)$. The conjecture arose in a paper of
Hanna \Neumann~\cite{Neumann:1957}, where the inequality $\rk (H\cap
K) -1 \leq 2(\rk H-1)(\rk K-1)$ was also proved. In one of the early
applications of Stallings' approach, \Gersten\ provided an alternative
geometric proof of Hanna Neumann's inequality~\cite{MR695907}.

A \emph{free factor}\index{free~group!free~factor} of a free group
$F_A$ can be defined as a subgroup
$H$ generated by a subset of a basis of $F_A$. This is equivalent to
saying that there exists a \emph{free product decomposition} $F_A = H \ast
K$ for some $K \leq F_A$.

Since the rank of a free factor never exceeds the rank of the ambient
free group, it is easy to construct examples of subgroups which are
not free factors: it follows easily from Proposition~\ref{BS1:basis} that any
free group of rank $\geq 2$ can have subgroups
of arbitrary finite rank (and even infinite countable).

The problem of identifying free factors has a simple
solution based on Stallings automata \cite{Silva&Weil:2008}: one
must check whether or not a prescribed number of vertex
identifications in the Stallings automaton can lead to a bouquet.
However, the most efficient solution, due to \Roig, \Ventura\ and
\Weil~\cite{MR2378055}, involves \Whitehead\ automorphisms and will
therefore be postponed to~\S\chapterref{BS1:tdp}.

Given a morphism $\varphi:{\cal{A}} \to {\cal{B}}$ of inverse automata,
let the {\em morphic image} $\varphi({\cal{A}})$ be the subautomaton of
${\cal{B}}$ induced by the image by $\varphi$ of all the successful
paths of ${\cal{A}}$.

The following classical result characterizes the extensions of $H
\leq_{f.g.} F_A$ contained in $F_A$. We present the proof
from~\cite{MR2395796}:

\begin{theorem}[\Takahasi~\cite{Takahasi:1951}]
\label{BS1:takahasi}
Given $H \leq_{f.g.} F_A$, one can effectively compute finitely many extensions
$K_1, \dots, K_m \leq_{f.g.} F_A$ of $H$ such that the following
conditions are equivalent for every $K \leq_{f.g.} F_A$:
\begin{conditionsiii}
\item[\textup{(i)}] $H \leq K$;
\item[\textup{(ii)}] $K_i$ is a free factor of $K$ for some $i \in \{
  1,\dots, m\}$.
\end{conditionsiii}
\end{theorem}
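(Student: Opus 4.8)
The plan is to work entirely with the Stallings automaton $\mathcal S(H)$ and to produce the finite family $K_1,\dots,K_m$ as the subgroups recognized by the automata obtained from $\mathcal S(H)$ by all possible sequences of vertex identifications. The guiding idea is the following: if $H\leq K$, then by Proposition~\ref{BS1:invmor} there is a (unique) morphism $\varphi:\mathcal S(H)\to\mathcal S(K)$, since $\overline H\subseteq \overline K$ gives $L(\mathcal S(H))\subseteq L(\mathcal S(K))$. The morphic image $\varphi(\mathcal S(H))$ is then a subautomaton of $\mathcal S(K)$ that is itself a quotient of $\mathcal S(H)$ obtained by identifying vertices (those $p,q$ with $\varphi(p)=\varphi(q)$). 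So every overgroup $K$ of $H$ arises, at least partially, through a quotient of $\mathcal S(H)$, and there are only finitely many such quotients since $\mathcal S(H)$ is finite.

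Concretely, I would enumerate all the finite inverse automata $\mathcal A_1,\dots,\mathcal A_m$ that can be obtained from $\mathcal S(H)$ by performing a (possibly empty) sequence of vertex identifications followed by Stallings foldings to restore determinism; there are finitely many because each identification reduces the number of vertices. Each $\mathcal A_i$ is a finite inverse automaton with a basepoint and hence, by Proposition~\ref{BS1:whoarethey}, is the Stallings automaton of a finitely generated subgroup $K_i\leq_{f.g.} F_A$, whose basis is read off via Proposition~\ref{BS1:basis}. Since $\mathcal A_i$ is a quotient of $\mathcal S(H)$, the evident morphism $\mathcal S(H)\to\mathcal A_i$ gives $L(\mathcal S(H))\subseteq L(\mathcal A_i)=L(\mathcal S(K_i))$, so $H\leq K_i$ by Proposition~\ref{BS1:gwp}; thus each $K_i$ is genuinely an extension of $H$, and the list is effectively computable.

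For the equivalence, the direction (ii)$\Rightarrow$(i) is easy: if $K_i$ is a free factor of $K$ then $K_i\leq K$, and since $H\leq K_i$ we get $H\leq K$. The substance is (i)$\Rightarrow$(ii). Given $K\leq_{f.g.}F_A$ with $H\leq K$, form the morphism $\varphi:\mathcal S(H)\to\mathcal S(K)$ and let $\mathcal B=\varphi(\mathcal S(H))$ be its morphic image. Being a vertex-identification quotient of $\mathcal S(H)$, the automaton $\mathcal B$ is one of the $\mathcal A_i$, say $\mathcal B=\mathcal S(K_i)$; and $\mathcal B$ is a subautomaton of $\mathcal S(K)$ sharing its basepoint. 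The key claim to establish is that when a connected inverse subautomaton $\mathcal B$ of $\mathcal S(K)$ contains the basepoint, the subgroup it recognizes is a free factor of $K$. This one proves by choosing a spanning tree of $\mathcal B$, extending it to a spanning tree of $\mathcal S(K)$, and invoking Proposition~\ref{BS1:basis}: the basis elements $g_pag_q^{-1}$ associated with edges of $\mathcal B$ outside the tree form part of the Nielsen--Schreier basis of $K$, so they generate a free factor, namely $K_i$.

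The main obstacle is precisely this last claim --- that a basepoint-containing inverse subautomaton of a Stallings automaton yields a free factor. The delicate point is that $\varphi$ need not be injective on vertices, so $\mathcal B$ is a genuine quotient of $\mathcal S(H)$ rather than a literal copy, and one must verify that the basis of $K_i$ read from $\mathcal B$ extends to a basis of $K$ (equivalently, that the inclusion $\mathcal B\hookrightarrow\mathcal S(K)$ induces a free-factor inclusion). The cleanest way to handle this is to exploit the compatibility of spanning-tree choices: since any spanning tree of the subautomaton $\mathcal B$ extends to one of $\mathcal S(K)$, the geodesics $g_p$ used to build the basis of $K_i$ agree with those used for $K$, so the $Y$-basis of $K_i$ is literally a subset of the $Y$-basis of $K$ produced by Proposition~\ref{BS1:basis}, and a subset of a basis generates a free factor by definition.
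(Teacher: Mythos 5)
Your proposal is correct and follows essentially the same route as the paper: take the finitely many morphic images (equivalently, folded vertex-identification quotients) of $\mathcal S(H)$, recognize each as $\mathcal S(K_i)$ via Proposition~\ref{BS1:whoarethey}, and for (i)$\Rightarrow$(ii) realize the relevant $\mathcal S(K_i)$ as a basepoint-containing subautomaton of $\mathcal S(K)$ and extend a spanning tree to conclude via Proposition~\ref{BS1:basis} that $K_i$ is a free factor of $K$. The paper's argument is exactly this, stated more tersely; your added discussion of why the quotient/subautomaton subtlety is harmless is a correct elaboration rather than a deviation.
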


\begin{proof}
Let ${\cal{A}}_1, \dots, {\cal{A}}_m$ denote all the morphic images
of ${\cal{S}}(H)$, up to isomorphism. Since a morphic image cannot
have more vertices than the original automaton, there are only
finitely many isomorphism classes. Moreover, it follows from Proposition~\ref{BS1:whoarethey}
that, for $i = 1,\dots, m$, ${\cal{A}}_i = {\cal{S}}(K_i)$ for some
$K_i \leq_{f.g.} F_A$. Since $L({\cal{S}}(H)) \subseteq L({\cal{A}}_i)
= L({\cal{S}}(K_i))$, it follows from Proposition~\ref{BS1:gwp} that $H \leq
K_i$. Clearly, we can construct all ${\cal{A}}_i$ and therefore all $K_i$.

(i) $\Rightarrow$ (ii).  If $H \leq K$, it follows from Stallings'
construction that $L({\cal{S}}(H))$ $\subseteq L({\cal{S}}(K))$ and so
there is a morphism $\varphi:{\cal{S}}(H) \to {\cal{S}}(K)$ by
Proposition~\ref{BS1:invmor}. Let ${\cal{A}}_i$ be, up to isomorphism,
the morphic image of ${\cal{S}}(H)$ through $\varphi$. Since
${\cal{A}}_i = {\cal{S}}(K_i)$ is a subautomaton of ${\cal{S}}(K)$, it
follows easily from Proposition~\ref{BS1:basis} that $K_i$ is a free
factor of $K$: it suffices to take a spanning tree for
${\cal{S}}(K_i)$, extend it to a spanning tree for ${\cal{S}}(K)$, and
the induced basis of $K_i$ will be contained in the induced basis of
$K$.

(ii) $\Rightarrow$ (i) is immediate.
\end{proof}

An interesting research line related to this result is built on the
concept of algebraic extension, introduced by
\Kapovich\ and \Miasnikov\ \cite{MR1882114}, and inspired by the homonymous
field-theoretical classical notion. Given $H \leq K \leq F_A$, we say
that $K$ is an \emph{algebraic}
extension\index{extension!algebraic}\index{algebraic~extension} of $H$
if no proper free factor of $K$ contains $H$. \Miasnikov, \Ventura\
and \Weil\ \cite{MR2395796} proved that the set of algebraic
extensions of $H$ is finite and effectively computable, and it
constitutes the minimum set of extensions $K_1, \dots, K_m$ satisfying
the conditions of Theorem~\ref{BS1:takahasi}.

Consider a subgroup $H$ of a group $G$. The
\emph{commensurator}\index{commensurator~of~group} of $H$ in $G$, is
\begin{equation}\label{BS1:eq:commensurator}
  \operatorname{Comm}_G(H)=\{g\in G\mid H\cap H^g\text{ has finite index in }H\text{
  and }H^g\}.
\end{equation}
For example, the commensurator of $\GL_n(\Z)$ in $\GL_n(\R)$ is
$\GL_n(\Q)$.

The special case of finite-index
extensions\index{extension!finite-index}, $H \leq_{f.i.} K \leq F_A$
is of special interest, and can be interpreted in terms of
commensurators. It can be proved (see~\cite[Lemma 8.7]{MR1882114}
and~\cite{Silva&Weil:2008a}) that every $H \leq_{f.g.} F_A$ has a
maximum finite-index extension inside $F_A$, denoted by $H_{fi}$; and
$H_{fi}=\operatorname{Comm}_{F_A}(H)$.  \Silva\ and \Weil\
\cite{Silva&Weil:2008a} proved
 that ${\cal{S}}(H_{fi})$ can be constructed
from ${\cal{S}}(H)$ using a simple automata-theoretic algorithm:
\begin{enumerate}
\item
The standard minimization algorithm is applied to the core of
${\cal{S}}(H)$, \emph{taking all vertices as final}.
\item
The original tail of ${\cal{S}}(H)$ is subsequently reinstated in this new
automaton, at the appropriate vertex.
\end{enumerate}

We present now an application of different type, involving transition
monoids. It follows easily from the definitions that the transition
monoid\index{transition~monoid}\index{monoid!transition} of a finite
inverse automaton is always a \emph{finite inverse
  monoid}\index{monoid!inverse}.  Given a group $G$, we say that a
subgroup $H \leq G$ is
\emph{pure}\index{subgroup!pure@[$p$-]pure}\index{pure@[$p$-]pure~subgroup}
if the implication
\begin{equation}
\label{BS1:pure}
g^n \in H \Rightarrow g \in H
\end{equation}
holds for all $g \in F_A$ and $n \geq 1$. If $p$ is a prime, we say
that $H$ is $p$-\emph{pure} if~\eqref{BS1:pure} holds when $(n,p)=1$.

The next result is due to \Birget, \Margolis, \Meakin\ and \Weil, and
is the only natural problem among applications of Stallings automata
that is known so far to be PSPACE-complete~\cite{MR1769781}.

\begin{proposition}
\label{BS1:apsynt}
For every $H \leq_{f.g.} F_A$, the following conditions hold:
\begin{conditionsiii}
\item[\textup{(i)}] $H$ is pure if and only if the transition monoid of
  ${\cal{S}}(H)$ is aperiodic\index{monoid!aperiodic}\index{aperiodic~monoid}.
\item[\textup{(ii)}] $H$ is $p$-pure if and only if the transition monoid of
  ${\cal{S}}(H)$ has no subgroups of order $p$.
\end{conditionsiii}
\end{proposition}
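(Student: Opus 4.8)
The plan is to pass from membership in $H$ to the action of the transition monoid $M$ of ${\cal S}(H)$ on its state set $Q$, and then to read off (non)aperiodicity — respectively the presence of a subgroup of order $p$ — from the cycle structure of single elements of $M$. First I would fix the dictionary between $H$ and $M$. For $g\in F_A$ write $\alpha_g$ for the partial bijection of $Q$ obtained by reading any reduced word representing $g$; by Proposition~\ref{BS1:invaut} this is well defined, and $g\mapsto\alpha_g$ is a monoid morphism from $F_A$ onto $M$, so that $\alpha_{g^n}=\alpha_g^n$. Since the basepoint $q_0$ is at once the initial and the unique final state, Proposition~\ref{BS1:gwp} yields $g\in H\iff\alpha_g(q_0)=q_0$, whence $g^n\in H\iff\alpha_g^n(q_0)=q_0$. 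The whole argument then rests on \emph{transporting cycles to the basepoint}: as ${\cal S}(H)$ is trim, every state $q$ is accessible, so there is $u\in\widetilde A^*$ with $\alpha_u(q_0)=q$ and hence $\alpha_{u^{-1}}(q)=q_0$; as $\alpha_{u^{-1}}$ is injective, it maps no state other than $q$ to $q_0$.

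For (i), suppose some $\alpha_g$ has a nontrivial cycle, say $\alpha_g^d(q)=q$ with $d>1$ least. Put $x=ugu^{-1}$. Reading $x^d=ug^du^{-1}$ carries $q_0\mapsto q\mapsto q\mapsto q_0$, while reading $x$ carries $q_0\mapsto q\mapsto\alpha_g(q)\neq q$, which is not sent back to $q_0$; thus $x^d\in H$ but $x\notin H$, so $H$ is not pure. Conversely, if $H$ is not pure, pick $x\notin H$ with $x^n\in H$; then $\alpha_x(q_0)\neq q_0$ while $\alpha_x^n(q_0)=q_0$, so $q_0$ lies on a cycle of $\alpha_x$ of length $d\mid n$ with $d>1$. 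Hence $H$ is pure iff no element of $M$ has a nontrivial cycle. Finally, a finite monoid of partial bijections is aperiodic exactly in this case: a nontrivial $d$-cycle of $\alpha_g$ produces a cyclic group of order $\geq d$ inside $\langle\alpha_g\rangle$, and conversely any nontrivial subgroup of $M$ contains an element of prime order, which restricts to a nontrivial permutation of the stable set of its idempotent and so has a nontrivial cycle. This proves~(i).

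For (ii) I would rerun the same argument while tracking cycle \emph{lengths}. By Cauchy's theorem, $M$ has a subgroup of order $p$ iff some maximal subgroup of $M$ has order divisible by $p$, which happens iff some $\alpha_g$ has a cycle whose length is divisible by $p$; replacing $g$ by a suitable power $g^k$ (so $\alpha_{g^k}=\alpha_g^k$) one may assume a cycle of length exactly $p$. Transporting such a $p$-cycle to $q_0$ as above produces $x\notin H$ with $x^p\in H$ (and $x,\dots,x^{p-1}\notin H$), witnessing the failure of $p$-purity; conversely a failure of $p$-purity yields, at $q_0$, a nontrivial cycle whose length divides a power of $p$, hence a cycle of length divisible by $p$, hence an element of order $p$ in $M$. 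This gives~(ii).

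The step I expect to be most delicate is matching the arithmetic in~(ii): one must ensure that the order-$p$ phenomenon in $M$ corresponds \emph{precisely} to a $p$-th root obstruction for $H$, passing cleanly between ``cycle length divisible by $p$'', ``cycle length exactly $p$'' (via the powers $\alpha_g^k$), ``element of order $p$'', and ``subgroup of order $p$'' (via Cauchy). A secondary point, already present in~(i), is to certify $x\notin H$ when transporting a cycle: here one uses that $\alpha_{u^{-1}}$ is a \emph{partial} injection, so $\alpha_{u^{-1}}(\alpha_g(q))$ is either distinct from $q_0$ or simply undefined, and in both cases $x\notin H$.
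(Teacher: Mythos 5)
Your overall strategy --- reducing both purity of $H$ and aperiodicity of the transition monoid to the presence of a nontrivial cycle in ${\cal S}(H)$ --- is the same as the paper's (whose proof is only a one-line sketch asserting exactly this equivalence), and your handling of the monoid side (a finite monoid of partial injections is aperiodic iff no element has a nontrivial cycle; Cauchy's theorem for (ii)) is fine. The genuine gap is on the group side, and it originates in the claim that $g\mapsto\alpha_g$ is a monoid morphism with $\alpha_{g^n}=\alpha_g^n$. It is not: reading reduced representatives only gives the containment $\alpha_g^n\subseteq\alpha_{g^n}$ as partial maps, and it can be strict. Consequently the implication $g^n\in H\Rightarrow\alpha_g^n(q_0)=q_0$, on which the converse directions of both (i) and (ii) rest, is false. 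Concretely, take $H=\langle ba^2b^{-1}\rangle\leq F_{\{a,b\}}$ and $x=bab^{-1}$: then $x^2\in H$ and $x\notin H$, but $\alpha_x(q_0)$ is already undefined (the state reached after reading $ba$ has no outgoing $b^{-1}$-edge), so $q_0$ lies on no cycle of $\alpha_x$ at all; the cycle witnessing non-aperiodicity is the $2$-cycle of $\alpha_a$ on the two non-basepoint states. The repair is to cyclically reduce the witness: writing $x=vwv^{-1}$ with $w$ cyclically reduced, $x^n\in H$ forces $p=q_0\cdot v$ to be defined and the reduced word $w^n$ to label a loop at $p$, so $\alpha_w$ has a cycle through $p$ of length $d$ dividing $n$, and $d=1$ would force $x\in H$. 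This is exactly why the paper states its cycle criterion at an arbitrary pair of \emph{distinct} states with a \emph{cyclically reduced} label, rather than at the basepoint.

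Two secondary points. In the forward direction your certificate that $x=ugu^{-1}\notin H$ treats the case where $\alpha_{u^{-1}}(\alpha_g(q))$ is undefined too quickly: membership in $H$ is tested on the \emph{reduced} word $\overline{ugu^{-1}}$, and an undefined unreduced reading does not by itself prevent the reduced word from labelling a loop. The claim is true, but it requires tracking the cancellation (write $\overline{ug}=w\gamma^{-1}$ and $u^{-1}=\gamma\delta$; the reduced word $w\delta$ gets stuck at the same state at which $\delta$ is undefined). Finally, in (ii) your witness ``$x\notin H$ with $x^p\in H$'' and your phrase ``cycle whose length divides a power of $p$'' presuppose that $p$-purity means closure under $p$-power roots; the chapter's stated definition instead restricts the implication $g^n\in H\Rightarrow g\in H$ to exponents with $(n,p)=1$, under which your witness establishes nothing (and, as the example above shows with $p=2$, the equivalence in (ii) would in fact fail). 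You should make explicit which notion of $p$-purity you are proving the statement for.
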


\begin{proof}
Both conditions in (i) are easily proved to be equivalent to the
nonexistence in ${\cal{S}}(H)$ of a cycle of the form
$$\begin{fsa}
\node[state] (p) at (-3,0) {$p$};
\node[state] (q) at (-1,0) {$q$};
\node at (2,0) {$(k \geq 1, p \neq q)$};
\path (p) edge[bend left] node {$u$} (q);
\path (q) edge[bend left] node[below] {$u^k$} (p);
\end{fsa}$$
where $u$ can be assumed to be cyclically reduced.
The proof of (ii) runs similarly.
\end{proof}

\subsection{Topological properties}

We require for this subsection some basic topological concepts, which the
reader can recover from Chapter~17.

For all $u,v \in F_A$, written in reduced form as elements of $R_A$,
let $u \wedge v$ denote the longest common prefix of $u$ and $v$. The
\emph{prefix metric}\index{prefix~metric}\index{distance!prefix~metric} $d$ on $F_A$ is defined, for
all $u,v \in F_A$,
by
\begin{displaymath}
d(u,v) = \left\{
\begin{array}{ll}
2^{-|u \wedge v|-1}&\mbox{ if }u \neq v\\
0&\mbox{ if }u = v
\end{array}
\right.
\end{displaymath}
It follows easily from the definition
that $d$ is an
ultrametric on $F_A$, satisfying in particular the axiom
\begin{displaymath}
d(u,v) \leq \max\{ d(u,w), d(w,v)
\}\, .
\end{displaymath}
The \emph{completion} of this metric space is compact; its extra
elements are \emph{infinite reduced words} $a_1a_2a_3\dots$, with all
$a_i \in \widetilde{A}$, and constitute the \emph{hyperbolic
  boundary}\index{hyperbolic!boundary}\index{boundary!hyperbolic}
$\partial F_A$ of $F_A$, see~\S\ref{BS2:sec:hypgroups}. Extending the
operator $\wedge$ to $F_A\cup\partial F_A$ in the obvious way, it
follows easily from the definitions that, for every infinite reduced
word $\alpha$ and every sequence $(u_n)_n$ in $F_A$,
\begin{equation}
\label{BS1:limits}
\alpha = \displaystyle\lim_{n\to +\infty} u_n \hspace{1cm}\mbox{ if and only if }
\hspace{1cm} \displaystyle\lim_{n\to +\infty} |\alpha \wedge u_n| =
+\infty\, .
\end{equation}

The next result shows that Stallings automata are given a new role in
connection with the prefix metric. We denote by $\cl H$ the closure of
$H$ in the completion of $F_A$.

\begin{proposition}
\label{BS1:clogeo}
If $H \leq_{f.g.} F_A$, then $\cl H$ is the union of $H$ with the set of all
$\alpha \in \partial F_A$ that label paths in ${\cal{S}}(H)$ out of the
basepoint.
\end{proposition}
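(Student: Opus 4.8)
The plan is to split $\cl H$ according to whether its points lie in $F_A$ or in $\partial F_A$, treating the boundary part through the convergence criterion~\eqref{BS1:limits}. First I would observe that $\cl H \cap F_A = H$: if a sequence $(h_n)$ in $H$ converged to some $u \in F_A$, then $|u \wedge \overline{h_n}| \le |u|$ would stay bounded, so by~\eqref{BS1:limits} we would need $\overline{h_n} = \overline u$ eventually, whence $u \in H$. Thus the only new points of $\cl H$ lie in $\partial F_A$. Writing $\alpha = a_1a_2\cdots \in \partial F_A$ and $\alpha_N = a_1\cdots a_N$, the criterion~\eqref{BS1:limits} says precisely that $\alpha \in \cl H$ if and only if for every $N$ there is some $h \in H$ with $\alpha_N$ a prefix of $\overline h$.

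For the direction that such a boundary point labels a path, I would use determinism. If $\alpha \in \cl H$, then for each $N$ some $\overline h \in L({\cal S}(H))$ (using Proposition~\ref{BS1:gwp}) has $\alpha_N$ as a prefix; since $\overline h$ is reduced, reading $\alpha_N$ from $q_0$ traces the initial segment of the successful path read for $\overline h$, reaching a vertex $q_N$. As ${\cal S}(H)$ is deterministic, $q_N$ depends only on $\alpha_N$, and the segment for $\alpha_{N+1}$ extends the one for $\alpha_N$ by the single edge $a_{N+1}$. These segments therefore glue into a single infinite path $q_0 \mapright{a_1} q_1 \mapright{a_2} q_2 \cdots$ labelled by $\alpha$.

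For the converse I would close up prefixes into loops while controlling cancellation, which I expect to be the main obstacle. Suppose $\alpha$ labels an infinite path $q_0 \mapright{a_1} q_1 \mapright{a_2} \cdots$, and let $m$ be the number of vertices of ${\cal S}(H)$. For each $N$ the vertex $q_N$ is co-accessible since ${\cal S}(H)$ is trim, so there is a path $q_N \to q_0$; taking a geodesic yields a reduced path $q_N \mapright{w} q_0$ with $|w| < m$. Then $\alpha_N w$ labels a loop at $q_0$, hence lies in $L({\cal S}(H))$; by Proposition~\ref{BS1:invaut} its reduced form lies there too, so $\alpha_N w \in H$ by Proposition~\ref{BS1:gwp}. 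The reduction of $\alpha_N w$ can only cancel a suffix of $\alpha_N$ against a prefix of $w$, and since $|w| < m$ at most $m$ letters are lost, so $\overline{\alpha_N w}$ and $\alpha$ agree on a prefix of length at least $N-m$.

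Combining these, the key quantitative point is that the geodesic return path has length bounded by $m$ independently of $N$, so the common prefix $|\alpha \wedge \overline{\alpha_N w}| \ge N-m$ tends to infinity. Thus the elements $\alpha_N w \in H$ converge to $\alpha$ by~\eqref{BS1:limits}, giving $\alpha \in \cl H$. Together with the first paragraph, the two inclusions yield that $\cl H$ is the union of $H$ with the set of $\alpha \in \partial F_A$ labelling infinite paths out of the basepoint of ${\cal S}(H)$, as claimed.
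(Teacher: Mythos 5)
Your proof is correct and follows essentially the same route as the paper's: the identification $\cl H\cap F_A=H$ via discreteness, the forward inclusion via the convergence criterion~\eqref{BS1:limits} (the paper states it contrapositively, you directly with determinism), and the reverse inclusion by appending return words $w$ of length $<m$ to the prefixes $\alpha_N$. Your additional detail on bounding the cancellation in $\overline{\alpha_N w}$ just makes explicit what the paper leaves implicit.
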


\begin{proof}
Since the topology of $F_A$ is discrete, we have
$\cl H \cap F_A = H$.

$(\subseteq)$: If $\alpha
\in \partial F_A$ does not label a path in ${\cal{S}}(H)$ out of the
basepoint, then $\{ |\alpha \wedge h|:\; h \in H\}$ is finite and so
no sequence of $H$ can converge to $\alpha$ by~\eqref{BS1:limits}.

$(\supseteq)$: Let $\alpha = a_1a_2a_3\dots
\in \partial F_A$, with $a_i \in \widetilde{A}$,
label a path in ${\cal{S}}(H)$ out of the
basepoint. Let $m$ be the number of vertices of ${\cal{S}}(H)$. For
every $n \geq 1$, there exists some word $w_n$ of length $< m$ such
that $a_1\cdots a_nw_n \in H$. Now $\alpha = \lim_{n\to +\infty}
a_1\cdots a_nw_n$ by~\eqref{BS1:limits} and so $\alpha \in \cl H$.
\end{proof}

The \emph{profinite topology} on $F_A$ is defined in Chapter~17: for
every $u \in F_A$, the collection $\{ Ku \mid K \leq_{f.i.} F_A\}$
constitutes a basis of clopen neighbourhoods of $u$.  In his seminal
1983 paper~\cite{MR695906}, Stallings gave an alternative proof of
Marshall \Hall's Theorem:

\begin{theorem}[M.\ Hall]
\label{BS1:halls}
Every finitely generated subgroup of $F_A$ is closed for the profinite
topology.
\end{theorem}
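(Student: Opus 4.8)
The plan is to use the characterisation of the profinite topology via finite-index subgroups, together with Stallings automata, to \emph{separate} from $H$ any element not lying in it. Recall that the cosets of finite-index subgroups form a basis of clopen sets, so a finite-index subgroup $K$ is itself clopen. Hence it suffices to show: for every $g\in F_A\setminus H$ there exists $K\leq_{f.i.}F_A$ with $H\leq K$ and $g\notin K$. Granting this, $F_A\setminus K$ is an open neighbourhood of $g$ disjoint from $K\supseteq H$, so $g\notin\cl H$; as $g$ ranges over $F_A\setminus H$ we conclude $\cl H=H$. Fix then $g\notin H$ and set $w=\overline{g}$. By Proposition~\ref{BS1:gwp} we have $w\notin L({\cal S}(H))$, which is exactly what must be preserved while we enlarge ${\cal S}(H)$ into the Stallings automaton of a finite-index subgroup.

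First I would build an intermediate finite inverse automaton ${\cal B}$ containing ${\cal S}(H)$ as a subautomaton, with the same basepoint $q_0$, in which the path labelled $w$ starting at $q_0$ is \emph{totally defined} and ends at a vertex distinct from $q_0$. To do this, read $w$ in ${\cal S}(H)$ from $q_0$ as far as the transitions are defined. If the whole of $w$ is read and lands at a vertex $p$, then $p\neq q_0$ (otherwise $w\in L({\cal S}(H))$), and we set ${\cal B}={\cal S}(H)$. Otherwise, at the first undefined transition I attach a fresh simple tail of new vertices spelling the remaining suffix of $w$ (together with the inverse edges). Because $w$ is reduced, consecutive letters never cancel, so along these fresh vertices no two outgoing edges share a label: ${\cal B}$ stays deterministic and involutive, and remains trim since the new vertices are both reachable from $q_0$ and, via the inverse edges, lead back to $q_0$. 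The endpoint of the tail is a new vertex, hence $\neq q_0$, and by determinism it is the unique $w$-labelled path from $q_0$; thus $w\notin L({\cal B})$.

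Next I would complete ${\cal B}$ to a complete finite inverse automaton ${\cal A}$. Each letter $a\in A$ acts on the finite vertex set as a partial bijection; since its domain and image have the same cardinality, it extends to a genuine permutation of that vertex set, and doing this independently for every letter (taking inverse letters to act by the inverse permutations) yields a complete, deterministic, involutive automaton. Restricting to the part reachable from $q_0$ keeps it trim, since in an inverse automaton a path $q_0\mapright{u}q$ is reversed by the path $q\mapright{u^{-1}}q_0$, so reachability and co-reachability from $q_0$ coincide. The crucial point is that completion only adds transitions that were previously \emph{undefined}; along the already-total $w$-labelled path of ${\cal B}$ nothing changes, so this path still ends off $q_0$ in ${\cal A}$, giving $w\notin L({\cal A})$. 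This is the step I expect to be the main obstacle, and it is precisely what the fresh-tail construction of ${\cal B}$ is designed to secure: naively completing ${\cal S}(H)$ could reconnect the $w$-path to $q_0$, whereas here the path is frozen before completion.

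Finally I would identify ${\cal A}$ as a genuine Stallings automaton and read off the conclusion. Being complete with at least one letter, every vertex of ${\cal A}$ has outdegree $\ge 2$, so Proposition~\ref{BS1:whoarethey} shows ${\cal A}={\cal S}(K)$ for some $K\leq_{f.g.}F_A$, and completeness gives $K\leq_{f.i.}F_A$ by Proposition~\ref{BS1:findex}. Since ${\cal S}(H)$ is a subautomaton of ${\cal A}$ we have $L({\cal S}(H))\subseteq L({\cal S}(K))$, whence $H\leq K$ by Proposition~\ref{BS1:gwp}; and $w\notin L({\cal S}(K))$ gives $g\notin K$ by the same proposition. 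This produces the required finite-index $K$ with $H\leq K$ and $g\notin K$, completing the separation and hence the proof.
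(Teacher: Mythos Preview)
Your proof is correct and follows essentially the same route as the paper's: attach a fresh tail to ${\cal S}(H)$ so that the reduced word $w=\overline g$ is fully readable to a non-basepoint vertex, then complete the resulting inverse automaton to obtain ${\cal S}(K)$ for some $K\leq_{f.i.}F_A$ with $H\leq K$ and $g\notin K$. The only cosmetic difference is that the paper separates $g$ from $H$ using the open coset $Kg$ (and is terser about why the intermediate constructions remain inverse), whereas you use the open complement $F_A\setminus K$ and spell out the determinism and trimness checks more carefully.
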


\begin{proof}
  Fix $H \leq_{f.g.} F_A$ and let $u \in F_A \setminus H$ be written
  in reduced form as an element of $R_A$. In view of Proposition
 ~\ref{BS1:gwp}, $u$ does not label a loop at the basepoint $q_0$ of
  ${\cal{S}}(H)$. If there is no path $q_0 \mapright{u} \cdots$ in
  ${\cal{S}}(H)$, we add new edges to ${\cal{S}}(H)$ to get a finite
  inverse automaton ${\cal{A}}$ having a path $q_0 \mapright{u} q \neq
  q_0$. Otherwise just take ${\cal{A}} = {\cal{S}}(H)$. Next add new
  edges to ${\cal{A}}$ to get a finite complete inverse automaton
  ${\cal{B}}$. In view of Propositions~\ref{BS1:findex} and
 ~\ref{BS1:whoarethey}, we have ${\cal{B}} = {\cal{S}}(K)$ for some $K
  \leq_{f.i.} F_A$. Hence $Ku$ is open and contains $u$. Since $H \cap
  Ku \neq \emptyset$ yields $u \in K^{-1}H = K$, contradicting
  Proposition~\ref{BS1:gwp}, it follows that $H \cap Ku = \emptyset$
  and so $H$ is closed as claimed.
\end{proof}

\begin{example}
\label{BS1:exhall}
We consider the above construction for $H = \langle a^{-1}ba, ba^2
\rangle$ and $u = b^2$:
$$\begin{fsa}
\node[state] (p_0) at (-4.2,0) {$q_0$};
\node[state] (EE) at (-1.8,0) {};
\node[state] (NNEE) at (-3,2) {};
\node at (-5,1) {${\cal{S}}(H) =$};
\path (p_0) edge node[below] {$b$} (EE);
\path (NNEE) edge node[above] {$a$} (p_0) edge[loop right] node {$b$} ();
\path (EE) edge node[right] {$a$} (NNEE);
\node[state] (W) at (1.8,0) {$q_0$};
\node[state] (q_0) at (4.2,0) {};
\node[state] (N) at (3,2) {};
\node[state] (x) at (6,0) {};
\node at (1,1) {${\cal{A}} =$};
\path (W) edge node[below] {$b$} (q_0);
\path (N) edge node[above] {$a$} (W) edge[loop right] node {$b$} ();
\path (q_0) edge node[right] {$a$} (N) edge node[below] {$b$} (x);
\end{fsa}$$
$$\begin{fsa}
\node[state] (WW) at (-2.4,0) {$q_0$};
\node[state] (W) at (0,0) {};
\node[state] (q_0) at (2.4,0) {};
\node[state] (N) at (-1.2,2) {};
\node at (-3,1) {${\cal{B}} =$};
\path (WW) edge[bend left] node {$a$} (W) edge[dotted] node[below] {$b$} (W);
\path (N) edge node[above] {$a$} (WW) edge[loop right] node {$b$} ();
\path (W) edge[dotted] node {$b$} (q_0) edge[dotted] node[right] {$a$} (N);
\path (q_0) edge[bend left] node[below] {$b$} (WW) edge[loop right]
node {$a$} ();
\end{fsa}$$
If we take the spanning tree defined by the dotted lines in
${\cal{B}}$, it follows from Proposition~\ref{BS1:basis} that
\begin{displaymath}
K = \langle
ba^{-1}, b^3, b^2ab^{-2}, ba^2, baba^{-1}b^{-1}
\rangle
\end{displaymath}
is a finite index subgroup of $F_2$ such that $H \cap Kb^2 =
\emptyset$.
\end{example}

We recall that a group $G$ is \emph{residually
  finite}\index{group!residually~finite}\index{residually~finite~group}
if its finite index subgroups have trivial intersection. Considering
the trivial subgroup in Theorem~\ref{BS1:halls}, we deduce
\begin{corollary}
\label{BS1:rfin}
$F_A$ is residually finite\index{free~group!is~residually~finite}.
\end{corollary}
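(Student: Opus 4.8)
The plan is to apply Theorem~\ref{BS1:halls} to the smallest possible subgroup, namely the trivial subgroup $\{\one\}$, which is of course finitely generated (it equals $\langle\emptyset\rangle$). By the definition recalled just before the statement, residual finiteness of $F_A$ amounts to showing that $\bigcap\{K \mid K \leq_{f.i.} F_A\} = \{\one\}$, so it suffices to produce, for every $g \in F_A$ with $g \neq \one$, a finite index subgroup $K$ with $g \notin K$.

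First I would invoke Theorem~\ref{BS1:halls} with $H = \{\one\}$ to conclude that $\{\one\}$ is closed in the profinite topology; equivalently, $F_A \setminus \{\one\}$ is open. Fixing $g \neq \one$, the element $g$ then lies in this open set, so by the basis of clopen neighbourhoods $\{Ku \mid K \leq_{f.i.} F_A\}$ of $g$ described before the theorem, there exist $K \leq_{f.i.} F_A$ and $u \in F_A$ with $g \in Ku \subseteq F_A \setminus \{\one\}$.

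Next I would extract the desired subgroup from this neighbourhood. Since $g \in Ku$, the coset $Ku$ equals $Kg$; and $Kg \subseteq F_A \setminus \{\one\}$ says precisely that $\one \notin Kg$, i.e.\ $g^{-1} \notin K$, which for the subgroup $K$ is the same as $g \notin K$. Thus $g$ escapes the finite index subgroup $K$, and since $g$ was an arbitrary nontrivial element, the intersection of all finite index subgroups is trivial, which is the corollary.

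The argument is essentially immediate once Theorem~\ref{BS1:halls} is available; the only point requiring a moment's care is the last translation, where one checks that a basic profinite neighbourhood of $g$ missing $\one$ forces $g$ to lie outside the corresponding finite index subgroup. Alternatively, one may bypass the topological language entirely and read off the separating subgroup directly from the construction in the proof of Theorem~\ref{BS1:halls}: applied to $H = \{\one\}$ and the reduced word $u = g$, that construction builds a complete finite inverse automaton ${\cal S}(K)$ in which $g$ labels a path leaving the basepoint, whence $g \notin K$ by Proposition~\ref{BS1:gwp}.
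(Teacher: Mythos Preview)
Your proof is correct and follows exactly the approach the paper indicates: apply Theorem~\ref{BS1:halls} to the trivial subgroup and unwind what closedness of $\{\one\}$ in the profinite topology means. The paper's own justification is just the clause ``Considering the trivial subgroup in Theorem~\ref{BS1:halls}, we deduce'' preceding the corollary, so you have simply spelled out the details of that one-line remark.
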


We remark that \Ribes\ and \Zalessky\ extended Theorem~\ref{BS1:halls}
to products of finitely many finitely generated subgroups of $F_A$,
see~\cite{Ribes&Zalesskii:1993}. This result is deeply connected to
the solution of \Rhodes'\ Type II conjecture, see~\cite[Chapter
4]{Rhodes&Steinberg:2009}.

If {\bf V} denotes a pseudovariety of finite groups (see Chapter~16),
the \emph{pro-}{\bf V}
\emph{topology}\index{topology!pro-V@pro-\textbf{V}} on $F_A$ is defined
by considering that each $u \in F_A$ has
\begin{displaymath}
\{ Ku \mid K \unlhd_{f.i.} F_A,\; F_A/K \in {\bf V} \}
\end{displaymath}
as a basis of clopen neighbourhoods.  The closure for the pro-{\bf V}
topology of $H \leq_{f.g} F_A$ can be related to an extension property
of ${\cal{S}}(H)$, and \Margolis, \Sapir\ and \Weil\ used automata to
prove that efficient computation can be achieved for the
pseudovarieties of finite
$p$-groups\index{group!p-@$p$-}\index{p-group@$p$-group} and finite
nilpotent\index{group!nilpotent}\index{nilpotent~group}
groups~\cite{Margolis&Sapir&Weil:2001}. The original computability
proof for the $p$-group case is due to \Ribes\ and
\Zalessky~\cite{Ribes&Zalesskii:1994}.

\subsection{Dynamical properties}
\label{BS1:tdp}

We shall mention briefly some examples of applications of Stallings
automata to the study of endomorphism dynamics, starting with
\Gersten's solution of the subgroup orbit problem~\cite{MR733696}.

The subgroup orbit problem consists in finding an algorithm to decide,
for given $H,K$ $\leq_{f.g.} F_A$, whether or not $K = \varphi(H)$ for
some automorphism $\varphi$ of $F_A$. Equivalently, this can be
described as deciding whether or not the automorphic orbit of a
finitely generated subgroup is recursive.

Gersten's solution adapts to the context of Stallings automata
\Whitehead's\ idea to solve the orbit problem for
words~\cite{Whitehead:1936}.  Whitehead's proof relies on a suitable
decomposition of automorphisms as products of elementary factors
(which became known as \emph{Whitehead
  automorphisms}\index{automorphism!Whitehead}), and on using these as
a tool to compute the elements of minimum length in the automorphic
orbit of the word. In the subgroup case, word length is replaced by
the number of vertices of the Stallings automaton.

The most efficient solution to the problem of identifying free
factors~\cite{MR2378055}, mentioned
in~\S\chapterref{BS1:fap}, also relies on this approach: $H
\leq_{f.g.} F_A$ is a free factor if and only if the Stallings
automaton of some automorphic image of $H$ has a single vertex (that
is, a bouquet).

Another very nice application is given by the following theorem of
\Goldstein\ and \Turner~\cite{MR847985}:
\begin{theorem}
\label{BS1:fixed}
The fixed point subgroup\index{subgroup!fixed~point} of an
endomorphism of $F_A$ is finitely
generated.
\end{theorem}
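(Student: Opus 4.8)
The plan is to reduce to the case of an injective endomorphism and then to exploit bounded cancellation in order to confine the fixed points to a controlled set of short words, from which finite generation is immediate because a free group of finite rank has only finitely many elements below a given length.

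First I would reduce to a monomorphism. Given $\phi\colon F_A\to F_A$, consider the descending chain $F_A\supseteq \phi(F_A)\supseteq \phi^2(F_A)\supseteq\cdots$ of finitely generated subgroups, each free by Theorem~\ref{BS1:nielsen}. Since $\phi^{n+1}(F_A)=\phi(\phi^n(F_A))$ is generated by $\rk\phi^n(F_A)$ elements, the ranks are non-increasing and stabilize, say at a value $r$ for all $n\geq N$. A finite-rank free group is Hopfian, so a homomorphism out of a free group of rank $r$ has image of rank $r$ only when it is injective; applying this to $\phi$ restricted to $\phi^N(F_A)$, whose image $\phi^{N+1}(F_A)$ again has rank $r$, shows that $\psi:=\phi|_{\phi^N(F_A)}$ is an \emph{injective} endomorphism of the free group $\phi^N(F_A)$. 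As any fixed point $w$ satisfies $w=\phi^N(w)\in\phi^N(F_A)$, we have $\mathrm{Fix}(\phi)=\mathrm{Fix}(\psi)$, so it suffices to treat the injective case, which I rename $\psi\colon F_A\to F_A$.

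Next I would invoke the bounded cancellation lemma for monomorphisms: there is a constant $C$, depending only on $\psi$, such that whenever $u,v\in R_A$ with $uv$ reduced, at most $C$ letters cancel in forming $\overline{\psi(u)\psi(v)}$. Writing $M=\max_{a\in\widetilde A}|\psi(a)|$, this controls how the reduced form of $\psi(w)$ is assembled from the blocks $\psi(a_i)$ when $w=a_1\cdots a_n\in R_A$. The core claim is then that $\mathrm{Fix}(\psi)$ is generated by its elements of length at most some bound $B=B(C,M)$: given a fixed point $w$ with $|w|>B$, one tracks, using bounded cancellation, a position along $w$ whose letter survives uncancelled in $\psi(w)=w$ and is produced from a uniquely determined block $\psi(a_j)$; comparing $w$ with $\psi(w)$ letter by letter then yields a $\psi$-compatible cut point $w=w'w''$, so that $w'$ and $w''$ (or suitable conjugates thereof lying in $\mathrm{Fix}(\psi)$) are again fixed points of strictly smaller length. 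Iterating, every fixed point becomes a product of fixed points of length at most $B$, and since $F_A$ contains only finitely many elements of length at most $B$, the subgroup $\mathrm{Fix}(\psi)=\mathrm{Fix}(\phi)$ is finitely generated.

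I expect the main obstacle to be precisely this decomposition step: extracting a $\psi$-invariant cut point from the cancellation pattern, and verifying that the resulting factors genuinely lie in the fixed subgroup, demands careful bookkeeping of where each surviving letter of $w$ originates among the blocks $\psi(a_i)$, and some care in handling conjugation so that the pieces are honest fixed points rather than merely fixed up to conjugacy. Establishing the bounded cancellation lemma itself is a secondary technical ingredient that can be isolated and cited rather than reproved here.
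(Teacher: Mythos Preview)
Your reduction to the injective case is correct but, as it turns out, unnecessary: the paper's argument (following Goldstein--Turner) treats an arbitrary endomorphism directly. More importantly, the decomposition step you flag as the main obstacle is a genuine gap, not merely a technicality. Bounded cancellation controls how many letters are lost when concatenating $\psi(u)$ and $\psi(v)$, but it gives no mechanism for producing a prefix $w'$ of a fixed word $w$ with $\psi(w')=w'$. A letter of $w$ ``surviving'' in $\psi(w)=w$ says nothing about the prefix ending there being fixed; and the hedge ``or suitable conjugates thereof'' does not repair this, since conjugating a non-fixed prefix does not make it fixed. There is no known argument that factors long fixed words into shorter fixed words in the naive sense you describe.

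The paper's proof proceeds along a different axis. For each $u\in F_A$ set $Q(u)=\varphi(u)u^{-1}$; then the set $\{Q(u):u\in F_A\}$ is the vertex set of an inverse automaton $\mathcal A$ with edges $Q(u)\xrightarrow{a}Q(au)$ and basepoint $1$, and the subautomaton $\mathcal B$ carrying the reduced successful paths is exactly the Stallings automaton of $\mathrm{Fix}(\varphi)$. Finiteness of $\mathcal B$ is established by orienting each inverse edge-pair according to the last letter of the relevant $Q(v)$ and observing that any vertex with $|Q(v)|>2M$ (where $M=\max_{a}|\varphi(a)|$) has outdegree at most~$1$ in the oriented graph; a short path-chasing argument then confines $\mathcal B$ to a finite set. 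The moral is that the right quantity to track along a fixed word $w=a_1\cdots a_n$ is not which letters survive, but the \emph{defects} $Q(a_i\cdots a_n)=\varphi(a_i\cdots a_n)(a_i\cdots a_n)^{-1}$: these stay bounded, and that boundedness --- rather than any factorization of $w$ --- is what forces the Stallings automaton to be finite. Cooper's bounded-cancellation proof ultimately tracks the same defect function; if you want to pursue that route, you should reformulate your core claim in those terms rather than as a cut-point decomposition.
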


\begin{proof}
Let $\varphi$ be an endomorphism of $F_A$. For every $u \in F_A$, define
$Q(u) = \varphi(u)u^{-1}$. We define a potentially infinite
automaton $\cal{A}$ by taking
\begin{displaymath}
\{ Q(u) \mid u \in F_A\} \subseteq F_A
\end{displaymath}
as the vertex set, all edges of the form $Q(u) \mapright{a}
Q(au)$ with $u\in F_A, \; a \in \widetilde{A}$, and fixing
$\one$ as the basepoint. Then $\cal{A}$ is a well-defined inverse automaton.

Next we take $\cal{B}$ to be the subautomaton of $\cal{A}$ obtained
by retaining only those vertices and edges that lie in successful
paths labelled by reduced words. Clearly, $\cal{B}$ is still an
inverse automaton, and it is easy to check that it must be the
Stallings automaton of the fixed point subgroup of $\varphi$.

It remains to be proved that $\cal{B}$ is finite. We define a
subautomaton $\cal{C}$ of $\cal{B}$ by removing exactly one edge among
each inverse pair
\begin{displaymath}
Q(u) \mapright{a} Q(au),\quad Q(au) \longmapright{a^{-1}} Q(u)
\end{displaymath}
with $a\in A$ as follows:
if $a^{-1}$ is the last letter of $Q(au)$, we remove $Q(u)
\mapright{a} Q(au)$;
otherwise, we remove $Q(au) \longmapright{a^{-1}}
Q(u)$.

Let $M$ denote the
maximum length of the image of a letter by $\varphi$. We claim that,
whenever $|Q(v)| > 2M$, the vertex $Q(v)$ has outdegree at most 1.

Indeed, if
$Q(v) \longmapright{a^{-1}} Q(a^{-1}v)$ is an edge in $\cal{C}$ for $a \in
A$, then $a^{-1}$ is the last letter of $Q(v)$. On the other hand, if
$Q(v) \mapright{b} Q(bv)$ is an edge in $\cal{C}$ for $b \in
A$, then $b^{-1}$ is not the last letter of $Q(bv)$. Since $Q(bv) =
\varphi(b)Q(v)b^{-1}$ and $|Q(v)| > 2|\varphi(b)|$, then $b$ must be
the last letter of $Q(v)$ in this case. Since $Q(v)$ has at most one
last letter, it follows that its outdegree is at most 1.

Let $\cal{D}$ be a finite subautomaton of $\cal{C}$ containing all
vertices $Q(v)$ such that $|Q(v)| \leq 2M$. Suppose that $p
\mapright{} q$ is an edge in $\cal{C}$ not belonging to
$\cal{D}$. Since $p
\mapright{} q$, being an edge of $\cal{B}$, must lie in some reduced
path, and by the outdegree property of $\cal{C}$, it is easy to see
that there exists some path in $\cal{C}$ of the form
\begin{displaymath}
p' \mapright{} p \mapright{} q \mapright{} r \mapleft{} r'
\end{displaymath}
where $p',r'$ are vertices in $\cal{D}$. Since there are only finitely
many directed paths out of $\cal{D}$, it follows that $\cal{C}$ is
finite and so is $\cal{B}$. Therefore the fixed point subgroup of
$\varphi$ is finitely generated.
\end{proof}





Note that this proof is not by any means constructive. Indeed, the
only known algorithm for computing the fixed point subgroup of a free
group automorphism is due to \Maslakova~\cite{Maslakova:2003}\ and
relies on the sophisticated \emph{train track} theory of \Bestvina\
and \Handel~\cite{Bestvina&Handel:1992} and other algebraic geometry
tools. The general endomorphism case remains open.

Stallings automata were also used by \Ventura\ in the study of various
properties of fixed subgroups, considering in particular arbitrary
families of endomorphisms~\cite{Ventura:1997,Martino&Ventura:2004}
(see also~\cite{Ventura:2002}).

Automata also play a part in the study of \emph{infinite fixed
  points}, taken over the continuous extension of a
monomorphism\index{monomorphism~extension} to the hyperbolic boundary
(see for example~\cite{Silva:2009}).

\section{Rational and recognizable subsets}\label{BS1:rational}

Rational subsets generalize the notion of finitely generated from
subgroups to arbitrary subsets of a group, and can be quite useful in
establishing inductive procedures that need to go beyond the territory
of subgroups. Similarly, recognizable subsets extend the notion of
finite index subgroups. Basic properties and results can be found
in~\cite{MR549481} or~\cite{Sakarovitch:2003}.

We consider a finitely generated group $G=\langle A\rangle$, with the
canonical map $\pi:F_A\to G$. A subset of $G$ is
\emph{rational}\index{rational~language}\index{language!rational} if
it is the image by $\rho=\pi\theta$ of a rational subset of
$\widetilde A^*$, and is
\emph{recognizable}\index{recognizable~language}\index{language!recognizable}
if its full preimage under $\rho$ is rational in $\widetilde A^*$.

For every group $G$, the classes $\Rat G$ and $\Rec G$ satisfy the
following closure properties:
\begin{itemize}
\item
$\Rat G$ is (effectively) closed under union, product, star, morphisms,
inversion, subgroup generating.
\item
$\Rec G$ is (effectively) closed under boolean operations, translation,
product, star, inverse morphisms, inversion, subgroup generating.
\end{itemize}
\Kleene's Theorem is not valid for groups: $\Rat G = \Rec G$ if and
only if $G$ is finite. However, if the class of rational subsets of
$G$ possesses some extra algorithmic properties, then many
decidability/constructibility results can be deduced for $G$. Two
properties are particularly coveted for $\Rat G$:
\begin{itemize}
\item
(effective) closure under complement (yielding closure under all the boolean
operations);
\item
decidable membership problem for arbitrary rational subsets.
\end{itemize}
In these cases, one may often solve problems (e.g.\ equations, or
systems of equations) whose statement lies far out of the rational
universe, by proving that the solution is a rational set.

\subsection{Rational and recognizable subgroups}

We start by some basic, general facts.  The following result is
essential to connect language theory to group theory.

\begin{theorem}[\Anisimov\ and \Seifert]
\label{BS1:anisei}
A subgroup $H$ of a group $G$ is rational if and only if $H$ is finitely
generated.
\end{theorem}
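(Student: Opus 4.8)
The plan is to prove both implications of the Anisimov–Seifert theorem, which states that a subgroup $H \leq G$ is rational if and only if it is finitely generated.

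For the easier direction, suppose $H$ is finitely generated, say $H = \langle a_1, \dots, a_k \rangle$ with each $a_i \in G$. Choosing words $w_i \in \widetilde{A}^*$ with $\rho(w_i) = a_i$, the subset $(w_1 \cup \dots \cup w_k \cup w_1^{-1} \cup \dots \cup w_k^{-1})^*$ is rational in $\widetilde{A}^*$, and its image under $\rho$ is exactly $H$. Hence $H \in \Rat G$. This is really just the observation that $\Rat G$ is closed under the subgroup-generating operation applied to a finite set, as already noted among the closure properties.

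The substantive direction is the converse: if $H$ is rational, then $H$ is finitely generated. First I would fix a finite automaton $\mathcal{A}$ over $\widetilde{A}$ accepting a rational language $L \subseteq \widetilde{A}^*$ with $\rho(L) = H$. The key idea is to extract a finite generating set from the loop structure of $\mathcal{A}$. Concretely, for each state $q$ reachable from the initial state and co-reachable to a final state, fix a word $u_q$ labelling a path from the start to $q$ and a word $v_q$ labelling a path from $q$ back to a final state. For every edge $q \mapright{a} q'$ of $\mathcal{A}$, form the element $\rho(u_q a v_{q'})$ of $G$; let $Y$ be the finite set of all such elements together with the $\rho(u_q v_q)$. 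The claim is that $\langle Y \rangle = H$, because any accepted word traverses a sequence of edges and the corresponding product telescopes (the intermediate $v$'s and $u$'s cancel in $G$) to express $\rho$ of that word in terms of the generators in $Y$. Since each $\rho(w) \in H$ lies in $\langle Y \rangle$ and conversely each element of $Y$ lies in $H$, we conclude $H = \langle Y \rangle$ is finitely generated.

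The main obstacle to watch carefully is that the telescoping argument produces a group element, not a word, and one must verify that the cancellation genuinely happens at the level of $G$ (not just formally): when an accepted word factors as a concatenation along a path $q_0 \to q_1 \to \dots \to q_n$ through edges labelled $a_1, \dots, a_n$, its image is $\rho(a_1 \cdots a_n) = \prod_i \rho(u_{q_{i-1}} a_i v_{q_i}) \cdot (\text{correction terms})$, and one needs the chosen path words to be compatible so that $v_{q_i}$ followed by $u_{q_i}$ contributes trivially after inserting and deleting matched pairs. The clean way to handle this is to observe that it suffices to treat a \emph{trim} automaton and to argue directly that the set $\{\rho(u_q a v_{q'}) : (q,a,q') \text{ an edge}\}$ generates $H$ by induction on path length, where the base case uses that $\rho(u_{q_0} v_{q_0}) \in H$ for the initial state and the inductive step appends one edge at a time. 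Establishing this generating property rigorously is where the real work lies; everything else is routine.
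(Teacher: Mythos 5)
Your proof is correct, but the substantive direction is argued quite differently from the paper. The paper's proof takes $X$ to be the (finite) set of \emph{all} words of length $<2m$ lying in $\rho^{-1}(H)$, where $m$ is the number of states, and shows $H=\langle\rho(X)\rangle$ by induction on $|u|$ for $u\in L({\cal A})$: it splits $u=vw$ with $|w|=m$, reroutes the suffix through a short path $q\mapright{z}t$ with $|z|<m$, and uses that both $\rho(vz)$ and $\rho(z^{-1}w)$ lie in $H$. Your proof instead reads a generating set off the edges of the trimmed automaton via chosen access words $u_q$ and co-access words $v_q$, and telescopes along an accepting path. Both work; note that your worry about ``compatibility'' of the path words is unfounded, since the correction term appearing between consecutive factors is exactly $\rho(v_{q_i})^{-1}\rho(u_{q_i})^{-1}=\rho(u_{q_i}v_{q_i})^{-1}$, i.e.\ the inverse of one of the generators you already placed in $Y$ --- so the telescoped product is $\rho(u_{q_0})\,\rho(w)\,\rho(v_{q_n})$ times elements of $\langle Y\rangle$, and it suffices to choose $u_{q_0}$ empty at the initial state and $v_t$ empty at final states to conclude $\rho(w)\in\langle Y\rangle$. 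Your version has the advantage that the generating set is explicitly computable from the automaton, whereas the paper's $X$ is defined by membership in $\rho^{-1}(H)$ and is finite but not a priori effectively listable; the paper's version avoids any discussion of trimness, choice of base paths, or telescoping bookkeeping.
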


\begin{proof}
($\Rightarrow$): Let $H$ be a rational subgroup of $G$ and let $\pi:
F_A \to G$ denote a morphism. Then there
exists a finite $\widetilde{A}$-automaton ${\cal{A}}$ such that $H =
\rho(L({\cal{A}}))$. Assume that ${\cal{A}}$ has $m$ vertices and
let $X$ consist of all the words in $\rho^{-1}(H)$ of length $<
2m$. Since $A$ is finite, so is $X$. We claim that $H = \langle
\rho(X)\rangle$. To prove it, it suffices to show that
\begin{equation}
\label{BS1:anisei1}
u \in L({\cal{A}}) \Rightarrow \rho(u) \in \langle
\rho(X)\rangle
\end{equation}
holds for every $u \in \widetilde{A}^*$. We use induction on $|u|$. By
definition of $X$,~\eqref{BS1:anisei1} holds for words of length $<
2m$. Assume now that $|u| \geq 2m$ and~\eqref{BS1:anisei1} holds for
shorter words. Write $u = vw$ with $|w| = m$. Then there exists a path
\begin{displaymath}
\to q_0 \mapright{v} q \mapright{z} t \to
\end{displaymath}
in ${\cal{A}}$ with $|z| < m$. Thus $vz \in L({\cal{A}})$ and by the
induction hypothesis $\rho(vz) \in \langle
\rho(X)\rangle$. On the other hand, $|z^{-1}w| < 2m$ and
$\rho(z^{-1}w) = \rho(z^{-1}v^{-1})\rho(vw) \in H$, hence
$z^{-1}w \in X$ and so $\rho(u) =
\rho(vz)\rho(z^{-1}w)\in \langle
\rho(X)\rangle$, proving~\eqref{BS1:anisei1} as required.

($\Leftarrow$) is trivial.
\end{proof}

\noindent It is an easier task to characterize recognizable subgroups:

\begin{proposition}
\label{BS1:recsub}
A subgroup $H$ of a group $G$ is recognizable if and only if it has
finite index.
\end{proposition}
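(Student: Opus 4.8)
The plan is to prove the two implications separately, the forward one being the substantial one. Throughout, write $L=\rho^{-1}(H)\subseteq\widetilde A^*$; by definition $H\in\Rec G$ means precisely that $L$ is rational, equivalently (Kleene, together with the Nerode equivalence of \S10.2) that $L$ is recognized by a finite deterministic automaton, i.e.\ that the right Nerode congruence $\equiv_L$ of $L$ has finite index. The heart of the argument will be a dictionary between the classes of $\equiv_L$ and the right cosets of $H$ in $G$.

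For the easy direction, suppose $H\leq_{f.i.} G$ and set $n=[G:H]$. I would build the \emph{coset automaton} $\cal A$ with state set the finite set $\{Hg\mid g\in G\}$ of right cosets, with $H\one$ as both initial and single final state, and transitions $Hg\mapright{a} Hg\rho(a)$ for each $a\in\widetilde A$; one checks at once that this is well defined, deterministic and complete. Reading a word $u$ from the initial state lands in the coset $H\rho(u)$, which equals $H\one$ exactly when $\rho(u)\in H$; hence $L(\cal A)=L$, so $L$ is rational and $H\in\Rec G$.

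For the converse, assume $H\in\Rec G$, so $L$ is rational and $\equiv_L$ has finitely many classes. The key step, which I expect to be the main obstacle, is to prove the equivalence $u\equiv_L v \iff H\rho(u)=H\rho(v)$ for all $u,v\in\widetilde A^*$. Unwinding the definition of the right Nerode congruence and using that $\rho$ is onto, $u\equiv_L v$ says that $\rho(u)c\in H\iff\rho(v)c\in H$ for every $c\in G$; specialising $c=\rho(u)^{-1}$ and using that $\one\in H$ forces $\rho(v)\rho(u)^{-1}\in H$, while conversely $\rho(v)\rho(u)^{-1}\in H$ gives the condition back because $H$ is closed under products. This is exactly the statement $H\rho(u)=H\rho(v)$.

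Granting this, the map sending the $\equiv_L$-class of $u$ to the coset $H\rho(u)$ is well defined and injective by the equivalence above, and surjective because $\rho$ is onto; hence $[G:H]$ equals the (finite) number of classes of $\equiv_L$, so $H\leq_{f.i.} G$. The only delicate point is the coset computation singled out above; everything else is routine bookkeeping with the automaton and the definition of recognizability.
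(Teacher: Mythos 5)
Your proof is correct, and it takes a genuinely different route from the paper's. The paper leans on the general structure theorem for recognizable subsets of a group --- every $X\in\Rec G$ has the form $NY$ with $N\unlhd_{f.i.}G$ and $Y$ finite --- from which the forward direction is immediate ($N\subseteq H$ forces $[G:H]<\infty$) and the backward direction follows by taking $N=\bigcap_{g\in G}gHg^{-1}$. You instead work directly from the Myhill--Nerode characterization: your computation showing $u\equiv_L v\iff H\rho(u)=H\rho(v)$ for $L=\rho^{-1}(H)$ is correct (the specialization $c=\rho(u)^{-1}$ gives one direction, closure of $H$ under products gives the other, and surjectivity of $\rho$ is rightly invoked to pass from suffixes in $\widetilde A^*$ to arbitrary $c\in G$), and it identifies the Nerode classes of $L$ bijectively with the right cosets of $H$. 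Your approach is more self-contained --- it does not presuppose the classification of recognizable subsets of groups, which the paper quotes without proof --- and it yields strictly more information: the minimal automaton of $\rho^{-1}(H)$ is exactly the coset (Schreier) automaton you build in the easy direction, and the number of its states equals $[G:H]$. What you lose relative to the paper is that the $NX$ structure theorem, once available, also handles arbitrary recognizable subsets (not just subgroups) and is reused elsewhere in the text (e.g.\ in Lemma~\ref{BS1:rsn}); your argument is tailored to the subgroup case.
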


\begin{proof}
($\Rightarrow$):
In general, a recognizable subset of $G$ is of the form $NX$, where $N
\unlhd_{f.i.} G$ and $X \subseteq G$ is finite. If $H = NX$ is a
subgroup of $G$, then $N \subseteq H$ and so $H$ has finite index as
well.

($\Leftarrow$): This follows from the well-known fact that every
finite index subgroup $H$ of $G$ contains a finite index normal
subgroup $N$ of $G$, namely $N = \bigcap_{g\in G} \, gHg^{-1}$. Since $N$ has
finite index, $H$ must be of the form $NX$ for some finite $X
\subseteq G$.
\end{proof}

\subsection{Benois' Theorem}

The central result in this subsection is Benois' Theorem, the
cornerstone of the whole theory of rational subsets of free groups:

\begin{theorem}[\Benois]
\label{BS1:benois}
\begin{conditionsiii}
\item[\textup{(i)}] If $L \subseteq \widetilde{A}^*$ is
rational, then $\overline{L}$ is also rational, and can be effectively
constructed from $L$.
\item[\textup{(ii)}]
A subset of $R_A$ is a rational language as a subset of
$\widetilde{A}^*$ if and only if it is rational as a subset of $F_A$.
\end{conditionsiii}
\end{theorem}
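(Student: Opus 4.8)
The plan is to prove part (i) first by an automaton-saturation argument, and then derive part (ii) as a corollary. For (i), let $\mathcal{A}$ be a finite automaton over $\widetilde{A}$ recognizing $L$. The key idea is that reducing words amounts to allowing the automaton to read an inserted factor $aa^{-1}$ for free, i.e.\ to pass silently between any two states joined by a path labelled $aa^{-1}$. Concretely, I would iteratively \emph{saturate} $\mathcal{A}$: whenever there is a path $p \mapright{a} r \mapright{a^{-1}} q$ for some $a \in \widetilde{A}$ and some intermediate state $r$, add a new edge $p \mapright{\varepsilon} q$ (an $\varepsilon$-transition, or equivalently merge the reachability relation). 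Since the state set is finite, only finitely many such edges can be added, so this process terminates, yielding an $\varepsilon$-NFA $\mathcal{A}'$. One then shows that the language of $\mathcal{A}'$, after $\varepsilon$-closure, recognizes exactly $\overline{L} = \{\overline{w} \mid w \in L\}$ together with (possibly) non-reduced words; intersecting with the regular language $R_A$ of reduced words (which is itself rational, being the complement of $\widetilde{A}^*(\bigcup_{a}aa^{-1})\widetilde{A}^*$) extracts precisely $\overline{L}$.

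The heart of the correctness argument is the claim that for $u,v \in \widetilde{A}^*$, there is a path $p \mapright{u} q$ in the saturated automaton $\mathcal{A}'$ if and only if there is a path $p \mapright{v} q$ in $\mathcal{A}'$ for some $v$ with $\overline{u} = \overline{v}$; more precisely, $\mathcal{A}'$ should have a path $p \mapright{w} q$ for \emph{every} word $w$ with $\overline{w}$ equal to a label of some $p$-to-$q$ path in $\mathcal{A}$. The nontrivial direction is to show the saturation captures all reductions: if $w$ rewrites to $\overline{w}$ by cancelling factors $aa^{-1}$, then each cancellation step is mirrored by an added $\varepsilon$-edge, and an induction on the number of reduction steps (the length of the rewriting) closes the argument. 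I would carry this out by induction on $|w|$, peeling off either a first letter that survives in $\overline{w}$ or a cancelling pair that has been absorbed into an $\varepsilon$-transition. Since each saturation step is effective and termination is guaranteed by finiteness of the state set, $\overline{L} = L(\mathcal{A}') \cap R_A$ is rational and effectively constructible, establishing (i).

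For part (ii), recall that a subset $S \subseteq R_A$ is rational in $F_A$ iff $S = \rho(L)$ for some rational $L \subseteq \widetilde{A}^*$, where here $\rho = \theta$ identifies $F_A$ with $R_A$ under $u \star v = \overline{uv}$. If $S$ is rational as a language in $\widetilde{A}^*$, then it is its own preimage description and $\theta$ maps it into $R_A$ with $\theta(S) = S$ since the elements of $S$ are already reduced; hence $S$ is rational in $F_A$. Conversely, if $S \subseteq R_A$ is rational as a subset of $F_A$, then $S = \theta(L)$ for some rational $L \subseteq \widetilde{A}^*$, so $S = \overline{L} \cap R_A = \overline{L}$, which is rational as a language by part (i). The only subtlety to check is that $\theta(L) = \overline{L}$ as subsets of $R_A$, which is immediate from the definition of $\theta$ as $u \mapsto \overline{u}$.

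The main obstacle I anticipate is the bookkeeping in the correctness proof of saturation for (i): one must be careful that adding $\varepsilon$-edges based only on length-two cancellations $aa^{-1}$ suffices to realize \emph{arbitrarily deep} nested cancellations, since reducing a word may cascade (e.g.\ $ab b^{-1} a^{-1}$ requires first cancelling $bb^{-1}$, exposing a new $aa^{-1}$). The resolution is that saturation is applied to a fixed point: after adding $\varepsilon$-edges one reconsiders paths \emph{through} those edges, so a path $p \mapright{a} \bullet \mapright{\varepsilon} \bullet \mapright{a^{-1}} q$ triggers a further $\varepsilon$-edge $p \mapright{\varepsilon} q$. Iterating to a fixed point, which terminates by finiteness, captures all nested cancellations; making this cascade precise via induction on reduction length is the step requiring the most care.
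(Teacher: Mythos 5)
Your proposal is correct and follows essentially the same route as the paper: iteratively adding $\varepsilon$-transitions for $aa^{-1}$-labelled paths until a fixed point is reached (the paper's sequence ${\cal A}_0,{\cal A}_1,\dots,{\cal A}_m$), proving $\overline{L}=L({\cal A}_m)\cap R_A$ by induction on reduction steps in one direction and by lifting paths (reinserting $aa^{-1}$ factors) in the other, then deducing (ii) from (i) via $\theta(L)=\overline{L}$ for $X\subseteq R_A$. The cascading-cancellation subtlety you flag is exactly what the paper's iteration-to-stabilization handles.
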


We illustrate this in the case of finitely generated subgroups:
temporarily calling ``Benois automata'' those automata recognizing
rational subsets of $R_A$, we may convert them to Stallings automata
by ``folding'' them, at the same time making sure they are inverse
automata. Given a Stallings automaton, one intersects it with $R_A$ to
obtain a Benois automaton.

\begin{proof}
  (i) Let ${\cal{A}} = (Q,\widetilde{A},E,I,T)$ be a finite automaton
  recognizing 
  $L$. We define a sequence $({\cal{A}}_n)_n$ of finite automata with
  $\varepsilon$-transitions as follows. Let ${\cal{A}}_0 =
  {\cal{A}}$. Assuming that ${\cal{A}}_n = (Q,\widetilde{A},E_n,I,T)$
  is defined, we 
  consider all instances of ordered pairs $(p,q) \in Q \times Q$ such
  that
\[\tag{P}
  \text{there exists a path $p \longmapright{aa^{-1}} q$ in ${\cal{A}}_n$
    for some $a \in \widetilde{A}$, but no path $p \mapright{1} q$.}
\]
Clearly, there are only finitely many instances of (P) in
${\cal{A}}_n$. We define $E_{n+1}$ to be the union of $E_n$ with all the
new edges $(p,1,q)$, where $(p,q) \in Q \times Q$ is an instance of (P).
Finally, we define ${\cal{A}}_{n+1} =
(Q,\widetilde{A},E_{n+1},I,T)$. In particular, note that ${\cal{A}}_{n} =
{\cal{A}}_{n+k}$ for every $k \geq 1$ if there are no
instances of (P) in ${\cal{A}}_{n}$.

Since $Q$ is finite, the sequence $({\cal{A}}_n)_n$ is ultimately
constant, say after reaching ${\cal{A}}_m$. We claim that
\begin{equation}
\label{BS1:benois1}
\overline{L} = L({\cal{A}}_m) \cap R_A\, .
\end{equation}
Indeed, take $u \in L$. There exists a sequence of words $u = u_0, u_1,
\dots, u_{k-1}, u_k = \overline{u}$ where each term is obtained from the
preceding one by erasing a factor of the form $aa^{-1}$ for some $a
\in \widetilde{A}$. A straightforward induction shows that $u_i \in
L({\cal{A}}_i)$ for $i = 0, \dots,k$, since the existence of a path $p
\longmapright{aa^{-1}} q$ in ${\cal{A}}_i$ implies the existence of a
  path $p \mapright{1} q$ in ${\cal{A}}_{i+1}$. Hence $\overline{u} = u_k
  \in L({\cal{A}}_k) \subseteq L({\cal{A}}_m)$ and it follows that
  $\overline{L} \subseteq L({\cal{A}}_m) \cap R_A$.

For the opposite inclusion, we start by noting that any
path $p \mapright{u} q$ in ${\cal{A}}_{i+1}$ can be lifted to a path
$p \mapright{v} q$ in ${\cal{A}}_{i}$, where $v$ is obtained from $u$
by inserting finitely many factors of the form $aa^{-1}$. It follows that
\begin{displaymath}
\overline{L({\cal{A}}_m)} =
\overline{L({\cal{A}}_{m-1})} = \dots = \overline{L({\cal{A}}_0)} =
\overline{L}
\end{displaymath}
and so $L({\cal{A}}_m) \cap R_A \subseteq \overline{L({\cal{A}}_m)} =
\overline{L}$. Thus~\eqref{BS1:benois1} holds.

\noindent Since
\begin{displaymath}
R_A = \widetilde{A}^* \setminus \bigcup_{a \in \widetilde{A}}
\widetilde{A}^*aa^{-1}\widetilde{A}^*
\end{displaymath}
is obviously rational, and the class of rational languages is closed under
intersection,
it follows that $\overline{L}$ is rational. Moreover, we can
effectively compute the automaton ${\cal{A}}_m$ and a finite automaton
recognizing $R_A$, hence the direct product
construction can be used to construct a finite automaton recognizing
the intersection
$\overline{L}  = L({\cal{A}}_m) \cap R_A$.

(ii) Consider $X \subseteq R_A$. If $X \in
\Rat\widetilde{A}^*$, then $\theta(X) \in \Rat F_A$ and so
 $X$ is rational as a subset of $F_A$.

Conversely, if $X$ is rational as a subset of $F_A$, then $X =
\theta(L)$ for some $L \in
\Rat\widetilde{A}^*$. Since $X \subseteq R_A$, we get $X =
\overline{L}$.
Now part (i) yields $\overline{L} \in
\Rat\widetilde{A}^*$ and so $X \in
\Rat\widetilde{A}^*$ as required.
\end{proof}

\begin{example}
\label{BS1:exbenois}
Let ${\cal{A}} = {\cal{A}}_0$ be depicted by
$$\begin{fsa}
\node[state,initial] (1) at (0,0) {};
\node[state] (2) at (0,2) {};
\node[state] (3) at (2,2) {};
\node[state] (4) at (2,0) {};
\path (1) edge (180:1);
\path (1) edge node[right] {$a$} (2);
\path (2) edge node[above] {$a$} (3) edge[loop left] node {$b$} ();
\path (3) edge[loop right] node {$b$} () edge node[right] {$a^{-1}$} (4);
\path (4) edge node[below] {$b^{-1}$} (1);
\end{fsa}$$
We get
$$\begin{fsa}
\node[state,initial] (1) at (-5,0) {};
\node[state] (2) at (-5,2) {};
\node[state] (3) at (-3,2) {};
\node[state] (4) at (-3,0) {};
\node at (-7,1) {${\cal{A}}_1 =$};
\path (1) edge (0:-6);
\path (1) edge node[right] {$a$} (2);
\path (2) edge node[above] {$a$} (3) edge[dotted] node {$1$} (4)
 edge[loop left] node {$b$} ();
\path (3) edge[loop right] node {$b$} () edge node[right] {$a^{-1}$} (4);
\path (4) edge node[below] {$b^{-1}$} (1);
\node[state,initial] (5) at (1.5,0) {};
\node[state] (6) at (1.5,2) {};
\node[state] (7) at (3.5,2) {};
\node[state] (8) at (3.5,0) {};
\node at (-0.5,1) {${\cal{A}}_2 = {\cal{A}}_3 =$};
\path (5) edge (0:0.5);
\path (5) edge node[right] {$a$} (6);
\path (6) edge node[above] {$a$} (7) edge node {$1$} (8)
edge[bend right,dotted] node[left] {$1$} (5) edge[loop left] node {$b$} ();
\path (7) edge[loop right] node {$b$} () edge node[right] {$a^{-1}$} (8);
\path (8) edge node[below] {$b^{-1}$} (5);
\end{fsa}$$
and we can then proceed to compute $\overline{L} = L({\cal{A}}_2) \cap R_2$.
\end{example}

The following result summarizes some of the most direct consequences
of Benois' Theorem:

\begin{corollary}
\label{BS1:bencon}
\begin{conditionsiii}
\item[\textup{(i)}] $F_A$ has decidable rational
subset membership problem.
\item[\textup{(ii)}] $\Rat F_A$ is closed under
the boolean operations.
\end{conditionsiii}
\end{corollary}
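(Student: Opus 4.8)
The plan is to derive both parts of Corollary~\ref{BS1:bencon} as direct consequences of Benois' Theorem (Theorem~\ref{BS1:benois}) together with the definition of rationality in $F_A$ and standard closure properties of rational languages over $\widetilde{A}^*$. The key observation throughout is that every rational subset of $F_A$ can be written as $\overline{L}$ for some $L \in \Rat\widetilde{A}^*$, and that by part (i) of Benois' Theorem this $\overline{L}$ is itself effectively rational as a subset of $\widetilde{A}^*$, living inside $R_A$.

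For part (i), decidability of the rational subset membership problem, I would first take an arbitrary rational subset $X$ of $F_A$ and an arbitrary element $g \in F_A$. By definition $X = \theta(L)$ for some rational $L \subseteq \widetilde{A}^*$ given by a finite automaton, and since $X \subseteq R_A$ we have $X = \overline{L}$. Using Benois' Theorem~\ref{BS1:benois}(i), I would effectively construct a finite automaton recognizing $\overline{L} = L({\cal{A}}_m) \cap R_A$. Membership of $g$ in $X$ then reduces to testing whether the reduced word $\overline{g} \in R_A$ is accepted by this automaton, which is decidable since acceptance by a finite automaton is decidable. I would take care to phrase the input correctly: $g$ is presented as a word in $\widetilde{A}^*$, which we first reduce to $\overline{g}$ (the word problem for $F_A$ being solvable via the confluent rewriting system), and then run through the automaton for $\overline{L}$.

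For part (ii), closure of $\Rat F_A$ under the boolean operations, the point is that closure under union is already known (it holds in any group), so the crux is closure under complement, from which intersection follows by De Morgan. I would argue as follows: given $X = \overline{L} \in \Rat F_A$ with $L$ rational, Benois~\ref{BS1:benois}(i) makes $\overline{L}$ rational as a subset of $\widetilde{A}^*$, and moreover $\overline{L} \subseteq R_A$. The complement of $X$ inside $F_A$ corresponds to $R_A \setminus \overline{L}$ viewed as a subset of $R_A \cong F_A$. Since both $R_A$ and $\overline{L}$ are rational languages over $\widetilde{A}^*$, and rational languages over a free monoid are closed under boolean operations (in particular relative complement), the set $R_A \setminus \overline{L}$ is rational in $\widetilde{A}^*$; as it sits inside $R_A$, Benois~\ref{BS1:benois}(ii) then guarantees it is rational as a subset of $F_A$, and it is exactly the complement of $X$ in $F_A$. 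Intersection is obtained by combining complement with union.

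The main subtlety — not so much an obstacle as a point requiring care — is keeping straight the two distinct notions of ``rational,'' one in the free monoid $\widetilde{A}^*$ and one in the group $F_A$, and invoking Benois~\ref{BS1:benois}(ii) at precisely the right moments to translate between them. In particular, one must remember that complementation in $F_A$ is complementation relative to $R_A$ (the reduced words), not relative to all of $\widetilde{A}^*$, so the correct ambient set for the boolean computation is $R_A$, itself a rational subset of $\widetilde{A}^*$. Once this bookkeeping is in place, both statements follow essentially formally from Theorem~\ref{BS1:benois} and the classical theory of rational languages over free monoids.
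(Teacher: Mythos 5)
Your proposal is correct and follows essentially the same route as the paper: part (i) reduces membership of $g$ in $X=\theta(L)$ to testing whether $\overline{g}$ lies in the effectively constructible rational language $\overline{L}$, and part (ii) identifies the complement of $X$ in $F_A$ with $R_A\setminus\overline{X}$, which is rational in $\widetilde{A}^*$ and hence in $F_A$ by Theorem~\ref{BS1:benois}(ii), with intersection then following from union and De Morgan. Your extra care about the two notions of rationality and about $R_A$ being the correct ambient set for complementation is exactly the bookkeeping the paper's shorter proof leaves implicit.
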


\begin{proof}
  (i) Given $X \in \Rat F_A$ and $u \in F_A$, write $X = \theta(L)$
  for some $L \in \Rat\widetilde{A}^*$. Then $u \in X$ if and only if
  $\overline{u} \in \overline{X} = \overline{L}$. By
  Theorem~\ref{BS1:benois}(i), we may construct a finite automaton
  recognizing $\overline{L}$ and therefore decide whether or not
  $\overline{u} \in \overline{L}$.

  (ii) Given $X \in \Rat F_A$, we have $\overline{F_A \setminus X} =
  R_A \setminus \overline{X}$ and so $F_A \setminus X \in \Rat F_A$ by
  Theorem~\ref{BS1:benois}. Therefore $\Rat F_A$ is closed under
  complement.

Since $\Rat F_A$ is trivially closed under union, it follows from De
Morgan's laws that it is closed under intersection as well.
\end{proof}

Note that we can associate algorithms to these boolean closure
properties of $\Rat F_A$ in a constructive way.  We remark also that
the proof of Theorem~\ref{BS1:benois} can be clearly adapted to more
general classes of rewriting systems
(see~\cite{Book&Otto:1993}). Theorem~\ref{BS1:benois} and
Corollary~\ref{BS1:bencon} have been generalized several times by
\Benois\ herself~\cite{MR903667} and by \Senizergues, who obtained the
most general versions. \Senizergues' results~\cite{MR1057769} hold for
\emph{rational length-reducing left basic confluent} rewriting
systems\index{rewriting~system!confluent!length-reducing~etc.} and
remain valid for the more general notion of \emph{controlled}
rewriting system.

\subsection{Rational versus recognizable}

Since $F_A$ is a finitely generated monoid, it follows that every
recognizable subset of $F_A$ is rational~\cite[Proposition
III.2.4]{MR549481}.  We turn to the problem of deciding
which rational subsets of $F_A$ are recognizable. The first proof,
using rewriting systems, is due to \Senizergues~\cite{MR1393764} but
we follow the shorter alternative proof from~\cite{MR2059028}, where a
third alternative proof, of a more combinatorial nature, was also
given.

Given a subset $X$ of a group $G$, we define the \emph{right
  stabilizer} of
$X$ to be the submonoid of $G$ defined by
\begin{displaymath}R(X) = \{ g \in G \mid Xg \subseteq X\}\, .
\end{displaymath}
Next let $$K(X) = R(X) \cap (R(X))^{-1} = \{ g \in G \mid Xg = X\}$$
be the largest subgroup of $G$ contained in $R(X)$ and let
\begin{displaymath}
N(X) = \bigcap_{g \in G} gK(X)g^{-1}
\end{displaymath}
be the largest normal subgroup of $G$ contained in $K(X)$, and
therefore in $R(X)$.

\begin{lemma}
\label{BS1:rsn}
A subset $X$ of a group $G$ is recognizable if and only if $K(X)$ is a
finite index subgroup of $G$.
\end{lemma}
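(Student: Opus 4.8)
The plan is to characterize recognizability through saturation by a finite-index normal subgroup, and then to relate such a subgroup to $K(X)$ and to its normal core $N(X)$.

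First I would record the basic fact, already used in the proof of Proposition~\ref{BS1:recsub}, that a subset $X\subseteq G$ is recognizable if and only if it is a union of cosets of some finite-index normal subgroup $N\unlhd_{f.i.}G$; equivalently, $NX=X=XN$. Indeed, any such $X$ is recognized by the quotient morphism $G\to G/N$, and conversely every recognizable subset of $G$ is of the form $NX$ with $N\unlhd_{f.i.}G$.

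For the forward implication, suppose $X$ is recognizable and pick $N\unlhd_{f.i.}G$ with $XN=X$. For $n\in N$ we then have $Xn\subseteq X$ and $Xn^{-1}\subseteq X$, whence $X=(Xn^{-1})n\subseteq Xn$ and so $Xn=X$. Thus $N\subseteq K(X)$, and therefore $[G:K(X)]\le[G:N]<\infty$. For the converse, assume $K(X)$ has finite index. Its normal core $N(X)=\bigcap_{g\in G}gK(X)g^{-1}$ is an intersection of the finitely many conjugates of $K(X)$, each of finite index, hence $N(X)$ is itself a finite-index normal subgroup. Since $N(X)\subseteq K(X)$, every $n\in N(X)$ satisfies $Xn=X$, which gives $XN(X)=X$. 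It remains to establish the left-hand saturation $N(X)X=X$: for $n\in N(X)$ and $x\in X$ the element $x^{-1}nx$ again lies in $N(X)\subseteq K(X)$ by normality, so $X(x^{-1}nx)=X$, and evaluating at $x\in X$ yields $nx=x(x^{-1}nx)\in X$. Hence $N(X)X\subseteq X$, and since $1\in N(X)$ we conclude $N(X)X=X$. Therefore $X$ is saturated by the finite-index normal subgroup $N(X)$ and is recognizable.

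The step I expect to require the most care is the left saturation $N(X)X=X$: the subgroup $K(X)$ controls only right multiplication and need not be normal, which is precisely why one must pass to the normal core $N(X)$ and then exploit the identity $nx=x(x^{-1}nx)$ to convert the known right action of $K(X)$ into the needed left action. The rest of the argument is bookkeeping with the standard correspondence between recognizability and finite-index normal saturation.
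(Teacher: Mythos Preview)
Your proof is correct and follows essentially the same route as the paper: in both directions one exploits a finite-index normal subgroup $N$ saturating $X$ on the right, with $N\subseteq K(X)$ giving the forward implication and $N=N(X)$ the converse.

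One remark on emphasis: you flag the left saturation $N(X)X=X$ as the delicate step, but it is in fact automatic once you have $XN(X)=X$ and normality (your own argument $nx=x(x^{-1}nx)$ shows this in one line), and the paper does not even state it. The paper simply observes that $XN=X$ with $N$ of finite index already forces $X$ to be a finite union of right cosets of $N$, hence $X=FN$ for some finite $F$, which is exactly the recognizability criterion. So the ``care'' you anticipate is not really needed; the genuine content lies in passing to the normal core $N(X)$ and knowing that it retains finite index.
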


In fact, the Schreier
graph\index{Schreier~graph}\index{graph!Schreier}
(see~\S\ref{BS2:sec:1}) of $K(X)\backslash G$ is the underlying graph of an automaton recognizing $X$,
and $G/N(X)$ is the syntactic
monoid\index{syntactic~monoid}\index{monoid!syntactic} of $X$.

\begin{proof}
($\Rightarrow$): If $X \subseteq G$ is recognizable, then $X = NF$ for
some $N \unlhd_{f.i.} G$ and $F \subseteq G$ finite. Hence $N
\subseteq R(X)$ and so $N
\subseteq K(X)$ since $N \leq G$. Since $N$ has finite index in $G$, so does
$K(X)$.

($\Leftarrow$): If $K(X)$ is a
finite index subgroup of $G$, so is $N = N(X)$. Indeed, a finite index
subgroup has only finitely many conjugates (having also finite index)
and a finite intersection
of finite index subgroups is easily checked to have finite
index itself.

Therefore it suffices to show that $X = FN$ for some finite subset $F$
of $G$. Since $N$ has finite index, the claim follows from $XN=X$, in
turn an immediate consequence of $N \subseteq R(X)$.
\end{proof}

\begin{proposition}
\label{BS1:decrec}
It is decidable whether or not a rational subset of $F_A$ is
recognizable.
\end{proposition}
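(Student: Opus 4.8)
The plan is to reduce the question of recognizability of a rational subset $X$ of $F_A$ to a decidable condition, using Lemma~\ref{BS1:rsn}: $X$ is recognizable if and only if $K(X)$ has finite index in $F_A$. So I would aim to show that $K(X)$ is itself a computable rational (hence finitely generated, by Theorem~\ref{BS1:anisei}) subgroup of $F_A$, and that once we have its Stallings automaton we can test finite index via Proposition~\ref{BS1:findex}(i).

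First I would observe that if $X \in \Rat F_A$, then by Benois' Theorem~\ref{BS1:benois} we may assume $X = \overline{L}$ for a rational language $L \subseteq \widetilde A^*$ given by a finite automaton, and by Corollary~\ref{BS1:bencon} the class $\Rat F_A$ is effectively closed under the boolean operations and membership is decidable. The key object is the right stabilizer $R(X) = \{g \in F_A \mid Xg \subseteq X\}$ and the subgroup $K(X) = R(X) \cap R(X)^{-1}$. The plan is to express $K(X)$ in a form that makes it rational and effectively constructible. I would note that $Xg \subseteq X$ can be rephrased as $g \in X^{-1}X$ being insufficient on its own; instead the natural route is to build a finite automaton for $X$ whose transition structure captures translation, and read off the stabilizer of the language from its transition monoid. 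Concretely, taking a finite deterministic (say trim) automaton over $\widetilde A$ recognizing $\overline{X}$, the condition $Xg = X$ corresponds to a permutation of the state set induced by reading (a reduced word for) $g$, so $K(X)$ is the preimage in $F_A$ of a subgroup of the (finite) transition monoid, which is a recognizable — and in particular rational — subset of $F_A$.

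Having $K(X)$ as an effectively computable rational subgroup, Theorem~\ref{BS1:anisei} guarantees it is finitely generated, so I can build its Stallings automaton $\mathcal S(K(X))$, and then Proposition~\ref{BS1:findex}(i) decides whether $\mathcal S(K(X))$ is complete, i.e.\ whether $K(X) \leq_{f.i.} F_A$. By Lemma~\ref{BS1:rsn} this is exactly the recognizability of $X$, completing the decision procedure.

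The main obstacle I expect is the precise, effective description of $K(X)$ and the proof that it is rational with a computable generating set. The stabilizer $R(X)$ is only a submonoid, and passing to $K(X) = R(X)\cap R(X)^{-1}$ must be done effectively; the cleanest path is to fix a finite automaton $\mathcal B$ recognizing $\overline X$ over $\widetilde A$ and to work inside its finite transition monoid, where the set of elements acting invertibly and fixing the ``accepted language from each reachable state'' is a genuine subgroup that can be computed by finite inspection. Translating this subgroup back through $\rho$ to obtain a finite generating set of $K(X) \leq F_A$, and verifying that the subgroup so obtained is exactly $K(X)$ rather than a larger or smaller subgroup, is the delicate step; once it is in place, all the remaining ingredients (Stallings automaton, completeness test, Lemma~\ref{BS1:rsn}) are routine applications of the results already established.
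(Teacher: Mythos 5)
Your overall strategy is exactly the paper's: reduce via Lemma~\ref{BS1:rsn} to deciding whether $K(X)$ has finite index, show $K(X)$ is an effectively computable rational subgroup, invoke Theorem~\ref{BS1:anisei} to get a finite generating set, build $\mathcal S(K(X))$, and test completeness with Proposition~\ref{BS1:findex}. The problem is the one step you yourself flag as delicate: your proposed computation of $K(X)$ via the transition monoid of an automaton for $\overline X$ is not just incomplete but cannot work as stated. Right translation $X\mapsto Xg$ in the group $F_A$ involves free reduction, so it is \emph{not} implemented by reading $g$ in an automaton recognizing the language $\overline X\subseteq\widetilde A^*$; the language $\overline{Xg}$ is not $\overline X\cdot\overline g$. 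Worse, your conclusion that $K(X)$ is ``the preimage in $F_A$ of a subgroup of the finite transition monoid, hence recognizable'' is demonstrably false: by Proposition~\ref{BS1:recsub} a recognizable subgroup has finite index, so combined with Lemma~\ref{BS1:rsn} this would make \emph{every} rational subset of $F_A$ recognizable, contradicting $\Rat G\neq\Rec G$ for infinite $G$. Concretely, for $X=\{a\}\subseteq F_2$ one has $K(X)=\{\one\}$, which is rational but certainly not recognizable.

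The paper closes this gap by a purely set-theoretic computation that stays inside $\Rat F_A$ and never needs to track translation through an automaton: for $u\in F_A$, $u\notin R(X)$ iff $Xu\cap(F_A\setminus X)\neq\emptyset$ iff $u\in X^{-1}(F_A\setminus X)$, whence
\[
R(X)=F_A\setminus\bigl(X^{-1}(F_A\setminus X)\bigr).
\]
Rationality and effective computability of the right-hand side follow from closure of $\Rat\widetilde A^*$ under reversal and morphisms together with Theorem~\ref{BS1:benois}(ii) (giving $X^{-1}\in\Rat F_A$), closure of $\Rat F_A$ under product, and Corollary~\ref{BS1:bencon} (boolean operations). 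Then $K(X)=R(X)\cap R(X)^{-1}$ is again effectively rational, and the rest of your argument goes through verbatim. If you replace your transition-monoid paragraph with this identity, your proof is the paper's proof.
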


\begin{proof}
Take $X \in \Rat F_A$. In view of Lemma~\ref{BS1:rsn} and Proposition~\ref{BS1:findex}, it suffices to
show that $K(X)$ is finitely generated and effectively computable.

Given $u \in F_A$, we have
\begin{displaymath}
u \notin R(X) \Leftrightarrow Xu \not\subseteq X
\Leftrightarrow Xu \cap (F_A \setminus X) \neq \emptyset
\Leftrightarrow u \in X^{-1}(F_A \setminus X),
\end{displaymath}
hence
\begin{displaymath}
R(X) = F_A \setminus(X^{-1}(F_A \setminus X))\, .
\end{displaymath}
It follows easily from the fact that the class of rational languages
is closed under reversal 
and morphisms, combined with Theorem~\ref{BS1:benois}(ii), that $X^{-1}
\in \Rat F_A$. Since $\Rat F_A$ is trivially closed under product, it
follows from Corollary~\ref{BS1:bencon} that $R(X)$ is rational and
effectively computable, and so is $K(X) = R(X) \cap (R(X))^{-1}$. By
Theorem~\ref{BS1:anisei}, the subgroup $K(X)$ is finitely generated
and the proof is complete.
\end{proof}

These results are related to the \Sakarovitch\
conjecture~\cite{Sakarovitch:1979}, which states that every rational subset
of $F_A$ must be either recognizable or
\emph{disjunctive}\index{disjunctive~rational~subset}: a subset $X$ of
a monoid $M$ is disjunctive if it has trivial syntactic congruence, or
equivalently, if any morphism $\varphi:M \to M'$ recognizing $X$ is
necessarily injective.

In the group case, it follows easily from the proof of the direct
implication of Lemma \ref{BS1:rsn} that the projection $G \to G/N$
recognizes $X \subseteq G$ if and only if $N \subseteq N(X)$. Thus $X$
is disjunctive if and only if $N(X)$ is the trivial subgroup.

The Sakarovitch conjecture was first proved in~\cite{MR1393764}, but
once again we follow the shorter alternative proof
from~\cite{MR2059028}:

\begin{theorem}[\Senizergues]
\label{BS1:saka}
A rational subset of $F_A$ is either recognizable or disjunctive.
\end{theorem}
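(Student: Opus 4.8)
The plan is to reduce the dichotomy to the single invariant $K(X)$ and then exploit the peculiar geometry of normal subgroups of free groups through Stallings automata. By Lemma~\ref{BS1:rsn}, $X$ is recognizable precisely when $K(X)$ has finite index in $F_A$, so I may assume $K(X)$ has infinite index and aim to prove that $X$ is disjunctive, i.e.\ (by the discussion preceding the statement) that $N(X)=\{\one\}$. Recall from the proof of Proposition~\ref{BS1:decrec} that $R(X)$, and hence $K(X)=R(X)\cap R(X)^{-1}$, is rational; by Theorem~\ref{BS1:anisei} the subgroup $K(X)$ is therefore finitely generated, so ${\cal S}(K(X))$ is a finite inverse automaton.

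Everything then rests on the following purely group-theoretic lemma, which I would isolate and prove first: \emph{a finitely generated subgroup $K$ of $F_A$ that contains a nontrivial normal subgroup of $F_A$ has finite index}. Granting it, the theorem is immediate: $N(X)$ is a normal subgroup of $F_A$ contained in the finitely generated subgroup $K(X)$, so were $N(X)$ nontrivial the lemma would force $K(X)$ to have finite index, contrary to assumption; hence $N(X)=\{\one\}$ and $X$ is disjunctive. This lemma is a genuine strengthening of Proposition~\ref{normalsubgroups} in exactly the direction we need: $N(X)$ may itself be infinitely generated (think of a commutator subgroup), so Proposition~\ref{normalsubgroups} cannot be applied to $N(X)$ directly, whereas here the finite-rank hypothesis is moved to the \emph{ambient} subgroup $K$.

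To prove the lemma I would argue by contradiction using incompleteness. Assume $[F_A:K]=\infty$; by Proposition~\ref{BS1:findex}, ${\cal S}(K)$ is not complete, so there is a vertex $q$ and a letter $c\in\widetilde A$ with $q\cdot c$ undefined. Fix a geodesic $g$ from the basepoint to $q$; since the reverse of the last edge of $g$ is present in the (inverse) automaton, $c$ is not the inverse of the last letter of $g$, so $gc$ is reduced. Let $N\leq K$ be the given nontrivial normal subgroup. I would first normalize the witness: replacing an element by its cyclically reduced core (a conjugate, hence still in $N$), and, if that happened to be a power of a single generator, by a commutator $[b,n]$ with $b$ a different generator, I obtain a nontrivial cyclically reduced $n\in N$ involving at least two distinct letters (the case $\Card A=1$ being trivial). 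Then for $s$ large the conjugate $w=(gc^{s})\,n\,(gc^{s})^{-1}$ lies in $N\subseteq K$, yet its reduced form begins with $gc$: the only cancellation on the left is between the $c^{s}$-block and the short leading run of $c^{-1}$'s in $n$, and because $n$ is not a power of $c$ a non-$c$ letter of $n$ blocks any propagation from the right, so at least one $c$ survives after $g$. Reading $\overline{w}$ in ${\cal S}(K)$ from the basepoint thus traverses $g$ to $q$ and then demands the forbidden transition $q\cdot c$; by determinism $\overline{w}\notin L({\cal S}(K))$, whence $w\notin K$ by Proposition~\ref{BS1:gwp}, contradicting $w\in K$.

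The main obstacle is precisely this normalization. The naive choice of conjugate fails when the nontrivial element of $N$ is a power of the generator underlying the missing transition, for then the conjugate telescopes to a word that may legitimately read through $q$ along the inverse edge $q\cdot c^{-1}$. The commutator trick removes this single degeneracy; the remainder is routine bookkeeping with reduced words and the determinism of inverse automata.
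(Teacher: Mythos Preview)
Your argument is correct and proceeds along the same line as the paper's: reduce via Lemma~\ref{BS1:rsn} and the proof of Proposition~\ref{BS1:decrec} to showing $N(X)=\{\one\}$ when the finitely generated subgroup $K(X)$ has infinite index, then use incompleteness of ${\cal S}(K(X))$ together with normality to manufacture an element of $K(X)$ whose reduced form must traverse a missing edge. Your isolation of the intermediate statement (a finitely generated subgroup of $F_A$ containing a nontrivial normal subgroup of $F_A$ has finite index) is a clean way to package the heart of the matter, and your proof of it via large powers $c^{s}$ plus the commutator normalization is sound. The paper's execution of that same step is shorter, though: it picks a reduced word $u$ with $q_0\cdot u$ undefined and, for an \emph{arbitrary} nonempty reduced $w$, uses $\Card A>1$ to insert a short buffer $v\in R_A$ so that $uvwv^{-1}u^{-1}$ is already reduced; this single move replaces both your normalization of $n$ and the passage to large $s$, and eliminates the case analysis on the shape of $n$ entirely.
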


\begin{proof}
Since the only subgroups of $\Z$ are the trivial
subgroup and finite index subgroups, we may assume that $\Card A > 1$.

Take $X \in \Rat F_A$. By the proof of Proposition~\ref{BS1:decrec},
the subgroup $K(X)$ is finitely generated. In view of
Lemma~\ref{BS1:rsn}, we may assume that $K(X)$ is not a finite index
subgroup. Thus ${\cal{S}}(K(X))$ is not complete by
Proposition~\ref{BS1:findex}.  Let $q_0$ denote the basepoint of
${\cal{S}}(K(X))$. Since ${\cal{S}}(K(X))$ is not complete, $q_0\cdot
u$ is undefined for some reduced word $u$.

Let $w$ be an arbitrary nonempty reduced word. We must show that $w
\notin N(X)$. Suppose otherwise. Since $u,w$ are reduced and $\Card A >
1$, there exist enough letters to make sure that there is some word $v
\in R_A$ such that $uvwv^{-1}u^{-1}$ is reduced. Now $w
\in N(X)$, hence $uvwv^{-1}u^{-1} \in N(X) \subseteq
K(X)$ by normality. Since $uvwv^{-1}u^{-1}$ is reduced, it follows
from Proposition~\ref{BS1:gwp} that $uvwv^{-1}u^{-1}$ labels a loop at $q_0$
in ${\cal{S}}(K(X))$, contradicting $q_0\cdot u$ being undefined. Thus $w
\notin N(X)$ and so $N(X) = 1$. Therefore $X$ is disjunctive as
required.
\end{proof}

\subsection{Beyond free groups}\label{BS1:beyondfg}
Let $\pi:F_A \twoheadrightarrow G$ be a morphism onto a group $G$.  We
consider the \emph{word problem submonoid}\index{word~problem!submonoid} of a group $G$, defined as
\begin{equation}\label{BS1:eq:wordproblem}
W_\pi(G) = (\pi\theta)^{-1}(\one).
\end{equation}

\begin{proposition}
\label{BS1:fincayley}
The language $W_{\pi}(G)$ is rational\index{word~problem!submonoid!rational} if and only if $G$ is
finite.\index{group!finite}
\end{proposition}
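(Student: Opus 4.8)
The plan is to prove both implications at once by computing the Myhill--Nerode index of the language $W=W_\pi(G)$ and showing it equals $\Card G$. Since $\widetilde A^*$ is free, $W$ is rational if and only if it is recognizable, and by Myhill--Nerode this happens exactly when $W$ has finitely many classes under the right congruence $\sim_W$ defined by $u\sim_W v$ iff $uy\in W\Leftrightarrow vy\in W$ for all $y\in\widetilde A^*$. So it suffices to put the classes of $\sim_W$ in bijection with the elements of $G$.

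The central claim is that $u\sim_W v$ if and only if $\rho(u)=\rho(v)$, where $\rho=\pi\theta$. The easy direction is immediate: if $\rho(u)=\rho(v)$ then $\rho(uy)=\rho(u)\rho(y)=\rho(v)\rho(y)=\rho(vy)$ for every $y$, so $uy\in W\Leftrightarrow vy\in W$. For the converse I would use the distinguishing suffix $y=u^{-1}$, the formal inverse of $u$ read as a word in $\widetilde A^*$: then $uu^{-1}\in W$ because $\rho(uu^{-1})=\one$, so $u\sim_W v$ forces $vu^{-1}\in W$, i.e.\ $\rho(v)\rho(u)^{-1}=\one$ and hence $\rho(v)=\rho(u)$. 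The one point that genuinely needs the hypotheses is the identity $\rho(u^{-1})=\rho(u)^{-1}$; this holds because $\theta$ carries the formal inverse of $u$ to the group inverse of $\theta(u)$ in $F_A$ and $\pi$ is a group homomorphism --- and it is exactly here that we use that $G$ is a group and not merely a monoid. Equivalently, one checks that the syntactic monoid of $W$ is $G$ itself.

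Granting the claim, the classes of $\sim_W$ are precisely the nonempty fibres $\rho^{-1}(g)$; since $\rho$ maps onto $G$, there are exactly $\Card G$ of them. Therefore $W$ has finite Myhill--Nerode index if and only if $G$ is finite, which is the desired equivalence. The argument is short and essentially formal; the main (and only) obstacle is the verification that $\rho$ respects the involution, so that $u^{-1}$ really is a suffix witnessing $\rho(u)=\rho(v)$ --- without the group structure this step, and with it the whole forward implication, would fail.
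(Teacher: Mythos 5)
Your proof is correct, but it takes a genuinely different route from the paper's. The paper handles the two directions asymmetrically: for the converse it observes that rationality of $W_\pi(G)$ in $\widetilde A^*$ makes $\pi^{-1}(\one)=\theta(W_\pi(G))$ a rational, hence (by Anisimov--Seifert, Theorem~\ref{BS1:anisei}) finitely generated, normal subgroup of $F_A$; by the dichotomy for finite-rank normal subgroups (Proposition~\ref{normalsubgroups}, as used in the proof of Theorem~\ref{BS1:saka}) it is trivial or of finite index, and the trivial case is excluded because the Dyck language $D_A=\theta^{-1}(\one)$ is not rational for $\Card A>0$. Your argument instead computes the Nerode index directly: the right congruence classes of $W_\pi(G)$ are exactly the fibres of $\rho$, the key step being the distinguishing suffix $u^{-1}$ together with $\rho(u^{-1})=\rho(u)^{-1}$, so the minimal automaton is precisely the Cayley graph of $G$ and the index is $\Card G$. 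This is more elementary and self-contained (it needs only Kleene and Myhill--Nerode over the free monoid $\widetilde A^*$, none of the Stallings-automaton machinery), it unifies both implications --- the paper's easy direction ``view the Cayley graph as an automaton'' is exactly your recognizer --- and it yields the sharper quantitative statement that the syntactic monoid of $W_\pi(G)$ is $G$ itself. What the paper's route buys in exchange is a demonstration of its own toolkit: the result appears there as a corollary of the structure theory of rational subsets of free groups developed in the section, rather than as a standalone automata-theoretic fact. Your correct remark that the group structure (the involution being respected by $\rho$) is exactly what makes the forward implication work is the right thing to flag; for a general monoid presentation the corresponding statement fails.
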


\begin{proof}
  If $G$ is finite, it is easy to check that $W_{\pi}(G)$ is rational
  by viewing the Cayley graph of $G$ (see~\S\ref{BS2:sec:1}) as an
  automaton. Conversely, if $W_{\pi}(G)$ is rational, then
  $\pi^{-1}(\one)$ is a finitely generated normal subgroup of $F_A$,
  either finite index or trivial by the proof of
  Theorem~\ref{BS1:saka}.  It is well known that the \emph{Dyck
    language}\index{Dyck~language}\index{language!Dyck} $D_A =
  \theta^{-1}(\one)$ is not rational if $\Card A > 0$, thus it follows
  easily that $\pi^{-1}(\one)$ has finite index and therefore $G$ must
  be finite.
\end{proof}

How about groups with context-free $W_\pi(G)$? A celebrated result by
\Muller\ and \Schupp~\cite{Muller&Schupp:1983}, with a contribution by
\Dunwoody~\cite{Dunwoody:1985}, relates them to \emph{virtually free
  groups}\index{group!virtually~free}\index{virtually~free~group}:
these are groups with a free subgroup of finite index.

As usual, we focus on the case of $G$ being finitely generated.  We
claim that $G$ has a \emph{normal} free subgroup $F_A$ of finite
index, with $A$ finite. Indeed, letting $F$ be a finite-index free
subgroup of $G$, it suffices to take $F' = \bigcap_{g\in G}\,
gFg^{-1}$. Since $F$ has finite index, so does $F'$, see the proof of
Lemma~\ref{BS1:rsn}. Taking a morphism $\pi:F_B \to G$ with $B$
finite, we get from Corollary~\ref{BS1:rankfis} that $\pi^{-1}(F')
\leq_{f.i.} F_B$ is finitely generated, so $F'$ is itself finitely
generated. Finally, $F'$ is a subgroup of $F$, so $F'$ is still free
by Theorem \ref{BS1:nielsen}, and we can write $F' = F_A$.

We may therefore decompose $G$ as a finite disjoint union of the form
\begin{equation}
\label{BS1:gottingen}
G = F_Ab_0 \cup F_Ab_1 \cup \dots \cup F_Ab_m,\qquad\text{with }b_0=1.
\end{equation}

\begin{theorem}[Muller \& Schupp]
\label{BS1:musc}
The language $W_{\pi}(G)$ is
context-free\index{word~problem!submonoid!context-free}\index{context-free!word~problem~submonoid}
if and only if $G$ is virtually free.\index{virtually~free~group}\index{group!virtually~free}
\end{theorem}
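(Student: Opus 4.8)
The plan is to prove the theorem in two directions, with the harder direction being the claim that context-freeness of $W_\pi(G)$ forces $G$ to be virtually free. I would begin with the easier converse. Suppose $G$ is virtually free, so by the discussion preceding the theorem we have the coset decomposition~\eqref{BS1:gottingen}, $G = F_A b_0 \cup \dots \cup F_A b_m$ with $b_0 = 1$ and $F_A \unlhd_{f.i.} G$. The idea is that membership of a word $w \in \widetilde{B}^*$ in $W_\pi(G)$ can be tracked by a pushdown automaton: the finite-state control records which coset $F_A b_i$ the current prefix maps into (this is finite-state information, governed by the action of $G$ on the finitely many cosets), while the stack records a reduced word in $F_A$ representing the $F_A$-component. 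Since the word problem of a \emph{free} group is recognized by a one-counter-like pushdown mechanism (push on reading a letter, pop on reading its inverse, using Benois-style reduction to keep the stack reduced), combining this stack discipline with the finite coset bookkeeping yields a pushdown automaton accepting exactly those $w$ with $\rho(w) = \one$. Hence $W_\pi(G)$ is context-free.

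For the hard direction I would assume $W_\pi(G)$ is context-free and aim to produce a finite-index free subgroup of $G$. The standard route is \emph{geometric}: context-freeness of the word problem is equivalent, via the theory of pushdown automata and their associated ``transition diagrams,'' to the Cayley graph of $G$ being \emph{context-free as a graph} in the sense of Muller and Schupp — meaning that the Cayley graph has only finitely many isomorphism types of ``cones'' (components obtained by deleting a ball around the identity). I would make precise the notion of $k$-triangulability or finitely many end-cones, and show that context-freeness of $W_\pi(G)$ transfers this finiteness property to the Cayley graph. This reduction from a language-theoretic hypothesis to a structural graph-theoretic one is the conceptual heart of the argument.

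The main obstacle, and the step I expect to be most delicate, is passing from ``the Cayley graph has finitely many cone-types'' to ``$G$ is virtually free.'' This is precisely where the Dunwoody contribution enters: one invokes the accessibility of such graphs to produce a decomposition of $G$ as the fundamental group of a finite graph of groups with finite vertex groups, i.e.\ a finite bounded Bass–Serre tree decomposition. By the structure theory of groups acting on trees (Serre's theory, referenced earlier in the chapter), a finitely generated group that splits as a finite graph of finite groups is virtually free. I would therefore structure this part as: (1) derive the finite cone-type condition from context-freeness; (2) apply Dunwoody's accessibility to obtain the graph-of-groups decomposition with finite vertex groups; (3) conclude virtual freeness via Bass–Serre theory. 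The genuinely hard analytic content is concentrated in step (2); steps (1) and (3) are, respectively, a careful translation and an appeal to standard structure theory. Since a fully self-contained proof of accessibility lies well beyond the scope of this survey, I would present the argument at the level of citing Muller–Schupp~\cite{Muller&Schupp:1983} and Dunwoody~\cite{Dunwoody:1985} for the decisive accessibility input while giving the language-to-geometry and geometry-to-algebra translations in detail.
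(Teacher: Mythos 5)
Your proposal follows essentially the same route as the paper's (sketch of a) proof: for the easy direction the paper likewise relies on the coset decomposition~\eqref{BS1:gottingen}, merely packaging your coset-bookkeeping-plus-stack construction as a rational transduction between $W_\pi(G)$ and the Dyck language $D_A$, and for the converse it likewise reduces to the geometric statement that the Cayley graph of $G$ is tree-like (quasi-isometric to a tree), citing Muller--Schupp and Dunwoody for the decisive accessibility input. Your write-up is somewhat more explicit about the cone-type and graph-of-finite-groups steps, but the approach is the same.
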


\begin{proof}[Sketch of proof]
If $G$ is virtually free, the rewriting system implicit
in~\eqref{BS1:gottingen}
provides a rational transduction between $W_{\pi}(G)$ and $D_A$.


  The converse implication can be proved by arguing geometrical
  properties of the Cayley graph of $G$ such as in
  Chapter~\ref{chapterBS2}; briefly said, one deduces from the
  context-freeness of $W_\pi(G)$ that the Cayley graph\index{Cayley~graph} of $G$ is
  close (more precisely, quasi-isometric\index{quasi-isometry}) to a tree.
\end{proof}

It follows that virtually free groups have decidable word problem.  In
Chapter~\ref{chapterBS2}, we shall discuss the word problem for more
general classes of groups using other techniques.

\Grunschlag\ proved that every rational (respectively recognizable)
subset of a virtually free group $G$ decomposed as in
(\ref{BS1:gottingen}) admits a decomposition as a
finite union $X_0b_0 \cup \dots \cup X_mb_m$, where the $X_i$ are
rational (respectively recognizable) subsets of $F_A$,
see~\cite{Grunschlag:1999}.  Thus basic results such as
Corollary~\ref{BS1:bencon} or Proposition~\ref{BS1:decrec} can be
extended to virtually free groups
(see~\cite{Grunschlag:1999,MR1920336}). Similar generalizations can be
obtained for free abelian
groups\index{group!free~abelian}\index{abelian~group!free} of finite
rank~\cite{MR1920336}.

The fact that the strong properties of Corollary~\ref{BS1:bencon} do
hold for both free groups and free abelian groups suggests
considering the case of graph groups (also known as free partially
abelian groups\index{group!free~partially~abelian} or right angled
Artin groups\index{group!right~angled
Artin}\index{group!Artin}), where we admit partial
commutation between letters.

An \emph{independence graph} is a finite undirected graph $(A,I)$ with
no loops, that is, $I$ is a symmetric anti-reflexive relation on
$A$. The \emph{graph group}\index{group!graph} $G(A,I)$ is the
quotient $F_A/\sim$, where
$\sim$ denotes the congruence generated by the relation
\begin{displaymath}
  \{ (ab,ba)\mid (a,b) \in I \}.
\end{displaymath}
On both extremes, we have $F_A = G(A,\emptyset)$ and the free abelian
group on $A$, which corresponds to the complete graph on $A$. These
turn out to be particular cases of
\emph{transitive forests}. We can say that $(A,I)$ is a transitive forest
if it has no induced subgraph of either of the following forms:
$$\begin{fsa}
\node (1) at (-3,1) {$\bullet$};
\node (2) at (-2,1) {$\bullet$};
\node (3) at (-3,0) {$\bullet$};
\node (4) at (-2,0) {$\bullet$};
\node (5) at (0,0) {$\bullet$};
\node (6) at (1,0) {$\bullet$};
\node (7) at (2,0) {$\bullet$};
\node (8) at (3,0) {$\bullet$};
\node at (-2.5,-0.5) {$C_4$};
\node at (1.5,-0.5) {$P_4$};
\draw[-] (1) edge node {} (2);
\draw[-] (2) edge node {} (4);
\draw[-] (4) edge node {} (3);
\draw[-] (3) edge node {} (1);
\draw[-] (5) edge node {} (6);
\draw[-] (6) edge node {} (7);
\draw[-] (7) edge node {} (8);
\end{fsa}$$
We recall that an induced subgraph of $(A,I)$ is formed by a subset of
vertices $A' \subseteq A$ and all the edges in $I$ connecting vertices
from $A'$.

The following difficult theorem, a group-theoretic version of a result
on trace monoids by \Aalbersberg\ and
\Hoogeboom~\cite{Aalbersberg&Hoogeboom:1989}, was proved
in~\cite{MR2422314}:

\begin{theorem}[\Lohrey\ \& \Steinberg]
\label{BS1:forests}
Let $(A,I)$ be an independence graph. Then $G(A,I)$ has decidable
rational subset membership problem\index{membership~problem!rational~subset} if and only if $(A,I)$ is a
transitive forest.
\end{theorem}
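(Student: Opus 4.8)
The plan is to prove the two implications separately, organizing everything around the structural description of transitive forests. Recall that a finite graph is $\{C_4,P_4\}$-free exactly when it can be built from isolated vertices by iterating two operations: disjoint union, and adjoining a single vertex adjacent to every existing vertex (an \emph{apex}). Under the dictionary between independence graphs and graph groups, disjoint union corresponds to the free product of the associated graph groups, while adjoining a universal vertex corresponds to taking the direct product with $\Z$ (the new generator commutes with all others, hence generates a central $\Z$-factor). Thus $G(A,I)$ for a transitive forest $(A,I)$ is obtainable from copies of $\Z$ by finitely many free products and direct products with $\Z$, and I would first make this recursion precise. This is an elementary induction on $\Card A$: a connected $\{C_4,P_4\}$-free graph is the comparability graph of a single rooted tree, whose root is comparable to every vertex and hence furnishes the universal vertex.

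For the decidability direction I would then run an induction along this recursion, showing at each stage that decidable rational subset membership is inherited. The base case $\Z$ is immediate. Two closure lemmas are required. First, if $G_1$ and $G_2$ have decidable rational subset membership, then so does $G_1 \ast G_2$: a rational subset of a free product can be normalized and membership reduced to finitely many queries in the factors together with a reachability computation in a finite automaton, using that $\Rat$ is effectively closed under the usual operations. Second --- and this is the crux --- if $G$ has decidable rational subset membership then so does $G \times \Z$. Here one analyses a rational subset $R \subseteq G \times \Z$ through the projection onto the $\Z$-coordinate: the fibre structure is governed by a semilinear (Parikh-type) description in the $\Z$-direction, and deciding $(g,n)\in R$ reduces to finitely many instances of membership in $G$ for rational subsets extracted from $R$. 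I would stress that $\Z$ cannot be replaced by an arbitrary group, since $F_2\times F_2$ already fails; the additive arithmetic of $\Z$ is essential.

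For the undecidability direction I would argue that whenever $(A,I)$ contains $C_4$ or $P_4$ as an \emph{induced} subgraph on a vertex subset $A'$, the standard parabolic subgroup $G(A',I')$ is a retract of $G(A,I)$, the projection killing the generators outside $A'$ splitting the inclusion. Consequently a rational subset of $G(A',I')$ remains rational in $G(A,I)$ with membership unchanged, so rational subset membership for $G(A',I')$ reduces to that for $G(A,I)$, and undecidability propagates upward from the forbidden induced subgraphs. It then suffices to treat the two minimal obstructions. The graph group on $C_4$ is $F_2\times F_2$, whose subgroup membership problem is undecidable by the classical construction of Mihailova, so its rational subset membership problem is undecidable a fortiori. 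The graph group on $P_4$ requires a separate, genuinely harder argument: it is the group-theoretic counterpart of the trace-monoid undecidability of Aalbersberg and Hoogeboom~\cite{Aalbersberg&Hoogeboom:1989}, and I would encode an undecidable problem (say a two-counter computation, or a Post-correspondence variant) as a rational subset membership question in $G(P_4)$.

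The main obstacle, in both halves, lies in the two irreducible cases. On the positive side it is the $G\times\Z$ closure lemma, where the interaction between the rational structure over $G$ and the additive $\Z$-coordinate must be controlled effectively. On the negative side it is the $P_4$ case, which, unlike $C_4$, does not reduce to a familiar direct product and demands a bespoke undecidability reduction. I would expect to follow the encoding of~\cite{MR2422314}, adapting the trace-monoid argument of~\cite{Aalbersberg&Hoogeboom:1989} to the group setting.
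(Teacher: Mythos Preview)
The paper does not contain a proof of this theorem. It is stated as a ``difficult theorem'' and attributed to~\cite{MR2422314}, with only the surrounding remarks that the same conditions are equivalent to decidability of the finitely generated submonoid membership problem, and that undecidability can be read off from the Post correspondence problem once a direct product of two free monoids embeds. There is therefore no in-paper argument to compare your proposal against.

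That said, your outline is a faithful sketch of how the Lohrey--Steinberg proof is organized: the recursive description of transitive forests via disjoint union and apex-adjunction, the translation into free products and $\times\Z$, the two closure lemmas on the decidable side, and the reduction on the undecidable side via parabolic retracts to the two minimal obstructions $C_4$ and $P_4$. You correctly flag the $G\times\Z$ closure and the $P_4$ encoding as the substantive steps; both are genuinely nontrivial and are exactly where~\cite{MR2422314} does the real work. If you were to expand this into a self-contained proof you would need to supply those arguments in full, but as a plan it matches the literature.
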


They also proved that these conditions are equivalent to decidability
of the membership problem for finitely generated submonoids. Such a
`bad' $G(A,I)$ gives an example of a finitely presented
group\index{group!finitely~presented} with a decidable generalized
word problem that does not have a decidable membership problem for
finitely generated submonoids.

It follows from Theorem \ref{BS1:forests} that any group containing a
direct product of two free monoids has undecidable rational subset
membership problem, a fact that can be directly deduced from the
undecidability of the \Post\ correspondence
problem\index{problem!Post~correspondence}.

Other positive results on rational subsets have been obtained for
graphs of groups\index{graphs~of~groups}, HNN
extensions\index{HNN~extension}\index{extension!HNN} and amalgamated
free
products\index{amalgamated~free~product}\index{free~product!amalgamated}
by \Kambites, \Silva\ and \Steinberg~\cite{MR2303197}, or \Lohrey\ and
\Senizergues~\cite{MR2394724}. Lohrey and Steinberg proved recently
that the rational subset membership problem is recursively equivalent
to the finitely generated submonoid membership problem for groups with
two or more ends~\cite{Lohrey&Steinberg:2009}.


With respect to closure under complement, \Lohrey\ and
\Senizergues~\cite{MR2394724} proved that the class of groups for
which the rational subsets form a boolean algebra is closed
under HNN extension and amalgamated products over finite groups.

On the negative side, \Bazhenova\ proved that rational subsets of
finitely generated nilpotent groups\index{group!nilpotent} do not form
a boolean algebra, unless the group is virtually
abelian~\cite{MR1803582}\index{group!virtually~abelian}.  Moreover,
\Romankov\ proved in \cite{Roman'kov:1999}, via a reduction from
Hilbert's 10th problem, that the rational subset membership problem is
undecidable for free nilpotent groups of any class $\geq 2$ of
sufficiently large rank.

Last but not least, we should mention that Stallings' construction was
successfully generalized to prove results on both graph
groups\index{Stallings'~construction!graph groups} (by \Kapovich,
\Miasnikov\ and \Weidmann~\cite{Kapovich&Weidmann&Miasnikov:2005}) and
amalgamated free products of finite
groups\index{Stallings'~construction!amalgamated~free~products~etc.}
(by \MarkusEpstein~\cite{Markus-Epstein:2007}).

\subsection{Rational solution sets and rational constraints}

In this final subsection we make a brief incursion in the brave new
world of rational constraints. Rational subsets provide group
theorists with two main assets:
\begin{itemize}
\item
A concept which generalizes finite generation for subgroups and is
much more fit to stand most induction procedures.
\item
A systematic way of looking for solutions of the \emph{right type} in
the context of equations of many sorts.
\end{itemize}
This second feature leads us to the notion of \emph{rational
  constraint}\index{rational~constraint}, when we restrict the set of
potential solutions to some rational subset. And there is a particular
combination of circumstances that can ensure the success of this strategy:
if $\Rat G$ is closed under intersection and we can prove that the
solution set of problem P is an effectively computable rational subset
of $G$, then we can solve problem P with any rational constraint.

An early example is the adaptation by \Margolis\ and \Meakin\ of
Rabin's language and \Rabin's tree theorem to free groups, where
first-order formulae provide rational solution
sets~\cite{MR1073775}. The logic language considered here is meant to
be applied to words, seen as models, and consists basically of unary
predicates that associate letters to positions in each word, as well
as a binary predicate for position ordering. \Margolis\ and \Meakin\
used this construction to solve problems in combinatorial inverse
semigroup\index{semigroup!inverse} theory~\cite{MR1073775}.

\Diekert, \Gutierrez\ and \Hagenah\ proved that the existential
theory\index{existential~theory~of~equations} of systems of equations
with rational constraints\index{equations!with~rational~constraints}
is solvable over a free group~\cite{MR2172984}. Working basically on a
free monoid with involution, and adapting \Plandowski's
approach~\cite{Plandowski:1999} in the process, they extended the
classical result of \Makanin~\cite{Makanin:1983} to include rational
constraints, with much lower complexity as well.

The proof of this deep result is well out of scope here, but its
potential applications are immense.
Group theorists are only starting to discover its full
strength.

The results in \cite{MR2394724} can be used to extend the existential
theory of equations with rational constraints to virtually free
groups\index{group!virtually~free}\index{virtually~free~group}, a
result that follows also from \Dahmani\ and \Guirardel's recent paper
on equations over hyperbolic
groups\index{word-hyperbolic~group}\index{group!word-hyperbolic} with
quasi-convex rational constraints \cite{Dahmani&Guirardel:2010}.
Equations over graph groups\index{group!graph} with a restricted class
of rational constraints were also successfully considered by \Diekert\
and \Lohrey~\cite{Diekert&Lohrey:2008}.



A somewhat exotic example of computation of a rational solution set
arises in the problem of determining which
automorphisms\index{automorphism!orbits} of $F_2$ (if any) carry a
given word into a given finitely generated subgroup. The full solution
set is recognized by a finite automaton; its vertices are themselves
structures named ``finite truncated
automata''\index{automaton!finite~truncated} \cite{Silva&Weil:2008b}.

\bibliographystyle{abbrv}
\addcontentsline{toc}{section}{References}
\begin{footnotesize}
  \bibliography{abbrevs,BS1}

\newcommand{\noopsort}[1]{} \newcommand{\singleletter}[1]{#1}
  \newcommand{\etal}{et al.}
\begin{thebibliography}{10}

\bibitem{Aalbersberg&Hoogeboom:1989}
I.~J. Aalbersberg and H.~J. Hoogeboom.
\newblock Characterizations of the decidability of some problems for regular
  trace languages.
\newblock {\em Math. Systems Theory}, 22:1--19, 1989.

\bibitem{CRAG}
Algebraic Cryptography Center.
\newblock {\em {CRAG -- the Cryptography and Groups Software Library}}, 2010.

\bibitem{MR1803582}
G.~A. Bazhenova.
\newblock On rational sets in finitely generated nilpotent groups.
\newblock {\em Algebra and Logic}, 39(4):215--223, 2000.
\newblock Translated from \emph{Algebra i Logika}, 39:379--394, 2000.

\bibitem{MR903667}
M.~Benois.
\newblock Descendants of regular language in a class of rewriting systems:
  algorithm and complexity of an automata construction.
\newblock In {\em Rewriting techniques and applications}, volume 256 of {\em
  Lecture Notes in Comput. Sci.}, pages 121--132. Springer-Verlag, 1987.

\bibitem{MR549481}
J.~Berstel.
\newblock {\em Transductions and context-free languages}.
\newblock B. G. Teubner, 1979.

\bibitem{Berstel&etal:2010}
J.~Berstel, C.~De~Felice, D.~Perrin, C.~Reutenauer, and G.~Rindone.
\newblock Bifix codes and sturmian words.
\newblock preprint, 2010.
\newblock \url{arXiv.org/pdf/1011.5369v2}.

\bibitem{Bestvina&Handel:1992}
M.~Bestvina and M.~Handel.
\newblock Train tracks and automorphisms of free groups.
\newblock {\em Ann. Math.}, 135:1--51, 1992.

\bibitem{MR1769781}
J.-C. Birget, S.~W. Margolis, J.~C. Meakin, and P.~Weil.
\newblock {PSPACE}-complete problems for subgroups of free groups and inverse
  finite automata.
\newblock {\em Theoret. Comput. Sci.}, 242(1-2):247--281, 2000.

\bibitem{Book&Otto:1993}
R.~V. Book and F.~Otto.
\newblock {\em String-Rewriting Systems}.
\newblock Springer-Verlag, 1993.

\bibitem{Dahmani&Guirardel:2010}
F.~Dahmani and V.~Guirardel.
\newblock Foliations for solving equations in groups: free, virtually free, and
  hyperbolic groups.
\newblock {\em J. Topology}, 3(2):343--404, 2010.

\bibitem{MR2172984}
V.~Diekert, C.~Gutierrez, and C.~Hagenah.
\newblock The existential theory of equations with rational constraints in free
  groups is {PSPACE}-complete.
\newblock {\em Inform. Comput.}, 202(2):105--140, 2005.

\bibitem{Diekert&Lohrey:2008}
V.~Diekert and M.~Lohrey.
\newblock Word equations over graph products.
\newblock {\em Internat. J. Algebra Comput.}, 18(3):493--533, 2008.

\bibitem{Dunwoody:1985}
M.~J. Dunwoody.
\newblock The accessibility of finitely presented groups.
\newblock {\em Invent. Math.}, 81(3):449--457, 1985.

\bibitem{TheGAPGroup:2004}
The GAP~Group.
\newblock {\em {GAP -- Groups, Algorithms, and Programming, Version 4.4.12}},
  2008.

\bibitem{MR695907}
S.~M. Gersten.
\newblock Intersections of finitely generated subgroups of free groups and
  resolutions of graphs.
\newblock {\em Inventiones Math.}, 71(3):567--591, 1983.

\bibitem{MR733696}
S.~M. Gersten.
\newblock On whitehead's algorithm.
\newblock {\em Bull. Amer. Math. Soc.}, 10(2):281--284, 1984.

\bibitem{MR847985}
R.~Z. Goldstein and E.~C. Turner.
\newblock Fixed subgroups of homomorphisms of free groups.
\newblock {\em Bull. Lond. Math. Soc.}, 18(5):468--470, 1986.

\bibitem{Grunschlag:1999}
Z.~Grunschlag.
\newblock {\em Algorithms in geometric group theory}.
\newblock PhD thesis, University of California at Berkeley, 1999.

\bibitem{MR2303197}
M.~Kambites, P.~V. Silva, and B.~Steinberg.
\newblock On the rational subset problem for groups.
\newblock {\em J. Algebra}, 309(2):622--639, 2007.

\bibitem{MR1882114}
I.~Kapovich and A.~Myasnikov.
\newblock Stallings foldings and subgroups of free groups.
\newblock {\em J. Algebra}, 248(2):608--668, 2002.

\bibitem{Kapovich&Weidmann&Miasnikov:2005}
I.~Kapovich, R.~Weidmann, and A.~Miasnikov.
\newblock Foldings, graphs of groups and the membership problem.
\newblock {\em Internat. J. Algebra Comput.}, 15(1):95--128, 2005.

\bibitem{MR2394724}
M.~Lohrey and G.~S{\'e}nizergues.
\newblock Rational subsets in {HNN}-extensions and amalgamated products.
\newblock {\em Internat. J. Algebra Comput.}, 18(1):111--163, 2008.

\bibitem{MR2422314}
M.~Lohrey and B.~Steinberg.
\newblock The submonoid and rational subset membership problems for graph
  groups.
\newblock {\em J. Algebra}, 320(2):728--755, 2008.

\bibitem{Lohrey&Steinberg:2009}
M.~Lohrey and B.~Steinberg.
\newblock Submonoids and rational subsets of groups with infinitely many ends.
\newblock To appear in \emph{J. Algebra}, 2009.

\bibitem{Makanin:1983}
G.~S. Makanin.
\newblock Equations in a free group.
\newblock {\em Math. USSR Izv.}, 21:483--546, 1983.
\newblock Translated from \emph{Izv. Akad. Nauk. SSR, Ser. Math.},
  46:1199--1273, 1983.

\bibitem{MR1214007}
S.~W. Margolis and J.~C. Meakin.
\newblock Free inverse monoids and graph immersions.
\newblock {\em Internat. J. Algebra Comput.}, 3(1):79--99, 1993.

\bibitem{MR1073775}
S.~W. Margolis and J.~C. Meakin.
\newblock Inverse monoids, trees and context-free languages.
\newblock {\em Trans. Amer. Math. Soc.}, 335(1):259--276, 1993.

\bibitem{Margolis&Sapir&Weil:2001}
S.~W. Margolis, M.~V. Sapir, and P.~Weil.
\newblock Closed subgroups in pro-{V} topologies and the extension problem for
  inverse automata.
\newblock {\em Internat. J. Algebra Comput.}, 11(4):405--446, 2001.

\bibitem{Markus-Epstein:2007}
L.~Markus-Epstein.
\newblock Stallings foldings and subgroups of amalgams of finite groups.
\newblock {\em Internat. J. Algebra Comput.}, 17(8):1493--1535, 2007.

\bibitem{Martino&Ventura:2004}
A.~Martino and E.~Ventura.
\newblock Fixed subgroups are compressed in free groups.
\newblock {\em Commun. Algebra}, 32(10):3921--3935, 2004.

\bibitem{Maslakova:2003}
O.~S. Maslakova.
\newblock The fixed point group of a free group automorphism.
\newblock {\em Algebra and Logic}, 42:237--265, 2003.
\newblock Translated from \emph{Algebra i Logika}, 42:422--472, 2003.

\bibitem{MR2395796}
A.~Miasnikov, E.~Ventura, and P.~Weil.
\newblock Algebraic extensions in free groups.
\newblock In {\em Geometric group theory}, Trends Math., pages 225--253.
  Birkh\"auser, 2007.

\bibitem{Muller&Schupp:1983}
D.~E. Muller and P.~E. Schupp.
\newblock Groups, the theory of ends, and context-free languages.
\newblock {\em J. Comput. System Sci.}, 26(3):295--310, 1983.

\bibitem{Neumann:1957}
H.~Neumann.
\newblock On the intersection of finitely generated free groups. addendum.
\newblock {\em Publ. Math. (Debrecen)}, 5:128, 1957.

\bibitem{Plandowski:1999}
W.~Plandowski.
\newblock Satisfiability of word equations with constants is in {PSPACE}.
\newblock In {\em Proc. 40th Ann. Symp. Found. Comput. Sci.}, pages 495--500.
  IEEE Press, 1999.

\bibitem{54.0603.01}
K.~Reidemeister.
\newblock {Fundamentalgruppe und \"Uberlagerungsr\"aume.}
\newblock {\em {J. Nachrichten G\"ottingen}}, pages 69--76, 1928.

\bibitem{Rhodes&Steinberg:2009}
J.~Rhodes and B.~Steinberg.
\newblock {\em The q-theory of finite semigroups}.
\newblock Springer-Verlag, 2009.

\bibitem{Ribes&Zalesskii:1993}
L.~Ribes and P.~A. Zalesskii.
\newblock On the profinite topology on a free group.
\newblock {\em Bull. Lond. Math. Soc.}, 25:37--43, 1993.

\bibitem{Ribes&Zalesskii:1994}
L.~Ribes and P.~A. Zalesskii.
\newblock The pro-{$p$} topology of a free group and algorithmic problems in
  semigroups.
\newblock {\em Internat. J. Algebra Comput.}, 4(3):359--374, 1994.

\bibitem{MR2378055}
A.~Roig, E.~Ventura, and P.~Weil.
\newblock On the complexity of the {W}hitehead minimization problem.
\newblock {\em Internat. J. Algebra Comput.}, 17(8):1611--1634, 2007.

\bibitem{Roman'kov:1999}
V.~Roman'kov.
\newblock On the occurrence problem for rational subsets of a group.
\newblock In V.~Roman'kov, editor, {\em International Conference on
  Combinatorial and Computational Methods in Mathematics}, pages 76--81, 1999.

\bibitem{Sakarovitch:1979}
J.~Sakarovitch.
\newblock {\em Syntaxe des langages de Chomsky, essai sur le d\'eterminisme}.
\newblock PhD thesis, Universit\'e Paris VII, 1979.

\bibitem{Sakarovitch:2003}
J.~Sakarovitch.
\newblock {\em El\'ements de th\'eorie des automates}.
\newblock Vuibert, 2003.

\bibitem{MR1057769}
G.~S{\'e}nizergues.
\newblock Some decision problems about controlled rewriting systems.
\newblock {\em Theoret. Comput. Sci.}, 71(3):281--346, 1990.

\bibitem{MR1393764}
G.~S{\'e}nizergues.
\newblock On the rational subsets of the free group.
\newblock {\em Acta Informatica}, 33(3):281--296, 1996.

\bibitem{MR0476875}
J.-P. Serre.
\newblock {\em Arbres, amalgames, ${\rm SL}\sb{2}$}.
\newblock Soci\'et\'e Math\'ematique de France, 1977.
\newblock Avec un sommaire anglais; r\'edig\'e avec la collaboration de Hyman
  Bass; \emph{Ast\'erisque} 46.

\bibitem{MR1920336}
P.~V. Silva.
\newblock Recognizable subsets of a group: finite extensions and the abelian
  case.
\newblock {\em Bull. European Assoc. Theor. Comput. Sci.}, 77:195--215, 2002.

\bibitem{MR2059028}
P.~V. Silva.
\newblock Free group languages: rational versus recognizable.
\newblock {\em RAIRO Inform. Th\'eor. App.}, 38(1):49--67, 2004.

\bibitem{Silva:2009}
P.~V. Silva.
\newblock Fixed points of endomorphisms over special confluent rewriting
  systems.
\newblock {\em Monatsh. Math.}, 161(4):417--447, 2010.

\bibitem{Silva&Weil:2008}
P.~V. Silva and P.~Weil.
\newblock On an algorithm to decide whether a free group is a free factor of
  another.
\newblock {\em RAIRO Inform. Th\'eor. App.}, 42:395--414, 2008.

\bibitem{Silva&Weil:2008b}
P.~V. Silva and P.~Weil.
\newblock Automorphic orbits in free groups: words versus subgroups.
\newblock {\em Internat. J. Algebra Comput.}, 20(4):561--590, 2010.

\bibitem{Silva&Weil:2008a}
P.~V. Silva and P.~Weil.
\newblock On finite-index extensions of subgroups of free groups.
\newblock {\em J. Group Theory}, 13(3):365--381, 2010.

\bibitem{MR1267733}
C.~C. Sims.
\newblock {\em Computation with finitely presented groups}.
\newblock Cambridge University Press, 1994.

\bibitem{MR695906}
J.~R. Stallings.
\newblock Topology of finite graphs.
\newblock {\em Inventiones Math.}, 71(3):551--565, 1983.

\bibitem{Takahasi:1951}
M.~Takahasi.
\newblock Note on chain conditions in free groups.
\newblock {\em Osaka J. Math.}, 3(2):221--225, 1951.

\bibitem{Touikan:2006}
W.~M. Touikan.
\newblock A fast algorithm for {S}tallings' folding process.
\newblock {\em Internat. J. Algebra Comput.}, 16(6):1031--1046, 2006.

\bibitem{Ventura:1997}
E.~Ventura.
\newblock On fixed subgroups of maximal rank.
\newblock {\em Commun. Algebra}, 25(10):3361--3375, 1997.

\bibitem{Ventura:2002}
E.~Ventura.
\newblock Fixed subgroups of free groups: a survey.
\newblock {\em Contemporary Math.}, 296:231--255, 2002.

\bibitem{Whitehead:1936}
J.~H.~C. Whitehead.
\newblock On equivalent sets of elements in a free group.
\newblock {\em Ann. of Math. (2)}, 37(4):782--800, 1936.

\end{thebibliography}
\end{footnotesize}

\newpage
\begin{abstract}
  This chapter is devoted to the study of rational subsets of groups,
  with particular emphasis on the automata-theoretic approach to
  finitely generated subgroups of free groups. Indeed, Stallings'
  construction, associating a finite inverse automaton with every such
  subgroup, inaugurated a complete rewriting of free group
  algorithmics, with connections to other fields such as topology or
  dynamics.

  Another important vector in the chapter is the fundamental Benois'
  Theorem, characterizing rational subsets of free groups. The theorem
  and its consequences really explain why language theory can be
  successfully applied to the study of free groups. Rational subsets
  of (free) groups can play a major role in proving statements
  (\emph{a priori} unrelated to the notion of rationality) by
  induction. The chapter also includes related results for more
  general classes of groups, such as virtually free groups or graph
  groups.
\end{abstract}

\printindex
\end{document}